\documentclass[11pt]{article}
\usepackage{natbib}
\usepackage{amsmath}
\usepackage{amsthm}
\usepackage{amssymb}
\usepackage{algorithm}
\usepackage{graphicx}
\usepackage{sidecap}
\usepackage{subcaption}
\usepackage{caption}
\usepackage{comment}
\usepackage{float}
\usepackage{tikz}
\usepackage{pgf, pgfplots}
\usepackage{tikz-network}
\pgfplotsset{compat=1.17} 
\usepackage{xcolor}
\usetikzlibrary{arrows}
\usetikzlibrary{calc}
\usetikzlibrary{positioning}
\usetikzlibrary{automata}
\usetikzlibrary{shapes}
\usetikzlibrary{shapes.geometric}
\usetikzlibrary{decorations.markings}
\usepackage{setspace}
\usepackage{fullpage}
\usepackage{pdfpages}
\usepackage{import}
\usepackage{rotating}
\usepackage{hyperref}
\usepackage{cleveref} 
\usepackage{minitoc}
\usepackage{longtable, booktabs}
\usepackage{array}
\usepackage{algpseudocode} 
\usepackage{enumerate}

\newtheorem*{thm*}{Theorem}
\newtheorem{prop}{Proposition}
\newtheorem*{prop*}{Proposition} 
     
\newtheorem{lem}{Lemma}
\newtheorem*{lem*}{Lemma}

\newtheorem{rem}{Remark}
\newtheorem*{rem*}{Remark}
 
\newtheorem*{ex*}{Example} 
\newtheorem{ass}{Assumption}

\definecolor{green}{HTML}{51A351}
\definecolor{blue}{HTML}{2F96b4}
\definecolor{gray}{HTML}{BCBCBC}

\DeclareMathOperator*{\argmax}{arg\max}

\newcommand{\monthyeardate}{\ifcase \month \or January\or February\or March\or April\or May \or June\or July\or August\or September\or October\or November\or December\fi, \number \year}
\title{Mobile but Miserable: Social comparison networks and social mobility\footnote{We are grateful to Guglielmo Secchi for excellent research assistance and to audiences at BiNoMa, the Barcelona Summer Forum, the University of Bath, and the Network Science in Economics conference for helpful comments. Any remaining errors are the sole responsibility of the authors. }}

\author{Patrick Allmis\footnote{University of Cambridge (email: \texttt{pa509@cam.ac.uk})} \and Matthew Elliott\footnote{University of Cambridge (email: \texttt{matthew.l.elliott@gmail.com})} \and Christian Ghiglino\footnote{University of Essex (email: \texttt{cghig@essex.ac.uk})} \and Alastair Langtry\footnote{University of Bristol (email: \texttt{alastair.langtry@bristol.ac.uk})}}

\date{\today} 
\begin{document}
\maketitle
\begin{abstract}
Although social mobility is widely viewed as desirable, we document an `inverse-U' shaped relationship with well-being: mobility is positively associated with well-being when low, but the association diminishes as mobility increases and eventually becomes negative. To explain this, we build a model in which agents are embedded in a social comparison network and suffer from falling behind their peers. The model highlights two channels that can cause welfare to be decreasing in mobility at high levels. First, when mobility is high, further increases widen already significant gaps between agents from similar parental backgrounds but of different abilities. Second, when mobility is high, high-ability agents are already relatively well off. So further increases in social mobility increase inequality. We find evidence supportive of both channels in US data. The model also predicts, and data supports, that social integration -- where relationships form more along ability than background lines—allows societies to benefit from mobility for longer. We interpret this as social mobility being multidimensional with both economic and social dimensions that are complementary. 
\end{abstract}

\newpage
Social mobility allows people to transcend the circumstances of their birth and fulfill their potential. Across the political spectrum, politicians want more of it: \emph{``We need a far more socially mobile country''} (David Cameron, 2013); \emph{``the combined trends of increasing inequality and decreasing mobility pose a fundamental threat to the American Dream, our way of life''} (Barack Obama, 2013); \emph{``[Singapore] must not and will not let up on maintaining social mobility''} (Lee Hsien Loong, 2018).
And public debate typically views social mobility as an unalloyed good. Given this, a casual observer might be forgiven for assuming that greater social mobility consistently comes with greater overall well-being. But they would be wrong.

In the United States, there \emph{is} a positive association between social mobility and well-being in the least socially mobile third of counties (corr $= 0.27$). But this relationship goes into reverse in the most socially mobile third of counties (corr $= -0.08$). Among the most socially mobile counties, greater mobility is associated with \emph{reduced} well-being. \Cref{fig:1} plots the broader relationship for US counties.\footnote{In \Cref{sec:data} we explore this relationship more rigorously. It is robustly statistically significant. It is based on the widely used measure of social mobility from the \emph{Opportunity Atlas} \citep{chetty2018impacts, chetty2026opportunity}, and a county-level measure of well-being from the \emph{Human Flourishing Geographical Index} \citep{iacus2025human}.} 

\begin{figure}[h!]
    \centering
    \includegraphics[width=0.99\linewidth]{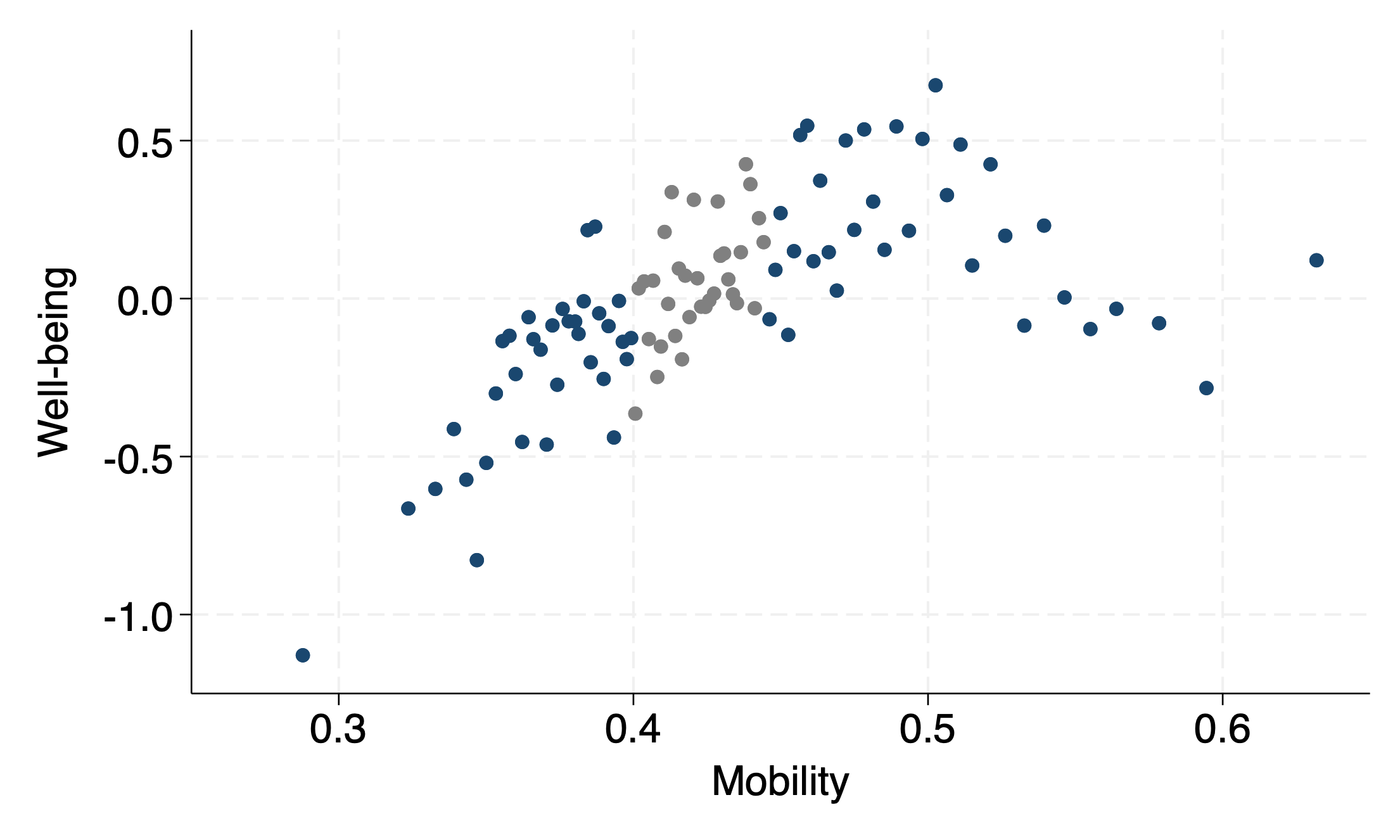}
    \caption{Average well-being and social mobility for US counties, binned scatter plot. Top and bottom terciles in blue. Middle tercile in gray.}
    \label{fig:1}
\end{figure}

To better understand this `inverse-U' shaped relationship between social mobility and well-being, we build a parsimonious theoretical model of networked social comparisons. We use this model to explore the welfare consequences of social mobility and how this interacts with the networked structure of society. Our theory matches this key feature -- well-being is increasing in mobility for low levels of mobility and decreasing in mobility for high levels of mobility (in an absue of terminology, we will henceforth call this an `inverse-U'). It makes additional predictions that we find support for in the data. While an impressive body of work on social mobility has focused on how to measure it \citep{chetty2014land}, what drives it \citep{chetty_nature1,chetty_nature2,annurev}, and on understanding what can be done to increase it \citep{chetty2026creating}, there is very little prior work in economics asking whether more social mobility is actually good for welfare.

An important feature of social mobility is that it applies to whole societies. And changing it has different effects on different people. This matters because there is significant evidence that people compare themselves to others: they evaluate their outcomes in comparison to those around them, not in absolute terms. Changing social mobility therefore affects social comparisons between people, ameliorating some negative comparisons but aggravating others. So the welfare implications of greater social mobility depend on, among other things, the net effect of these changes to the social comparisons.

Within this, exactly who an individual compares herself to -- her neighbors in a network of social comparisons -- will affect which comparisons carry weight. In turn, this influences whether, and how much, she gains or loses from changes in social mobility. At the aggregate level, the structure of the social network will then influence the overall impact of increasing social mobility.

Additionally, it is common for politicians and commentators to focus on how greater social mobility helps people \emph{rise up} to fulfill their potential, and how this can lower inequality. But this neglects the other side of the coin: greater social mobility must also make people with little potential (but who were previously buoyed by the circumstances of their birth) \emph{downwardly} mobile. And it can also increase inequality, as those with high ability -- strongly rewarded in a society with high social mobility -- pull away from those with low ability. This is another force that can create downsides to social mobility -- both on its own and as it interacts with the structure of communities.

Taking these key ideas, we build a model where people choose how much to consume, and when making this choice account for social comparisons with each of their friends in addition to the usual intrinsic costs and benefits. A social network captures who their friends are. A person's `earning potential' -- which captures how easily they can earn -- depends on their family background (the circumstances of their birth) and their ability (or potential). The level of social mobility determines the weight placed on each factor.

Our paper focuses on the welfare implications of social mobility. So we consider a general class of social welfare functions: ones that strictly value individual well-being, and weakly value reductions in inequality in well-being via transfers from better-off agents to worse-off agents (Pigou-Dalton transfers). To keep things simple, we assume there are only two levels of background and two levels of ability, and that people with the same background-ability pair have similar patterns of network connections. This creates clearly delineated ``social classes'', which is useful for aligning with discussions of class in public debate. People only differ in their background and ability (and hence earning potential), and network connections.

Our headline result is that for all social welfare functions in our broad class, there is an `inverse-U' shaped relationship between welfare and social mobility. When social mobility is initially low, increasing it unambiguously increases welfare. But when social mobility is initially high, increasing it unambiguously reduces welfare. This rationalizes the empirical relationship between well-being and social mobility that we document in \Cref{fig:1}. 

The core intuition for this result comes from the way social mobility affects social comparisons. Greater social mobility reduces the gap between groups of the same ability, narrowing those social comparisons. But it widens the gap between groups with the same parental background, aggravating those comparisons. When social mobility is initially low, the within-ability comparisons are large and the within-background comparisons are small. So the benefits of ameliorating the former more than outweigh the costs of aggravating the latter. Additionally, increasing social mobility from a low base brings different groups closer together in terms of their economic opportunities, reducing inequality. And when social mobility is initially high, these effects reverse -- both for the total impact of changing social comparisons and for inequality. So regardless of whether social mobility is high or low, both forces push in the same direction -- leading to our unambiguous welfare result.

Another key insight from our model is that when it comes to welfare effects, there is an important complementarity between the level of social mobility and the level of integration between those from more and less advantaged backgrounds. We find that when social mobility is high, more integration is unambiguously good for welfare. But when social mobility is low, more integration is unambiguously bad for welfare. And going in the other direction, when social mobility is at an intermediate level, whether more of it is good or bad for welfare depends on the level of integration. In sum, for both social mobility and integration, the welfare effects of increasing one of them can depend on the prevailing level of the other.

This has an important implication for policymakers: in isolation, pursuing policies that make people's economic outcomes more dependent on their ability relative to their background may be insufficient. In fact, pursued in isolation, they could be worse than doing nothing at all. To make them beneficial for society, government should also integrate communities, allowing people's friendships -- not just their earning potential -- to transcend the circumstances of their birth. That is, people need to be able to meet and connect with others of similar abilities, and must not remain stuck in the social circles they were born into. 

To be clear, this does not cast doubt on greater social mobility being a worthwhile goal for government and society. Rather, it highlights that a view of social mobility that considers only economic outcomes is too narrow. And such a blinkered view -- that neglects the structure of communities that people live in, where they actually experience those economic outcomes -- is what can create unintended harms.

\paragraph{Education policy in our model.} Education policy reforms -- from widening access to high-quality teaching, to providing better careers advice, work placements or soft skills training, and even early years childcare (through, for example, the US's \emph{Head Start} or the UK's \emph{Sure Start} programs) -- are typically viewed through the lens of the economic aspect of social mobility. They seek to reduce the influence that a child's parental background has on their opportunities to succeed both academically and in the workplace. 

A core and well-documented feature of social networks is homophily -- the tendency for people to form connections disproportionately with people from similar backgrounds \citep{mcpherson2001birds}. Education policies can also affect network structure, and hence homophily, by influencing who children spend time with and who they form connections with. For example, the structure of school catchment areas, and the treatment of private schooling, influences the within-school diversity of parental background. And in a given school, uniform policies and the use of paid-for extra-curricular activities can affect the extent to which students are able to form and maintain connections with those of different parental backgrounds \citep{putnam2016our}.\footnote{In \emph{Our Kids}, \citet{putnam2016our} argues forcefully about the harms of paid-for extra-curricular activities, including the segregation between children of different parental backgrounds it causes. Additionally, increasing homophily in terms of ability would provide similar benefits. However, we contend that it is difficult in practice to identify ability early on -- free from the effects of parental investments. Seemingly natural policies to sort children by `ability' may backfire, as they risk splitting up children along background lines \citep{atkinson2006result}.} 
Many of these friendships persist long into adulthood \citep{yougov2021}, so education policy can meaningfully affect the extent to which people form connections with those from different backgrounds.

Therefore our paper suggests that if education reforms are to improve \emph{welfare}, then they must consider both the economic and the social dimensions together. They must make schools `good' -- able to support students to achieve their potential. And they must integrate schools -- mixing students from different parental backgrounds. It is not enough to make all schools good schools if they remain segregated by parental background. Moreover, our results demonstrate that the more education reform succeeds on the economic dimension of making schools `good', the more important it becomes that it tackles the social dimension too.

\paragraph{A Roadmap.} The rest of the paper is organized in the usual way. \Cref{sec:literature} discusses related literature. \Cref{sec:model} sets out the model, and \Cref{sec:model_microfounded} provides a richer microfoundation. \Cref{sec:behavior} characterizes agents' behavior, and examines how it responds to changes in primitives of the model. \Cref{sec:welfare} provides the main results on the welfare effects of economic mobility and its complementarity with integration. \Cref{sec:data} details the empirical relationship between social mobility and well-being. \Cref{sec:abs_mobility} illustrates our mechanism through a discussion of high-mobility counties. \Cref{sec:discussion} provides concluding discussion, including how economic mobility can be reinterpreted as the extent of meritocracy.

\section{Related Literature}\label{sec:literature} 

This paper's core contribution is to examine the welfare effects of social mobility, and to show how the presence of social comparisons can create an `inverse-U' shaped relationship between mobility and welfare. Further, we distinguish between two dimensions of social mobility: ``economic mobility'' (fluidity across income levels) and ``social integration'' (fluidity across friendships). Our main thesis is that there is an important complementarity between the two. This paper therefore contributes to two distinct strands of literature, one concerned with social mobility and another concerned with social comparisons. 

\paragraph{Social mobility.} Within economics, existing literature focuses on the economic dimension of mobility, and in particular intergenerational mobility across socioeconomic groups. A rich body of work explores factors that shape economic mobility \citep{britishmobility, chetty2020race,chetty_nature1,chetty_nature2}. This work has gained significant traction among politicians and policymakers \citep{SMC2022, SMC2023, SMC2024, SMC2025}. Perhaps closest to us here is the work that examines the link between inequality and social mobility. A general finding from this literature is that higher inequality results in lower economic mobility in the future \citep{andrews2009more, corak2013income, durlauf2018understanding}. We do not contribute directly to this strand of literature. Rather, we examine the welfare effects of social mobility -- something not typically considered by work in economics. In doing so we seek to question whether social mobility is always good for welfare.

Unlike economics, significant work in other social sciences and medicine considers the implications of social mobility for well-being. Psychology has considered the impact of upheaval and strains associated with social mobility \citep{ellis1967social,jetten2008individual,daenekindt2017experience}. Work in sociology broadly finds benefits to upward mobility and costs to downward mobility \citep{chan2018social,gugushvili2019falling, prag2022intragenerational}.
There is significant work in medicine on the `social determinants of health' \citep{marmot1991health}. An important finding from this work is that there are meaningful health benefits from higher socioeconomic position \citep{marmot2013health} and from upward mobility \citep{vable2019possible}.\footnote{\cite{marmot2013health} is the canonical study. But this literature is both broad and deep. \cite{wilkinson2003social, marmot2005status} and \cite{ODPHP_SDOH} provide entry-points, but a review of this literature is beyond the scope of our paper.}

Work in these fields typically stops short of considering the overall impact of increased social mobility on well-being for a whole society, but \cite{fischer2009} is an exception.
They use data from the World Values Survey to study the overall impact on societies of both perceived and actual social mobility. They find that it is hard to disentangle the impact of social mobility on well-being that occurs through two channels: first the impact of better opportunities (which is beneficial) and second the impact of increased inequality (which is detrimental). However, in the case of high actual social mobility, the data indicates that the negative impact of the associated inequality dominates.  We contribute to this body of work by providing a theoretical model consistent with these findings (where upward mobility is good, but downward mobility is bad and where inequality plays a central role) to study welfare in the aggregate.

\paragraph{Social comparisons.} The idea that people compare themselves to others, and that their well-being is partly determined by these social comparisons, dates at least to \cite{veblen1899thetheory} and is now well-established in economics \citep{luttmer2005neighbors,senik2009direct}. Here, we take the view that these ``others'' are people's friends, and use a network to model the pattern of these friendships.\footnote{A large earlier literature assumed people compare themselves to a global average \citep{duesenberry1949income, abel1990asset, campbell1999force, ljungqvist2000tax}, or to their ordinal rank \citep{frank1985demand, hopkins2004running}.} Building on the network games literature -- which allows for richer patterns of social comparisons (see \cite{bramoulle2016games} and \cite{jackson2015games} for reviews of
that literature) -- recent work studies social comparisons from an economics perspective \citep{ghiglino2010keeping,immorlica2017social, ushchev2020social,lopez2021far}.
We differ from this prior work by asking a new question: about the implications of changing social mobility. Our focus is on implications for welfare and on understanding how the economic and social dimensions of social mobility operate on their own terms, and how they interact with one another.

In other social sciences, a central role for social comparisons is often the default position. See, for example, \cite{festinger1954theory, wood1989theory} and \cite{collins1996better} in psychology, and \cite{merton1950contributions, merton1968social} and \cite{runciman1966relative} in sociology.
More recent work has argued that the link between income inequality and adverse population-level outcomes may run in part through social comparison and the status anxiety it generates \citep{wilkinson2009income, layte2012association, layte2014who}. A common thread across these literatures is that comparisons to peers carry more weight than comparisons to a society-wide average, which motivates our use of a network to capture who compares to whom.

\section{Model}\label{sec:model} \paragraph{Agents and endowments.} There are $n$ agents such that $\frac{1}{4} n$ is an integer, with typical agents $i,j \in N$. Each agent is endowed with either high or low \emph{ability} $a_i \in \{H,L\}$, and a rich or poor (parental) \emph{background} $b_i \in \{R, P\}$. Agents with a given $(a,b)$ pair form a \emph{class}. We assume all classes are the same size. Let $q$ denote a typical class, and $Q = \{HR,LR,HP,LP\}$ denote the set of classes. 

We assume that: (i) $\frac{d y_{HP}}{d s} = -\frac{d y_{LR}}{d s} > 0$ and $\frac{d y_{HR}}{d s} = \frac{d y_{LP}}{d s} = 0$, and (ii)
$y(H,b_i,s)>y(L,b_i,s)$ for $s\in (0,1)$, with $\lim_{s\to 0} y(H,b_i,s)-y(L,b_i,s)=0$, and $y(a_i,R,s)>y(a_i,P,s)$ for $s \in(0,1)$, with $\lim_{s\to 1} y(a_i,R,s)-y(a_i,P,s)=0$. Intuitively, part (i) says that increasing economic mobility has equal and opposite effects on the $HP$ and $LR$ classes. The $HP$ class gain from higher mobility. The $LR$ class face offsetting losses. Part (ii) then says that in the case of `perfect mobility' \emph{only} ability matters, and in the case of `perfect immobility' only background matters for earning potential. 

\paragraph{Network.} Agents are embedded in a weighted and directed \emph{network}, represented by an $n \times n$ adjacency matrix $G$. The entry $G_{ij} \geq 0$ denotes the strength of a link from $i$ to $j$. An agent $j$ is $i$'s \emph{friend} if and only if $G_{ij} > 0$. By convention, we assume $G_{ii} = 0$ for all $i$. We assume that all agents have a common \emph{degree} $d>0$: that is, $\sum_{j}G_{ij}=d$ for all $i$.\footnote{This assumption takes a benchmark view that all agents have the same overall strength of social comparisons. Unless explicitly stated otherwise, all sums over $i$ are over all agents $i \in N$. In two abuses of terminology, we will refer to $d$ as `degree' (it is formally an `out-degree'), and to $G$ as `the network'.} 

We assume that agents form a fraction $\omega$ of their links (in a weighted sense) to those in the same class, a fraction $\rho \ (1-\omega)$ of their links to those of the other class who share their ability, and a fraction $(1-\rho) \ (1-\omega)$ of their links to those of the other class who share their background. The parameter $\omega \in (0,1)$ captures own-class connections and $\rho \in (0,1)$ captures the extent of \emph{integration} -- which is the level of homophily along the dimension of ability compared to the level of homophily along the dimension of background.\footnote{We contend that a natural understanding of a non-integrated (i.e., segregated) society is one where connections are determined by parental background. So in our model, higher integration corresponds to parental background playing less of a role in determining who your friends are. Homophily is the tendency to interact predominantly with similar others \citep{mcpherson2001birds,currarini2009economic}. We provide microfoundations for this structure of links in Online Appendix \ref{OA:network_microfoundations}. The headline idea there is that agents meet at random, and then form friendships based on similarity along ability and background lines.}

Formally, assume that $\sum_{j \in q'} G_{ij} = d \ g_{q q'}$ for all $i \in q$ and for all $q' \in Q$, where:

\begin{figure}[!htbp]
\centering
\begin{minipage}[t]{0.38\textwidth}
\hspace{-20mm}
\vspace{0pt} 
\begin{align*}
g_{qq'} =
\begin{cases}
    \omega \quad &\text{ if } q = q' \\  
\\
    \rho \ (1 - \omega) \quad &\text{ if } q \neq q', \, a_q = a_{q'} \\
\\
    (1 - \rho) \ (1 - \omega) \quad &\text{ if } q \neq q', \, b_q = b_{q'} \\
\\
    0 \quad &\text{ otherwise. }
\end{cases} \end{align*}
\end{minipage}
\hfill 
\begin{minipage}[t]{0.48\textwidth}
\centering
\vspace{0mm}
\begin{tikzpicture}[line width=1.5pt, node/.style={circle, draw, minimum size=8mm, inner sep=0pt, line width=1.5pt } ]
\node[node] (1) at (0,0) {$LR$};
\node[node] (2) at (2.5,0) {$LP$};
\node[node] (3) at (2.5,2.5) {$HP$};
\node[node] (4) at (0,2.5) {$HR$};
\draw (1) to[out=205, in=245, looseness=5]
  node[pos=0.5, left, align=center] {$\omega$} (1);

\draw (2) to[out=295, in=335, looseness=5]
  node[pos=0.5, right, align=center] {$\omega$} (2);

\draw (3) to[out=25, in=65, looseness=5]
  node[pos=0.5, right, align=center] {$\omega$} (3);

\draw (4) to[out=115, in=155, looseness=5]
  node[pos=0.5, left, align=center] {$\omega$} (4);

\draw[bend right=20]  (1) to node[midway, below] {$\rho \, (1-\omega)$} (2);
\draw[bend right=20]  (2) to node[midway, right] {\shortstack{$(1-\rho)$ \\ $ (1-\omega)$}} (3);
\draw[bend right=20]  (3) to node[midway, above] {$\rho \, (1-\omega)$} (4);
\draw[bend right=20]  (4) to node[midway, left]  {\shortstack{$(1-\rho)$ \\ $ (1-\omega)$}} (1);
\end{tikzpicture}
\end{minipage}
\caption{Proportion of links between different classes, and illustrative class-level network.}\label{fig:network_structure}
\end{figure}

\paragraph{Actions and Timing.} All agents simultaneously choose an amount of consumption, $x_{i} \in \mathbb{R}^+$.

\paragraph{Preferences.} Agent $i$ derives an \emph{intrinsic benefit} $u(x_i, y_i)$ from consuming, which depends on her own earning potential, $y_i$. She also makes social comparisons with each of her friends, which impose convex costs from consuming less than each friend. So $i$'s preferences are captured by:
\begin{align}\label{eq:preferences}
    U_i = \underbrace{u(x_i , y_i)}_{\text{intrinsic benefit}} - \underbrace{\sum_{j: x_i \leq x_{j}} G_{ij} \ F(x_j - x_i)}_{\text{social comparison costs}},
\end{align}
where $F(\cdot)$ is smooth, strictly increasing and strictly convex, with $F(0) = F'(0) = 0$. The function $F(\cdot)$ captures the costs of social comparisons, and the link weights $G_{ij}$ capture how much weight she places on each comparison. We assume that all agents have preferences that can be represented by the same utility function. 

We take $u(\cdot,\cdot)$ to be twice continuously differentiable, and write $x^a(y_i) := \argmax_{x \geq 0} u(x, y_i)$ for the level of consumption an agent with earning potential $y_i$ would choose in the absence of any social comparisons. Throughout, subscripts on $u$ denote partial derivatives. We make the following assumptions.

\begin{ass}\label{ass:u_fn}
For all $x_i \geq 0$ and all $y_i$:
\emph{(i)} $u_{xx}(x_i,y_i) < 0$, and $x^a(y_i)$ is interior;
\emph{(ii)} $u_y(x_i,y_i) > 0$;
\emph{(iii)} $u_{xy}(x_i,y_i) > 0$ for $x_i > 0$; and
\emph{(iv)} $V(y_i) := \max_{x \geq 0} u(x,y_i)$ is strictly concave.
\end{ass}

Intuitively, this assumes that the marginal benefit of consumption starts off positive, but consistently falls and eventually turns negative (part \emph{(i)}). And also that a higher earning potential makes an agent better off at any level of consumption and complements consumption (parts \emph{(ii)} and \emph{(iii)}). The complementarity assumption captures the idea that higher earning potential reduces the amount an agent must work to buy a given amount of consumption. If there are diminishing returns to leisure, then this reduction in work is more valuable when an agent consumes more -- i.e., when she has less leisure.\footnote{To see this, suppose agents can earn money at a constant wage $y$ and have a fixed endowment of time normalized to $1$, so their leisure time is $\ell = 1 - \tfrac{x}{y}$. Then consider an underlying utility function over consumption and leisure, $u(x,y)=\nu(x,1-\tfrac{x}{y})$. Given this, $u_{xy}(x,y) = \tfrac{x}{y^2} \nu_{x \ell}+ \tfrac{1}{y^2} \nu_\ell - \tfrac{x}{y^3} \nu_{\ell \ell}$. Assuming that utility is increasing and concave in leisure, $u_{xy}>0$ requires that consumption and leisure are not too strong substitutes for one another.}
Part \emph{(iv)} ensures there are diminishing returns to earning potential in terms of the direct benefits. 

\paragraph{Information, Solution Concept.} The network, preferences, and the structure of the game are all common knowledge. We look for Nash equilibria of the game.

\paragraph{A richer model.} This model where agents choose a level of consumption is in fact a reduced form of a richer model where agents make separate decisions for whether or not to consume each of a continuum of indivisible goods. In \Cref{sec:model_microfounded}, we set out this richer model and show how it collapses back to the one presented here.

\subsection{Discussion}\label{sec:model_discussion}
This is a model of social comparisons where the benefit of consumption depends on earning potential (itself determined by agents' ability and background, and economic mobility). While there is only one good in this model -- which attracts social comparisons -- we can view the declining marginal benefit of consuming as capturing the opportunity cost of both leisure and consuming (unmodeled) goods that do not attract social comparisons. We also assume that people only make `upwards' comparisons -- they experience losses from comparisons with friends who consume more, but not gains from comparisons with friends who consume less. This is motivated by significant empirical evidence that upward comparisons weigh much more heavily (e.g., \cite{ferrer2005income, vendrik2007happiness, leites2022effect, bertrand2016trickle}).

The key technical feature of our model is that agents make social comparisons \emph{friend-by-friend}. Lower actions by one friend are not a perfect substitute for higher actions by another. This is important because changing economic mobility will increase earning potential (and therefore, in equilibrium, consumption) for some agents and will reduce it for others. If agents make social comparisons with some hypothetical `average friend' (rather than friend-by-friend), then the countervailing effects on different friends could `wash out', leaving an agent's `average friend' unchanged. This would miss the effects of changing economic mobility.

It is also important to note that economic mobility and integration capture how earning potentials and friendships evolve from generation to generation. So the static model we present here should be viewed as a simple way of capturing this evolution from one generation to the next. There are two main reasons to focus on a single generation-to-generation transition, as we do here, rather than a fuller dynamic setting with many generations. First, a single transition takes decades, and so is relevant for policy in its own right -- even if it misses features of very long run dynamics. Second, available data only provides information on mobility and integration for a single generation-to-generation transition.

\section{Indivisible Goods: A Microfoundation}\label{sec:model_microfounded} Here, we provide a richer model where agents choose whether or not to consume each of a continuum of indivisible goods. We show that it has a unique class-symmetric equilibrium, which is identical to the unique equilibrium of the simpler model we set out in \Cref{sec:model}. Moreover, myopic best response dynamics starting from any initial profile of cut-off strategies (including those that are not class-symmetric) uniformly converge to that same equilibrium. As such, we micro-found the model from \Cref{sec:model} as the reduced form of this richer model. Skipping this section will therefore not affect understanding of any later sections. 

\subsection{Model} 
The agents, endowments, network, timing, and information are unchanged compared to \Cref{sec:model}. The key change is to the actions. Here, there is a continuum of goods, $m \in [0,A]$, for sufficiently large $A< \infty$. Each good $m$ is indivisible, and its cost is $c_m = m^\alpha$, with $\alpha>0$. All agents simultaneously choose whether or not to consume each good: $x_{im} \in \{0,1\}$ for each $m$. Write $X_i := \{ m \in [0,A] : x_{im} = 1 \}$ for the set of goods that $i$ consumes. We require $X_i$ to be measurable, so that the mass, expenditure, and social comparison terms below are all well defined.

Preferences are analogous to \Cref{sec:model}. But for convenience, we rewrite the intrinsic benefit as a difference between a benefit and a cost: $u(x,y) = v x - \kappa(x,y)$, with $v>0$ (this is without loss). Then we assume that the benefits accrue to the mass of goods consumed, while the costs depend on the expenditure on those goods. So the increasing cost of individual goods $m$ applies only to the `costs part' of this intrinsic benefit function. This makes cheaper goods more attractive, and creates a natural ordering of goods from cheaper to more expensive. Formally, we define:
\begin{align*}
    x_i \ := \ \int_0^A x_{im}\,dm
    \qquad \text{and} \qquad
    \hat{X}_i \ := \ \left[ (1+\alpha) \int_0^A m^{\alpha} x_{im}\,dm \right]^{\tfrac{1}{1+\alpha}} ,
\end{align*}
the mass of goods that $i$ consumes and her expenditure aggregate. So the analogy to \Cref{eq:preferences} is now:
\begin{align}\label{eq:preferences_microfounded}
U_i \ = \ \underbrace{v \, x_i \ - \ \kappa\big( \hat{X}_i , y_i \big)}_{\text{intrinsic benefit}}
\ - \ \underbrace{\sum_{j\neq i} G_{ij} \, F\left( \int_0^A \max\{(x_{jm}-x_{im}),0\}\,dm \right)}_{\text{social comparison costs}} ,
\end{align}
where $G_{ij}$ and $F(\cdot)$ are as in \Cref{sec:model}, the constant $v>0$ is the marginal benefit of an additional good, and $\kappa(\cdot,\cdot)$ is the cost of expenditure. The information structure is as in \Cref{sec:model}, and we again look for Nash equilibria of the game.

\subsection{Equilibrium}
As a preview of our first remark in the next section, the game in \Cref{sec:model} has a unique Nash equilibrium. Let $x_i^*$ denote $i$'s level of consumption in that equilibrium. This equilibrium is class-symmetric. We say that a profile is \emph{class-symmetric} whenever agents of the same class make the same choices.\footnote{In the simpler model, this is $x_i = x_{q(i)}$ for all $i \in q$ and for all $q$. In the model presented in this section, this is $X_i = X_{q(i)}$ for all $i \in q$ and for all $q$.} With this notation in place, we can state the main result of this section.

\begin{prop}
    \label{prop:class_symmetric}
    There is a unique class-symmetric Nash equilibrium: $X_i^* = \big[ 0 , x_i^* \big]$ for all $i$.
\end{prop}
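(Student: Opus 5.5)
The plan is to reduce the indivisible-goods game to the game of \Cref{sec:model} in two steps. First, any best response can be taken to be an initial segment $[0,x]$. Second, once attention is restricted to such cut-off strategies, payoffs coincide with \Cref{eq:preferences}, and we invoke uniqueness and class-symmetry of the equilibrium of that game.

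\textbf{Step 1: an initial-segment rearrangement.} Fix any profile of the other agents, and any measurable $X_i$ with mass $x_i$. Replace $X_i$ by $[0,x_i]$. This leaves the benefit $v x_i$ unchanged. Because $m^\alpha$ is strictly increasing, the bathtub principle shows that $\int_0^A m^\alpha x_{im}\,dm$ is weakly reduced, and strictly so unless $X_i$ agrees with $[0,x_i]$ up to a null set. Since $\kappa$ is increasing in expenditure (a higher $\hat X$ is costly), $\kappa$ weakly falls.

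For the comparison terms, suppose every opponent plays an initial segment $X_j=[0,x_j]$. Then $\int \max\{x_{jm}-x_{im},0\}\,dm = x_j - |X_i\cap[0,x_j]|$. This quantity is minimized, given the mass $x_i$, by $X_i=[0,x_i]$, which yields $\max\{x_j-x_i,0\}$. So when opponents play segments, any best response is an initial segment a.e.

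For the candidate profile $X_j^*=[0,x_j^*]$, this restricts deviations to $[0,x]$. Evaluating, $\hat X_i = [(1+\alpha)\int_0^x m^\alpha dm]^{1/(1+\alpha)} = x$. The payoff is therefore exactly $u(x,y_i) - \sum_{j:x\le x_j^*} G_{ij}F(x_j^*-x)$, which is \Cref{eq:preferences}. Since $x_i^*$ is a best response in the simple model, $[0,x_i^*]$ is a best response here, and existence follows.

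\textbf{Step 2: uniqueness among class-symmetric profiles.} This is the main obstacle, because opponents need not play segments a priori. Take a class-symmetric equilibrium with sets $X_q$. The idea is to show directly that each $X_q$ is an initial segment. Order the classes by mass. A member $i$ of class $q$ compares with own-class members holding the same set, which contributes zero, and with the sets of other classes. I would argue by a swapping perturbation. Suppose $X_q$ contains a positive-measure set $B$ of goods above some excluded set $C$ of equal measure, with $C$ below $B$. Swap $B$ for $C$. This strictly reduces expenditure. For the comparison terms, I would try to show the swap does not increase any overlap loss, or else handle it class by class.

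That does not hold in general for arbitrary opponent sets. So a more robust route is to consider the goods-wise marginal condition: good $m$ is in $X_i$ iff $v - \kappa_{\hat X}\,\hat X^{-\alpha} m^\alpha + \sum_j G_{ij}F'(\cdot)\mathbf 1\{m\in X_j\} \ge 0$. Under class symmetry this yields a fixed-point system in the indicators. I would then show by an ordering/induction argument that the cheapest goods are in every set, so every set is an initial segment. Concretely, take the infimum of goods excluded by any class, note that goods just below it are held by all, and propagate upward. Once all sets are segments, payoffs reduce to the simple model, whose unique equilibrium (the remark cited for \Cref{sec:behavior}) pins down $x_q=x_q^*$. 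Measure-zero differences are identified, giving $X_i^*=[0,x_i^*]$.
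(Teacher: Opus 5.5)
\textbf{Gap in Step 2: showing every $X_q$ is an initial segment.} Your Step 1 is sound and matches the paper's \Cref{lem:rearrangement} and \Cref{lem:cutoff_br}. Against cut-off opponents every best response is a cut-off, payoffs collapse to \Cref{eq:preferences}, and so $X_i^*=[0,x_i^*]$ is an equilibrium. Your goods-wise first-order condition is also the right object; it is \Cref{lem:marginal}. The gap is in the step that carries all the weight: showing that in an \emph{arbitrary} class-symmetric equilibrium every $X_q$ is an initial segment. ``Take the infimum of excluded goods and propagate upward'' is not an argument yet, and the natural way of turning it into one breaks down.

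Under class symmetry the marginal payoff is $\psi_q(m)=v+\sum_{p\neq q}\lambda_{qp}\mathbf{1}\{m\in X_p\}-\tau_q m^{\alpha}$, with $\lambda_{qp}=d\,g_{qp}F'(|X_p\setminus X_q|)$. This is not monotone in $m$: it jumps up on goods bought by neighbors who exert pressure on $q$. Two cases arise:
\begin{itemize}
\item If the class that first skips at $m_0$ does so on goods that all its neighbors still buy, it faces maximal pressure there. It is then indeed a cut-off.
\item If two adjacent classes start skipping together, or on interleaved sets, each faces less than maximal pressure on the goods it skips. Each can then re-enter higher up on goods the other buys. The two can sustain each other as long as each buys some goods the other does not.
\end{itemize}
Knowing that everyone holds the goods below $m_0$ says nothing about the second case. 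Your sketch also never uses the network structure, which any correct argument must, because neighbors' pressure is the only thing that can make such a swap worthwhile.

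\textbf{How the paper closes it.} The argument has two stages.
\begin{itemize}
\item \Cref{lem:jump}: if $q$ skips a cheaper good but buys a dearer one, some neighbor exerting pressure on $q$ does the same. In particular, if one club dips below another, they cannot differ by exactly one class on the upper side.
\item The paper then works from the \emph{top} rather than the bottom. In the club consuming the most expensive goods, either some class buys all of $[0,\bar{m}]$, or every member has both neighbors in that club pressuring it (\Cref{lem:peel}). The second alternative is ruled out in \Cref{lem:fulltop} using the fact that the classes form a 4-cycle with unlinked diagonals. Under that alternative the top club must be all of $Q$, and every separating club is a singleton or an adjacent pair. At most three such clubs can coexist, yet the four pressures $q_k\succ q_{k+1}$ each require a distinct one.
\end{itemize}
The full-interval class is then peeled off. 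The graph induced on the remaining classes is a forest, so the second alternative is impossible at every later stage. Iterating gives cut-off sets for all classes, and only then do your reduction to the simple model and \Cref{lem:A0_uniqueness} finish the proof. To repair Step 2 you need both ingredients: the jump property, and an argument exploiting the cycle/forest structure of the class network.
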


When we focus on class-symmetric profiles, this result shows that the `richer' model presented here yields identical behavior to that of the simpler model in \Cref{sec:model}. Another way to see how the `richer' model maps back to the model from \Cref{sec:model} is to consider the myopic best response dynamics. Starting from any profile of cut-off strategies, those dynamics converge to the same equilibrium, $X_i^* = \big[ 0 , x_i^* \big]$ for all $i$. We now state this formally.

\begin{rem}\label{rem:cutoff_convergence}
    Start from any profile of cut-off strategies, $X^0_i = [0 , x^0_i]$, where $x^0_i \in [0,A]$ for all $i$ may differ across agents in the same class. And let all agents best respond myopically.\footnote{Agents need not do so simultaneously. The same conclusion holds if, at each date, an arbitrary non-empty set of agents best responds while the rest stand still, so long as every agent revises infinitely often. This is because the contraction established in the proof is in the supremum norm.} Then every $X^t$ is a profile of cut-off strategies, $X^t_i = [0 , x^t_i]$, and $x_i^t$ converges uniformly to $x^*_i$ for all $i$. 
\end{rem}

This happens because the best response to a cut-off strategy is itself a cut-off strategy, and the best response map is a contraction in the supremum norm.

\section{Behavior}\label{sec:behavior} 
Absent social comparisons, consumption is purely private: there are no strategic interactions and agents simply consume their ideal level, $x_i^* = x^a(y_i)$. Social comparisons then impose costs on agents when they consume less than their friends. This creates pressure to `catch up' with friends -- pushing everyone except the top-earning agents towards higher consumption. As these pressures are only to `catch up', they only `flow down' from higher consumption agents to their lower consumption friends. This feature ensures a unique equilibrium and that the top-earning agents do not see their consumption choices changed by the presence of social comparisons.

\begin{rem}\label{rem:unique_eqm}
    There exists a unique Nash equilibrium, with $x_{HR}^* = x^a(y_{HR})$, and $x_{q}^* \geq x^a(y_{q})$ for all $q \neq HR$, strictly so when $s \in (0,1)$.
\end{rem}

Those who do face pressures to `catch up' consume more when they are more closely connected to their richer friends in the network, and when these friends earn more. This is because both increase the pressures to `catch up' -- which agents try to alleviate with higher consumption. The latter effect is due to convex social comparison costs -- agents face stronger pressure at the margin to catch up when the gap with their friends is larger. This is in addition to the standard effect that higher own earning potential increases consumption directly. 

Here, we focus on the impact of changing the level of economic mobility and the level of integration. These are the key features of interest in our setting. But note that changing each of these affects primitives -- either earning potential or network structure -- for multiple classes simultaneously.

\paragraph{Increasing Mobility.} Higher economic mobility increases earning potential for the $HP$ class, but lowers it for the $LR$ class. So it unambiguously increases consumption for the $HP$ class and reduces consumption for the $LR$ class. These are just direct effects of changes to own earning potential. Changing mobility has no impact on consumption of the $HR$ class -- their own earning potential does not change, and as they face no pressures to catch up, they are unaffected by changes in others' behavior. 

In contrast, the effect on the $LP$ class is ambiguous in general. While their own earning potential does not change, they are affected by the changing consumption of the $HP$ and $LR$ classes due to social comparisons -- and these change in different directions. Which of the two forces wins out depends on the strength of the social pressures the $LP$ class faces from each -- which depends on both how closely connected they are in the network and how far away they are in terms of consumption. 

\begin{rem}\label{rem:indiv_comp_stat_x_s}
An increase in economic mobility, $s$:
    (i) does not change $x_{HR}^*$, 
    (ii) strictly increases $x_{HP}^*$, 
    (iii) strictly decreases $x_{LR}^*$, and 
    (iv) increases $x_{LP}^*$ if and only if 
\begin{align}
        (1-\rho)\, F''(x_{HP}^* - x_{LP}^*) \cdot\frac{dx^*_{HP}}{ds} + \rho\, F''(x_{LR}^* - x_{LP}^*)\cdot\frac{dx^*_{LR}}{ds} > 0
\end{align}
\end{rem}

\paragraph{Increasing Integration.} Greater integration swings everyone's social comparisons towards those with the same ability (but different background) and away from those with the same background (but different ability). This has no effect on the $HR$ class -- as they face no pressures to catch up from any of their social comparisons. It increases consumption for the $HP$ class because it strengthens connections to the $HR$ class, intensifying the pressures to `catch up'. While it also weakens the connections to the $LP$ class, this provides no respite because in equilibrium the $HP$ class consume more than the $LP$ class, and so do not face any pressure to catch up to them. For the $LR$ class, higher integration has a mirror image impact, and so reduces their consumption.  

For the $LP$ class, the impact is ambiguous in general. Greater integration leads to stronger connections to the $LR$ class -- which increases the pressure to catch up -- while also weakening connections to the $HP$ class -- which reduces the pressure to catch up. There are also indirect effects coming from the fact that the consumption choices of the $LR$ and $HP$ groups are also changing as integration increases. 

\begin{rem}\label{rem:indiv_comp_stat_x_rho}
An increase in integration, $\rho$:
    (i) does not change $x_{HR}^*$, 
    (ii) strictly increases $x_{HP}^*$, 
    (iii) strictly decreases $x_{LR}^*$, and 
    (iv) increases $x_{LP}^*$ if and only if 
\begin{align}
        \underbrace{- F'(x^*_{HP} - x^*_{LP}) + F'(x^*_{LR} - x^*_{LP}) \phantom{\int}}_{\text{direct effect}} + \underbrace{(1-\rho)\, F''_{LP,HP}\cdot\frac{dx^*_{HP}}{d\rho} + \rho\, F''_{LP,LR}\cdot\frac{dx^*_{LR}}{d\rho}}_{\text{indirect effect}} > 0
\end{align}
\end{rem}

Having examined how each class's consumption in our model depends on economic mobility and integration, we now turn to our main focus: welfare.

\section{Welfare}\label{sec:welfare} Throughout this section we consider the class of continuously differentiable social welfare functions, $W$, that: \emph{(i)} have $\frac{\partial W}{\partial U_i} > 0$ for all $i$, and \emph{(ii)} exhibit (weak) inequality aversion insofar as they satisfy the Pigou-Dalton Principle (i.e. weakly increasing in a transfer of utility from an agent with higher utility to one with lower utility that does not reverse the ordering of utilities; see \citet[p.~67]{moulin2004fair}). For simplicity, we assume that the partial derivatives are bounded above and bounded away from zero. 

Note that this covers the utilitarian welfare function as a special case where the Pigou-Dalton Principle is only weakly satisfied (i.e. these transfers have no impact on the welfare function), and individual utilities are aggregated linearly. The utilitarian case is of special interest because it is implicitly the welfare function used in our data analysis.

Our main result shows that when economic mobility is initially low, greater mobility unambiguously increases welfare for this broad class of social welfare functions. And the reverse is true when economic mobility is initially high: greater economic mobility unambiguously decreases welfare. In an abuse of terminology, we call this an `inverse-U' relationship.

\begin{prop}\label{prop:convex_comps_result}
For any strictly increasing social welfare function, $W$ that (weakly) satisfies the Pigou-Dalton Principle, social welfare is:

\noindent (i)  strictly increasing in economic mobility if economic mobility is low (i.e. sufficiently close to $0$), and

\noindent (ii) strictly decreasing in economic mobility if economic mobility is high (i.e. sufficiently close to $1$).
\end{prop}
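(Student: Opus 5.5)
The plan is to differentiate social welfare along the equilibrium path, $\frac{dW}{ds} = \sum_i \frac{\partial W}{\partial U_i}\,\frac{dU_i^*}{ds}$, and to sign this expression in the two limits $s\to 0$ and $s\to 1$. Continuity then extends the sign to a neighbourhood of each endpoint.

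\textbf{Step 1 (envelope decomposition).} By Remark~\ref{rem:unique_eqm} the equilibrium is unique and class-symmetric. I would first argue, via the implicit function theorem on the first-order conditions, that $x^*_q$ is continuously differentiable in $s$; Remark~\ref{rem:indiv_comp_stat_x_s} already presumes this. The comparison term $G_{ij}F(x_j-x_i)$ is $C^1$ in $x_i$ even across $x_i=x_j$, because $F(0)=F'(0)=0$. Each agent's own first-order condition therefore holds, and the envelope theorem gives
\begin{align*}
\frac{dU_i^*}{ds} = u_y(x_i^*,y_i)\,\frac{dy_i}{ds} - \sum_{j:\,x_j^*\ge x_i^*} G_{ij}\,F'(x_j^*-x_i^*)\,\frac{dx_j^*}{ds}.
\end{align*}
Write $k:=\frac{dy_{HP}}{ds}=-\frac{dy_{LR}}{ds}>0$.

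\textbf{Step 2 (near $s=0$).} By assumption (ii), as $s\to0$ we have $y_{HR}=y_{LR}=y_R$ and $y_{HP}=y_{LP}=y_P<y_R$. In the limit, $x^*_{LR}=x^*_{HR}=x^a(y_R)$: the $LR$ class's only upward comparison is with $HR$, and that gap is zero. Also $x^*_{HP}=x^*_{LP}>x^a(y_P)$. I would then sign the class-level terms.
\begin{itemize}
\item $HR$: the term is $0$.
\item $LR$: the term is $-k\,u_y(x^a(y_R),y_R)=-kV'(y_R)$, since comparisons only occur at zero gaps.
\item $HP$: the term is $+k\,u_y(x^*_{HP},y_P)$, since $HR$ is unchanged and the gap to $LP$ is zero.
\item $LP$: the term is $-d\rho(1-\omega)F'(x^*_{LR}-x^*_{LP})\frac{dx^*_{LR}}{ds}>0$, by Remark~\ref{rem:indiv_comp_stat_x_s}(iii); the comparison with $HP$ is at a zero gap.
\end{itemize}
Because $x^*_{HP}>x^a(y_P)$ and $u_{xy}>0$, we get $u_y(x^*_{HP},y_P)>V'(y_P)$. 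Strict concavity of $V$ then gives $V'(y_P)>V'(y_R)$. Moreover, $U_{HP}<U_{LR}$, because $HP$ has lower earning potential and bears positive comparison costs. The Pigou--Dalton principle, applied to a marginal transfer from an $LR$ agent to an $HP$ agent, yields $\partial W/\partial U_{HP}\ge\partial W/\partial U_{LR}$. Hence the $HP$ gain outweighs the $LR$ loss, the $LP$ term adds a strictly positive amount, and $\frac{dW}{ds}>0$ in the limit. Since weights are bounded away from zero, all objects are continuous, and class sizes are equal, the sign persists for $s$ near $0$.

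\textbf{Step 3 (near $s=1$).} Here $y_{HR}=y_{HP}=y_H$ and $y_{LR}=y_{LP}=y_L$. Then $x^*_{HP}=x^*_{HR}=x^a(y_H)$, while $x^*_{LR}=x^*_{LP}$, which strictly exceeds $x^a(y_L)$. I would again sign the class-level terms.
\begin{itemize}
\item $HP$: the term is $+kV'(y_H)$.
\item $LR$: the term is $-k\,u_y(x^*_{LR},y_L)<-kV'(y_L)<-kV'(y_H)$.
\item $LP$: the term is $-d(1-\rho)(1-\omega)F'(x^*_{HP}-x^*_{LP})\frac{dx^*_{HP}}{ds}<0$, since the gap is positive and $\frac{dx^*_{HP}}{ds}>0$; the comparison with $LR$ is at a zero gap.
\item $HR$: the term is $0$.
\end{itemize}
Now $LR$ is worse off than $HP$, so Pigou--Dalton gives the $LR$ loss a weight at least as large as the $HP$ gain. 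Therefore $\frac{dW}{ds}<0$ near $s=1$.

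The main obstacle I expect is making Step 1 rigorous at the boundary. The claims are stated for $s\in(0,1)$, but equilibrium ties ($x_{LR}=x_{HR}$, and so on) occur only in the limit. I would handle this by showing that the equilibrium map and $\frac{dU_i^*}{ds}$ extend continuously to $s\in\{0,1\}$, and then signing the limit. A related delicate point is deducing the ordering of marginal welfare weights from the Pigou--Dalton principle together with differentiability, which I would do via a one-sided directional-derivative argument.
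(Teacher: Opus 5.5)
Your proposal is correct and follows essentially the paper's proof. You use the same envelope decomposition of $dW/ds$ into an $HP$-versus-$LR$ earning-potential term, signed via $u_{xy}>0$, strict concavity of $V$, and Pigou--Dalton applied to $U^*_{HP}$ versus $U^*_{LR}$, plus an $LP$ comparison term in which one gap closes at each endpoint. The sign is obtained in the limit and extended by continuity.

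One small point: divide through by $dy_{HP}/ds>0$ before taking limits, as the paper does, because the assumptions do not guarantee that $dy_{HP}/ds$ has a nonzero limit as $s\to 0$ or $s\to 1$. Your continuous extension of $\partial x^*_q/\partial y_q$ then plays the role of the paper's uniform bounds in Lemma~\ref{lem:A5_bounds}.
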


This result matches our headline stylized fact, shown in \Cref{fig:1}, that among US counties there is an `inverse-U' shaped relationship between economic mobility and aggregate well-being. Intuitively, this result arises because of the way higher social mobility affects the various social comparisons, ameliorating some while aggravating others. When mobility is initially low, the comparisons that higher mobility ameliorates are `large', and so there are large welfare gains from shrinking them. In contrast, the comparisons that greater mobility aggravates are `small' -- so there are only small welfare costs from worsening them. Everything works in reverse when mobility is initially high: increasing it ameliorates the `small' comparisons and aggravates the `large' ones -- which reduces welfare overall.

To see how this works, note that in equilibrium four comparisons create pressure to `catch up'. Letting $q \rightsquigarrow q'$ denote that class $q$ makes an unfavorable social comparison with class $q'$, we have: (1) $HP \rightsquigarrow HR$, (2) $LR \rightsquigarrow HR$, (3) $LP \rightsquigarrow HP$, and (4) $LP \rightsquigarrow LR$. Higher economic mobility moves the $HP$ group `up' while moving the $LR$ group `down'. So it ameliorates comparisons (1) and (4) -- by bringing the $HP$ group up closer to the $HR$ group, and the $LR$ group down closer to the $LP$ group. And by identical reasoning it aggravates comparisons (2) and (3).


When economic mobility is initially low, $x_{HP}^*$ is relatively low and $x_{LR}^*$ is relatively high (and regardless of the level of economic mobility, $x_{LP}^*$ is lowest and $x_{HR}^*$ is highest).
In turn, this means the comparisons (1) $HP \rightsquigarrow HR$ and (4) $LP \rightsquigarrow LR$ are `large', in the sense that the gap in consumption is large. So these comparisons are very costly, both in absolute terms and at the margin. In contrast, the comparisons (2) $LR \rightsquigarrow HR$ and (3) $LP \rightsquigarrow HP$ are `small'. So these comparisons are not very costly.
When starting from an initially low level, increasing economic mobility therefore ameliorates the very costly `large' comparisons and aggravates the not so costly `small' comparisons. This pushes overall welfare up.

Additionally, these changes unambiguously reduce the \emph{inequality} of utilities. When economic mobility is initially low, the groups who gain ($HP$ and $LP$) had low utility to begin with. And the group who loses ($LR$) had high utility to begin with. That overall utility and inequality of utility both change in the same direction combine to yield our general welfare result.

Of course, the logic of the previous two paragraphs works in reverse when economic mobility is initially high. The two comparisons that are ameliorated by raising economic mobility are `small' — so there are small welfare gains from this. In contrast, the two comparisons that are aggravated by raising economic mobility are `large' — leading to large welfare losses. Further, as economic mobility is already high to begin with, the winners from further increases to economic mobility are already better off than the losers. So inequality also rises.

The `U-shaped' relationship between inequality of utilities and economic mobility is also shaped by the `U-shaped' relationship between \emph{income inequality} (i.e. inequality of earning potential) and economic mobility. Greater economic mobility means earning potential depends more heavily on ability -- rewarding those with high ability. Past a certain point, those with high ability (but with poor parents) are already out-earning those with rich parents but low ability. So greater mobility then increases income inequality.

\begin{rem}\label{rem:income_inequality}
    Inequality of earning potential is first decreasing and then increasing in economic mobility.
\end{rem}

While this follows fairly directly from our modeling assumptions, we contend that this relationship is an inherent feature of social mobility, not specific to our modeling approach. We discuss this in \Cref{sec:discussion}.

\paragraph{The role of integration.} Increasing integration also affects the pattern of social comparisons. It strengthens the (1) $HP \rightsquigarrow HR$, and (4) $LP \rightsquigarrow LR$ comparisons. And it weakens the (2) $LR \rightsquigarrow HR$, (3) $LP \rightsquigarrow HP$ comparisons. As explained above, when mobility is initially low, the comparisons (1) $HP \rightsquigarrow HR$, and (4) $LP \rightsquigarrow LR$ are `large'. So strengthening them leads to large welfare losses. And at the same time, the comparisons (2) $LR \rightsquigarrow HR$, and (3) $LP \rightsquigarrow HP$ are `small'. So weakening them leads to small welfare gains. Overall, higher integration therefore \emph{reduces} welfare when economic mobility is initially low. By identical logic, the reverse is of course also true when economic mobility is initially high: welfare is increasing in integration.

\begin{prop}\label{prop:welfare_rho}
For any strictly increasing social welfare function, $W$, that (weakly) satisfies the Pigou-Dalton Principle, social welfare is:

\noindent (i) strictly decreasing in integration if economic mobility is low (i.e. sufficiently close to $0$), and

\noindent (ii) strictly increasing in integration if economic mobility is high (i.e. sufficiently close to $1$).
\end{prop}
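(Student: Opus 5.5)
The plan is to compute the derivative of each class's equilibrium utility with respect to $\rho$ using the envelope theorem. I will evaluate these derivatives in the limits $s\to 0$ and $s\to 1$, where the equilibrium has a symmetric structure that makes their signs transparent. Continuity in $s$ then extends the conclusion to a neighborhood of each endpoint. I expect strict monotonicity of $W$ alone to suffice: at each endpoint every class's utility moves weakly in the same direction and at least one moves strictly, so the Pigou--Dalton property is not needed.

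\textbf{Step 1: utilities and the envelope theorem.} For $s\in(0,1)$, Remark~\ref{rem:unique_eqm} gives the equilibrium ordering $x_{HR}^*>x_{HP}^*,x_{LR}^*>x_{LP}^*$, strictly. So the only active comparisons are the four listed in the text, and each class's objective is smooth near equilibrium. Write $c:=d(1-\omega)$. Then $U_{HR}=V(y_{HR})$ does not depend on $\rho$. The remaining utilities are
\begin{align*}
U_{HP}&=u(x_{HP},y_{HP})-c\rho F(x_{HR}-x_{HP}),\\
U_{LR}&=u(x_{LR},y_{LR})-c(1-\rho)F(x_{HR}-x_{LR}),\\
U_{LP}&=u(x_{LP},y_{LP})-c(1-\rho)F(x_{HP}-x_{LP})-c\rho F(x_{LR}-x_{LP}).
\end{align*}
Each class optimizes over its own action, so the envelope theorem applies, and $x_{HR}$ is unaffected by $\rho$. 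This gives $dU_{HP}/d\rho=-cF(x_{HR}-x_{HP})$ and $dU_{LR}/d\rho=cF(x_{HR}-x_{LR})$. For the $LP$ class,
\begin{align*}
\frac{dU_{LP}}{d\rho}=c\big[F(x_{HP}-x_{LP})-F(x_{LR}-x_{LP})\big]-c\Big[(1-\rho)F'(x_{HP}-x_{LP})\frac{dx_{HP}}{d\rho}+\rho F'(x_{LR}-x_{LP})\frac{dx_{LR}}{d\rho}\Big].
\end{align*}
The derivatives $dx_{HP}/d\rho$ and $dx_{LR}/d\rho$ come from the implicit function theorem applied to the first-order conditions, as in Remark~\ref{rem:indiv_comp_stat_x_rho}.

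\textbf{Step 2: limit $s\to 0$.} In this limit $y_{HR}=y_{LR}$ and $y_{HP}=y_{LP}$. The $LR$ class then has the same ideal consumption as $HR$, so $x_{LR}^*\to x_{HR}^*$. The poor classes are then mirror images: each compares with the rich pair (both at $x_{HR}^*$) with weight $\rho$, and with each other with weight $1-\rho$. By uniqueness, $x_{HP}^*\to x_{LP}^*$. Let $D:=x_{HR}^*-x_{HP}^*>0$.

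The $LR$ first-order condition is $u_x+(1-\rho)cF'(x_{HR}-x_{LR})=0$. Differentiating in $\rho$ gives $dx_{LR}/d\rho=cF'(x_{HR}-x_{LR})/\big(u_{xx}-(1-\rho)cF''(x_{HR}-x_{LR})\big)$ in absolute value. The denominator is bounded away from zero (since $u_{xx}<0$ and $F''\ge 0$) and $F'(0)=0$, so $dx_{LR}/d\rho\to 0$. The term multiplying $dx_{HP}/d\rho$ also vanishes, because $F'(0)=0$ and $dx_{HP}/d\rho$ is bounded by the same type of formula.

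Hence, in the limit, $dU_{HR}/d\rho=0$ and $dU_{LR}/d\rho=cF(0)=0$, while $dU_{HP}/d\rho=-cF(D)<0$ and $dU_{LP}/d\rho=-cF(D)<0$. Since the partials of $W$ are bounded away from zero, $dW/d\rho$ is bounded above by a strictly negative number in the limit.

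\textbf{Step 3: limit $s\to 1$.} This case is symmetric. Now $y_{HP}=y_{HR}$ and $y_{LR}=y_{LP}$, so $x_{HP}^*\to x_{HR}^*$ and $x_{LR}^*\to x_{LP}^*$. Let $D':=x_{HR}^*-x_{LR}^*>0$. By the same argument, $dx_{HP}/d\rho\to 0$ and the $F'(0)$ terms vanish. The limiting derivatives are $dU_{HR}/d\rho=0$, $dU_{HP}/d\rho=0$, $dU_{LR}/d\rho=cF(D')>0$ and $dU_{LP}/d\rho=cF(D')>0$. Therefore $dW/d\rho>0$ in the limit.

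\textbf{Step 4 (main obstacle): continuity up to the endpoints.} The delicate point is that the comparison cost $F(\max\{z,0\})$ is only $C^1$ at $z=0$, because $F''$ jumps there. So the implicit function theorem cannot be applied directly at $s=0$ or $s=1$. My first approach is to stay in the open interval $s\in(0,1)$, where all relevant gaps are strictly positive and the first-order conditions are smooth.

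On that interval I will show two things. First, equilibrium consumption is continuous in $s$ up to the endpoints; I would get this from the contraction property behind Remark~\ref{rem:cutoff_convergence}, or from uniqueness plus compactness. Second, $dx_q/d\rho$ stays bounded, since its denominators are bounded away from zero by $u_{xx}<0$. Together these imply that the derivative expressions in Step~1 converge to the limits computed in Steps~2--3. Because the partials of $W$ are bounded, $dW/d\rho$ keeps its limiting strict sign for all $s$ sufficiently close to $0$ or $1$, which proves parts (i) and (ii) of the proposition.
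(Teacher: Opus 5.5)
Your proposal is correct and follows essentially the same route as the paper. Both arguments take envelope-theorem derivatives of each class's utility in $\rho$, kill the indirect terms in $dU^*_{LP}/d\rho$ using $F'(0)=0$ together with bounded $dx^*_{HP}/d\rho$ and $dx^*_{LR}/d\rho$, pass to the limits $s\to 0$ and $s\to 1$, and use welfare weights bounded above and away from zero, so Pigou--Dalton is never invoked. The one point to make explicit, as the paper does, is that these limits hold uniformly in $\rho$: this follows from joint continuity of the equilibrium on $[0,1]^2$ and from your gaps $D(\rho)$, $D'(\rho)$ being bounded away from zero over $\rho\in[0,1]$, and it is what lets a single threshold on $s$ give monotonicity of $W$ over all $\rho\in(0,1)$.
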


The key insight here is that whether more integration is good or bad for welfare depends critically on the prevailing level of economic mobility. In highly immobile societies, where earning potential is largely determined by who your parents are, integrating societies to create more connections between people from different backgrounds harms welfare. This is because these new connections create more harmful social comparisons, and impose strong pressure on poorer groups to over-consume. A failure to address economic immobility could therefore lead to unintended harms of policies aimed at integrating society.

Similarly, the welfare effects of greater mobility can depend on the prevailing level of integration. This is because, when society is well integrated (high $\rho$), most of the social comparisons are between same-ability people. These are precisely the ones that are ameliorated by greater economic mobility. The same-background comparisons, which are aggravated by economic mobility, are not prominent in well-integrated societies.

\begin{prop}\label{prop:welfare_intermediate_mobility_general}
Let $s^y \in (0,1)$ be the unique level of economic mobility at which $y_{HP} = y_{LR}$. For any strictly increasing social welfare function, $W$, that (weakly) satisfies the Pigou-Dalton Principle, there exist $\underline{s}, \bar{s}$ with $\underline{s} < s^y < \bar{s}$ such that:
\begin{enumerate}
    \item[(i)] for all $s < \bar{s}$, social welfare is strictly increasing in economic mobility if integration $\rho$ is sufficiently close to 1, and
    \item[(ii)] for all $s > \underline{s}$, social welfare is strictly decreasing in economic mobility if integration $\rho$ is sufficiently close to 0.
\end{enumerate}
\end{prop}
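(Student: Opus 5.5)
Our plan is to study the limiting cases $\rho \to 1$ and $\rho \to 0$ directly, establish strict monotonicity of welfare in $s$ there, and then extend to nearby $\rho$ by continuity and compactness. First we write the equilibrium from \Cref{rem:unique_eqm} at the class level. $x_{HR}^* = x^a(y_{HR})$, and for each other class $q$ the first-order condition is
\begin{equation*}
u_x(x_q,y_q) = d\sum_{q'} g_{qq'}F'(x_{q'}-x_q)\mathbf{1}\{x_{q'}>x_q\}.
\end{equation*}
Equilibrium utilities are $U_q(s,\rho)$. By the envelope theorem, $dU_q/ds$ consists of the own-income term $u_y \, dy_q/ds$ minus the terms $d\, g_{qq'}F'(x_{q'}-x_q)\,dx_{q'}/ds$ for the friends that $q$ looks up to. We will check that the implicit function theorem makes $x^*$ and $U$ continuously differentiable in $(s,\rho)$ on $[s_0,s_1]\times[0,1]$ for any compact $[s_0,s_1]\subset(0,1)$. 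The indicator kinks are harmless because $F'(0)=0$ and the ordering of consumptions only switches where the gaps vanish, so $dU_q/ds$ is jointly continuous in $(s,\rho)$.

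\emph{Case $\rho=1$.} All cross-class comparisons run along ability. $LR$ compares only to $LP$, who consumes less, so $x_{LR}=x^a(y_{LR})$. $HP$ compares only to $HR$, so $HP\rightsquigarrow HR$. $LP$ compares only to $LR$, and $LP \rightsquigarrow LR$ whenever $s<1$. Differentiating in $s$:
\begin{itemize}
\item $HR$ is unaffected.
\item $HP$ gains both directly and because its gap to $HR$ shrinks.
\item $LR$ loses only directly, $-u_y\,|dy_{LR}/ds|$.
\item $LP$ gains, since $x_{LR}$ falls and it faces no own-income change.
\end{itemize}
So utilitarian welfare rises whenever $HP$'s gain is at least $LR$'s loss. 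We expect this to be the main obstacle. The $HP$ gain contains $u_y(x_{HP},y_{HP})$ evaluated at a distorted consumption, plus a comparison relief, and the two must be compared with $V'(y_{LR})$.

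We will first try to use strict concavity of $V$ (\Cref{ass:u_fn}(iv)). For $s<s^y$ we have $y_{HP}<y_{LR}$, so $V'(y_{HP})>V'(y_{LR})$. The $HP$ relief term $d\omega'F'(\cdot)\,dx_{HR}/ds$ is zero, since $HR$ does not move, so the relief must instead be captured through the envelope bookkeeping. We will handle this by decomposing $U_{HP}$ as $u(x_{HP},y_{HP})$ minus the comparison cost, and using $u_{xy}>0$ together with $x_{HP}>x^a(y_{HP})$ to bound $u_y(x_{HP},y_{HP})\ge u_y(x^a(y_{HP}),y_{HP})=V'(y_{HP})$. This gives a strict utilitarian gain for all $s<s^y$. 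By continuity of the derivative, the gain persists on some $s<\bar s$ with $\bar s>s^y$ (and down to $s\to0$, using the limiting arguments from \Cref{prop:convex_comps_result}).

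For general $W$, we use $\partial W/\partial U_i$ bounded and Pigou--Dalton. The losers are $LR$, who for $s<\bar s$ near $s^y$ are weakly better off than the gainers $HP$ and $LP$. A transfer from $LR$ to these classes is Pigou--Dalton improving, so $W$ rises. Near $s^y$ the ordering may be ambiguous, and there we rely on the strict utilitarian margin together with bounded weights.

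\emph{Case $\rho=0$.} This is the mirror image. All cross-class comparisons run along background: $HP$ and $HR$ face no upward comparison, $LR\rightsquigarrow HR$, and $LP\rightsquigarrow HP$. Now increasing $s$ helps $HP$ only directly, hurts $LR$ both directly and through a widening gap to $HR$, and hurts $LP$ as $x_{HP}$ rises. Using $u_y(x_{LR},y_{LR})\ge V'(y_{LR})>V'(y_{HP})$ for $s>s^y$, the utilitarian derivative is strictly negative for $s>s^y$, hence also on some $s>\underline s$ with $\underline s<s^y$. Pigou--Dalton again helps, because the winner $HP$ is the better-off class once $s>s^y$.

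Finally, joint continuity of $dW/ds$ in $(s,\rho)$, together with strict sign on compact $s$-intervals, gives the conclusion for $\rho$ sufficiently close to $1$ or to $0$. Near the endpoints $s\to0,1$ we invoke \Cref{prop:convex_comps_result}, whose sign conditions hold uniformly in $\rho$.
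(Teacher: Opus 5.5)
Your overall route matches the paper's. You use the envelope decomposition of $dW/ds$ and evaluate at the integration bound; setting $\rho=1$ and appealing to joint continuity is equivalent to the paper's $\lim_{\rho\to1}$, because for fixed $s\in(0,1)$ every consumption gap stays strictly positive there. You then use the bound $\mu_{HP}=u_y(x^*_{HP},y_{HP})>V'(y_{HP})\geq V'(y_{LR})=\mu_{LR}$ for $s\le s^y$, a strictly positive $LP$ term, extension past $s^y$ by continuity, and the mirror argument at $\rho=0$.

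The gap is in the passage from utilitarian to general $W$, which is the real content of the proposition. Pigou--Dalton only does work once you know $U^*_{HP}<U^*_{LR}$. Then $\frac{dW}{dU_{HP}}\geq\frac{dW}{dU_{LR}}$, so $\frac{dW}{dU_{HP}}\mu_{HP}-\frac{dW}{dU_{LR}}\mu_{LR}\geq\frac{dW}{dU_{LR}}(\mu_{HP}-\mu_{LR})>0$. You assert this ordering without proof, and for $s$ near $s^y$ you fall back on ``the strict utilitarian margin together with bounded weights.'' That fallback fails. Weights confined to $[\underline m,\overline m]$ still allow $\frac{dW}{dU_{LR}}$ to exceed $\frac{dW}{dU_{HP}}$ and $\frac{dW}{dU_{LP}}$ by a factor of up to $\overline m/\underline m$. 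So a utilitarian margin of unquantified size says nothing about the sign of $dW/ds$.

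What rescues the argument is that the ambiguity you anticipate never occurs. At $\rho=1$ the $LR$ class faces no upward comparison, so $U^*_{LR}=V(y_{LR})$. The $HP$ class, by contrast, over-consumes and bears the strictly positive cost $\theta F(x^*_{HR}-x^*_{HP})$. Hence $U^*_{HP}<V(y_{HP})\leq V(y_{LR})=U^*_{LR}$ for every $s\le s^y$, \emph{including} $s=s^y$; this is Step 1 of \Cref{lem:A6_endpoint_separation}. With both the utility ordering and $\mu_{HP}>\mu_{LR}$ strict on $(0,s^y]$, the general $dW/ds$ at $\rho=1$ is strictly positive there. Since it is continuous in $s$ ($W$ is $C^1$), it stays positive on $(0,\bar s)$ for some $\bar s>s^y$. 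At $\rho=0$ the mirror ordering is $U^*_{HP}=V(y_{HP})\geq V(y_{LR})>U^*_{LR}$ for $s\ge s^y$.

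Two smaller points. First, you need the strict signs at $s=s^y$ itself, not just for $s<s^y$ as you wrote. They do hold, because $x^*_{HP}>x^a(y_{HP})$ makes $\mu_{HP}>V'(y_{HP})$ strict. Second, the compactness step and the appeal to \Cref{prop:convex_comps_result} near $s\to0,1$ are unnecessary. The statement is pointwise in $s$: fix $s$, then take $\rho$ near the bound, as the paper does. Moreover, that proposition's proof does not deliver sign conditions uniform in $\rho$ in general; near $s=1$ the negative part of its term (B) carries the factor $1-\rho$.
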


This shows that when mobility is initially low or moderate, high integration is sufficient to ensure welfare gains from greater mobility. In contrast, integration is not sufficient for welfare gains when mobility is initially high -- greater mobility will increase inequality as the already strongly rewarded high-ability agents pull away from the low-ability agents. But integration is necessary. In a very poorly integrated society that starts with moderate or high mobility, welfare losses from further increases in mobility are guaranteed in our model.

Together, these results carry an important policy implication: to get the most from either increasing social mobility or integration, they need to go hand in hand.

\section{The Empirical Relationship}\label{sec:data} In this section we first show that the `inverse-U' relationship between well-being and economic mobility that we presented in the introduction is robust. We then show that the data is consistent with two additional predictions of our theory -- on the role integration plays in the relationship between economic mobility and well-being, and the relationship with income inequality.

\subsection{Data}
\paragraph{Economic Mobility.} We use the now-common measure of economic mobility from the \emph{Opportunity Atlas} \citep{chetty2018impacts}: the predicted income rank (percentile) in adulthood for children born in 1978--1983 to parents who were at the 25th percentile of the national income distribution. 

\paragraph{Well-being.} As a proxy for well-being, we use data from the \emph{Human Flourishing Geographical Index} \citep{iacus2025human}, which measures various aspects of well-being by analyzing the sentiments of over 2.6 billion geotagged tweets. We use the happiness metric as a proxy for utility in our model. It is the most commonly found sentiment (whether positive or negative) in the data.\footnote{In the Online Appendix, we also report results using measures of optimism and life satisfaction as alternative proxies. This does not affect the qualitative results.} 
We average across all years -- giving equal weight to each tweet, and using all available data.

\paragraph{Integration.} 
As a proxy for the level of integration, we use the ``childhood economic connectedness'' from \cite{chetty_nature1,chetty_nature2}. For children whose parents have low socio-economic status, this measures the share of their friends whose parents have high socio-economic status. Recall that within our model, more friendships between people with different parental backgrounds is equivalent to more friendships between people with similar ability.




\paragraph{Census Data.} As controls, we use data on median income ($'000$s dollars per year), population density ($'000$s per square kilometer), urban population share, migration (as a fraction of county population, both net and gross), and religiosity from US Census data. 
We also get income inequality data (Gini coefficient) from the same source

\paragraph{Social Capital.} We use measures of cohesiveness and civic engagement from \cite{chetty_nature1, chetty_nature2} as further controls. These are both forms of social capital \citep{coleman1988social,putnam1995tuning}.\footnote{\citet{chetty_nature1,chetty_nature2} also contend that `connectedness' is a form of social capital. We use one measure of connectedness as a proxy for integration, so it is not an appropriate control here. Regardless, our results are not affected by adding in their preferred measure of economic connectedness as an additional control.}

\subsection{The main result: an `inverse-U' relationship}
In the introduction, we presented the unconditional relationship between economic mobility and well-being, finding that the relationship is positive for the bottom third of US counties (by their economic mobility), but \emph{negative} for the top third. Here, we explore this relationship in more detail.

\Cref{tab:1} reports our main regression results. Column (I) provides an analogy to \Cref{fig:1} -- it includes only the mobility term and its square. Column (II) adds the integration control. Column (III) adds state fixed effects and a control for median income -- key potential confounders. In column (IV), we add urbanization, migration, religion, and social capital controls.\footnote{There are of course a variety of possible channels through which income, urbanization, migration, religion, and social capital could affect well-being. We are not directly interested in these mechanisms. Rather, we control for these factors to reduce the risk that the relationship between economic mobility and well-being is spurious. Also note that we do not control for income inequality here. It is potentially a bad control. Our theory predicts that economic mobility influences inequality, and hence changes in inequality may be part of the mechanism we study. Nevertheless, we add income inequality controls in \Cref{OA:empirical_relationship} and show doing so has no impact on the qualitative results.}
Finally, column (V) adds an interaction term between mobility and integration. This is informed by our theoretical model, which finds that the impact of mobility depends on the level of integration.

\begin{table}[ht]\centering
\def\sym#1{\ifmmode^{#1}\else\(^{#1}\)\fi}
\caption{WELL-BEING AND ECONOMIC MOBILITY -- MAIN REGRESSION RESULTS}\label{tab:1}
\begin{tabular}{l*{5}{c}}
\toprule
Dep. var.: well-being
          & (I) & (II) & (III) & (IV) & (V) \\
\hline 
Mobility  &   31.457\sym{***}&   22.802\sym{***}&   24.451\sym{***}&   19.563\sym{***}&   20.111\sym{***}\\
          &  (2.629)         &  (3.311)         &  (3.321)         &  (3.200)         &  (3.202)         \\
Mobility$^{2}$&  -31.693\sym{***}&  -21.799\sym{***}&  -25.840\sym{***}&  -21.432\sym{***}&  -25.675\sym{***}\\
          &  (2.894)         &  (3.707)         &  (3.696)         &  (3.585)         &  (3.907)         \\
Integration&                  &   -0.034         &   -0.522\sym{***}&   -0.683\sym{***}&   -2.299\sym{***}\\
          &                  &  (0.081)         &  (0.128)         &  (0.130)         &  (0.610)         \\
Mobility $\times$ Integration&                  &                  &                  &                  &    3.814\sym{***}\\
          &                  &                  &                  &                  &  (1.405)         \\
Median Income&                  &                  &    0.157\sym{***}&    0.140\sym{***}&    0.131\sym{***}\\
          &                  &                  &  (0.026)         &  (0.028)         &  (0.028)         \\
\hline
State FE & NO & NO & YES & YES & YES \\
Pop. Contr. & NO & NO & NO & YES & YES \\
Religion Contr. & NO & NO & NO & YES & YES \\
Social Capital Contr. & NO & NO & NO & YES & YES \\
R-squared & 0.068 & 0.059 & 0.417 & 0.492 & 0.493\\
Observations & 3136  & 2728 & 2728  & 2728   & 2728 \\
\bottomrule
\end{tabular}
\begin{center}{\parbox[b]{16cm}{\footnotesize \emph{Notes:} Standard errors in parentheses. \sym{*} \(p<0.10\), \sym{**} \(p<0.05\), \sym{***} \(p<0.01\). Pop. controls = population density, population density squared, fraction of county that is urban, net migration, gross migration (total inward plus outward migration). Net and gross migration are calculated as a fraction of the county population. Religion controls = fraction of people reporting that they belong to a religious group (adherence), and a Herfindahl-Hirschman Index of adherence using four groups (``Catholic'', ``Evangelical'', ``Other Christian'', and ``Non-Christian Religious''). Social capital controls = clustering (fraction of a person's friend pairs who are also friends), `support ratio' (proportion of within-ZIP code friendships that have a mutual friend in the same ZIP code), proxy for volunteering rate, and proxy for density of civic organizations. }} 
\end{center}
\end{table}

The key insight is that the `inverse-U' relationship between economic mobility and well-being is robust and highly economically significant. In \Cref{OA:empirical_relationship}, we consider a range of alternative specifications: (a) running separate regressions on segments of the data, (b) trimming the tails of the data, and (c) using alternative measures of well-being. In all cases, the key result -- that well-being is increasing in mobility for low levels of mobility and decreasing in mobility for high levels of mobility -- remains highly significant.

It is important to note that while our theory is for a general welfare function, our empirical analysis is implicitly using a \emph{utilitarian} aggregation. This is because our well-being data captures average well-being, with no consideration given to dispersion. In this case our theory model is silent as to the relationship between well-being and inequality after controlling for the level of economic mobility.

\subsection{Other Predictions} 
\paragraph{Well-being and Integration.} In \Cref{prop:welfare_rho} we show that 
more integration reduces welfare when economic mobility is low, and increases welfare when economic mobility is high. 
This motivates the addition of an interaction term between economic mobility and our proxy for integration in our regression. We include this in our main specifications in column (V) of \Cref{tab:1}.

The positive and statistically significant coefficient on the interaction between mobility and integration is consistent with the prediction: an increase in economic mobility results in a larger increase in utilitarian welfare when integration is higher. In addition, \Cref{prop:welfare_rho} says that when mobility is low, there is a negative relationship between well-being and integration, while when mobility is high, there is a positive relationship between well-being and integration. The empirical findings are also in line with this prediction: there is a negative coefficient on the stand-alone integration term, and a positive and statistically significant coefficient on the interaction between economic mobility and integration. More broadly, this finding is suggestive that the network structure is playing a role in mediating the relationship between economic mobility and well-being -- also congruent with our theory.

\paragraph{Mobility and Inequality.} Recall that one part of the mechanism in our model is that increasing economic mobility has a `U-shaped' relationship with income inequality (\Cref{rem:income_inequality}). This reinforces the non-monotonic effect that economic mobility has on utilitarian welfare. 
We find correlational evidence supporting this in the data. Rising economic mobility is associated with first decreasing and then increasing income inequality. This is shown graphically in \Cref{fig:2}, and \Cref{tab:2} shows the relationship is robust to including a range of controls (the same set that we used in \Cref{tab:1}). 
 
\begin{table}[h!]\centering
\def\sym#1{\ifmmode^{#1}\else\(^{#1}\)\fi}
\caption{INCOME INEQUALITY AND ECONOMIC MOBILITY -- REGRESSION RESULTS}\label{tab:2}
\begin{tabular}{l*{5}{c}}
\toprule
Dep. var.: Gini & (I) & (II) & (III) & (IV) & (V) \\
\hline
Mobility  &   -0.161\sym{***}&   -1.039\sym{***}&   -1.043\sym{***}&   -1.192\sym{***}&   -0.891\sym{***}\\
          &  (0.010)         &  (0.091)         &  (0.104)         &  (0.133)         &  (0.125)         \\
Mobility$^{2}$&                  &    0.972\sym{***}&    0.965\sym{***}&    1.185\sym{***}&    0.895\sym{***}\\
          &                  &  (0.100)         &  (0.114)         &  (0.148)         &  (0.140)         \\
Median Income&                  &                  &                  &   -0.015\sym{***}&   -0.016\sym{***}\\
          &                  &                  &                  &  (0.001)         &  (0.001)         \\
Integration&                  &                  &                  &    0.036\sym{***}&    0.035\sym{***}\\
          &                  &                  &                  &  (0.005)         &  (0.005)         \\
\hline
State FE & NO & NO & YES & YES & YES \\
Pop. Contr. & NO & NO & NO & NO & YES \\
Religion Contr. & NO & NO & NO & NO & YES \\
Social Capital Contr. & NO & NO & NO & NO & YES \\
\hline
R-squared & 0.082 & 0.109 & 0.280 & 0.360 & 0.474   \\
Observations& 3136 & 3136 & 3134  & 2728 & 2728 \\
\bottomrule
\end{tabular}
\begin{center}{\parbox[b]{12cm}{\footnotesize \emph{Notes:} Standard errors in parentheses. \sym{*} \(p<0.10\), \sym{**} \(p<0.05\), \sym{***} \(p<0.01\). Population, Religion, and Social capital controls as in \Cref{tab:1}.}} 
\end{center}
\end{table} 

\begin{figure}[h!]
    \centering
    \includegraphics[width=0.99\linewidth]{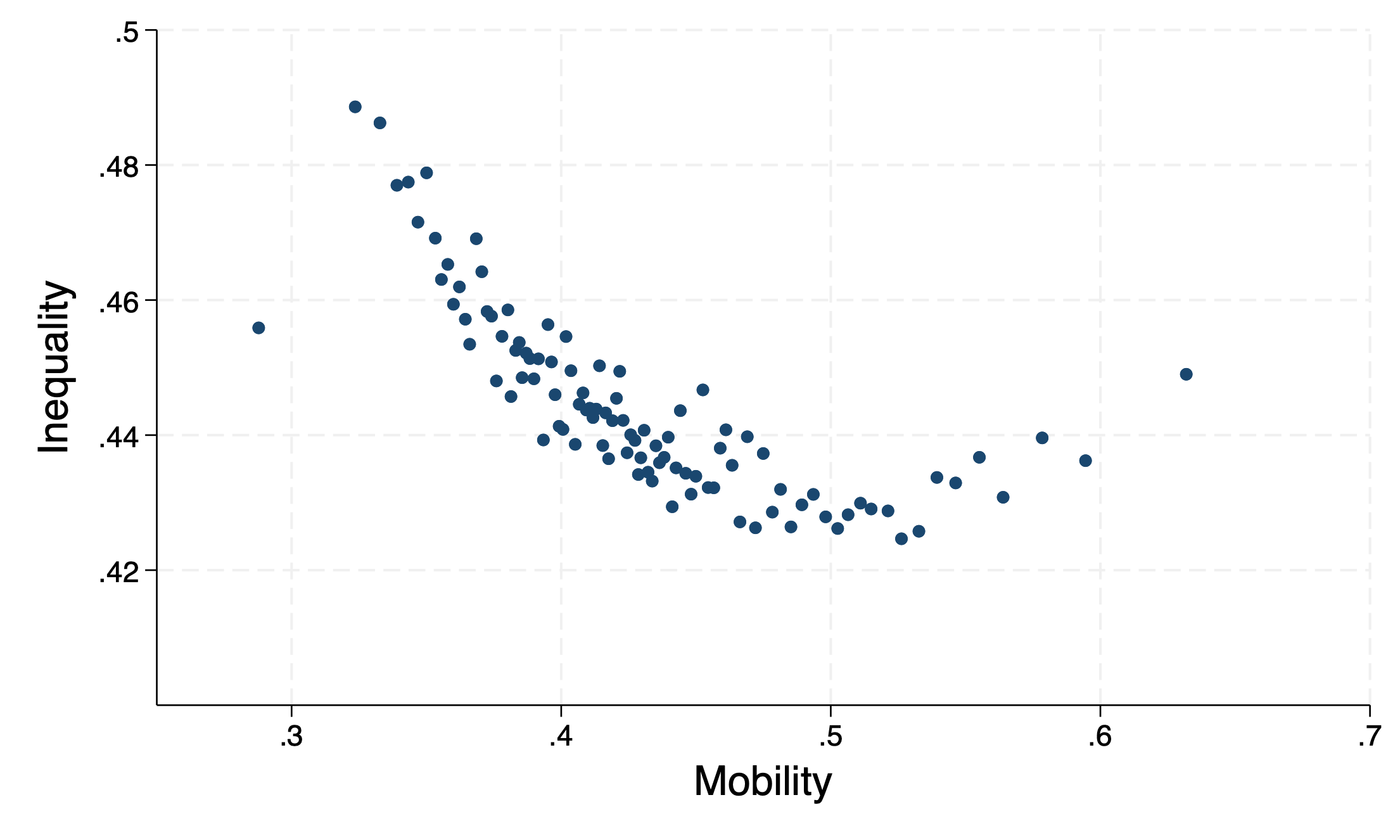}
    \caption{Inequality and mobility for US counties, binned scatter plot. }
    \label{fig:2}
\end{figure}

\section{Understanding High-mobility Counties}\label{sec:abs_mobility}
Following \citet{chetty2014land}, we have used `absolute upward mobility' as our preferred measure of social mobility. This is the predicted rank in the national income distribution of children whose parents were at the 25th percentile of the national income distribution. Notice that in some counties this measure exceeds $0.5$ -- meaning that children from low-income families in the county, on average, end up above the national median in the national income distribution as adults. On the face of it, this may seem puzzling. At a national level, if parental income had no predictive power at all, children of parents at the 25th percentile would be expected to reach the 50th percentile on average.

However, this can happen at the county level due to local growth shocks that move a county relative to the rest of the nation. Like \citet{chetty2014land}, we view these shocks as real sources of upward mobility, and contend that they would have welfare effects operating through our mechanism. As these shocks will have benefited some more than others, they may create even more painful social comparisons for those left behind within the county. 

In \Cref{subsec:bakken_case} we discuss the case of counties near the Bakken formation in North Dakota. Many of these counties have high social mobility (above $0.5$). There, an oil boom in the late 2000s created significant upward mobility, but left behind a proportion of the population. Then for completeness, in \Cref{subsec:rel_mobility} we consider a measure of \emph{relative} mobility (again, from \citet{chetty2014land}) that looks only at how predictive parental income rank is for child income rank within a county. Formally, this is $1$ minus the coefficient from a regression of a child's rank in the income distribution on her parents' rank in the income distribution.\footnote{Of course the relationship between child rank and parental rank need not be linear. But \citet[cf. footnote 10]{chetty2014land} show that empirically this rank-rank relationship is approximately linear, and so that the regression coefficient is a good measure of this rank-rank relationship.} 
We find that using this alternative measure of mobility does not change the main qualitative result -- that high mobility has downsides for well-being.

\subsection{A case study: the Bakken oil boom}\label{subsec:bakken_case}
The Bakken formation straddles western North Dakota and eastern Montana. In the late 2000s, higher oil prices and advances in horizontal drilling and hydraulic fracturing made large-scale extraction economically viable. Oil production in the region then expanded rapidly, transforming the local economy.
The boom created high-wage opportunities in a range of occupations that did not require the educational credentials normally associated with entry into the upper part of the national income distribution.\footnote{Oil production increased from 150,000 barrels per day in 2007 to more than one million in 2014 \citep{kulbeth_coleman_2014_bakken}. A majority of all jobs added between 2007 and 2011 were in resource extraction and related sectors \citep{ferree_smith_2013_bakken}. Average economic mobility is also significantly higher in the Bakken region ($\approx 0.59$) than in the rest of the US ($\approx 0.43$). }

But the rising tide did not lift all boats. The largest gains accrued to workers in oil and oil-adjacent industries, as well as land and property owners. In contrast, renters, those on fixed incomes, public-sector workers, and employees in non-oil sectors often faced substantial costs \citep{fernando2016oil}. Housing shortages and rent increases were central \citep{wascalus_2012_bakken_seniors,kim2020shale, gershenson2024fracking}. So while it pushed absolute mobility up overall, the boom created both winners and losers.\footnote{\cite{fernando2016attitudes} emphasize the importance of addressing the concerns of residents who perceived themselves as adversely affected by shale oil development.} 
Qualitative work on western North Dakota stresses precisely this heterogeneity: different groups within the same community perceived the boom's effects on quality of life very differently \citep{fernando2016oil}. These findings are consistent with the idea that the same process that raised economic opportunity could also reduce average well-being through its negative impacts on those left behind.

A notable feature here is that the division between winners and losers did not map neatly onto pre-existing class lines. Some winners were blue-collar workers in oilfield services or trucking.\footnote{These could be pre-existing residents, or new arrivals. More generally, migration could affect overall well-being in a county, through a number of possible mechanisms. Our regressions therefore control for both net migration and gross migration (sum of arriving and leaving).} 
Others were landowners or mineral-rights owners whose assets became unexpectedly valuable. Some losers were long-term residents, public employees or elderly renters -- whose incomes did not keep pace with boom-driven costs \citep{fernando2016oil, fernando2016socioeconomic}. This matters for the social-comparison mechanism. 
It means the sudden resource boom could split people who remained tightly connected in social networks: neighbors, schoolmates, extended families, churches, and long-standing small-town communities. The same shock that produced high absolute upward mobility could therefore also intensify comparison pressures within communities.\footnote{As in the Bakken region, high earners in the broader United States have also seen their incomes pull away from the rest \citep{reeves2015dangerous}. Unlike the Bakken region, however, there is evidence that the highly educated and affluent have increasingly separated from the rest of society through residential sorting, schooling, family structure, and associational life \citep{bischoff2014residential,owens2016income,lundberg2016family,snellman2015engagement,putnam2016our}. So we would expect the winners and relative losers from this increasing inequality in the broader United States to make weaker social comparisons with each other compared to in the Bakken region.}

Viewed through the lens of our model, this is exactly the configuration in which high absolute mobility would fail to translate into high well-being. The boom raised the economic outcomes of some people from low-income backgrounds, thereby increasing measured absolute upward mobility. But it also widened locally salient gaps, increased inequality, and put pressure on housing. The Bakken case therefore illustrates how counties with absolute upward mobility above $0.5$ can be places where genuine economic opportunity arrived through a disruptive local shock whose overall welfare effects were ambiguous or even negative.\footnote{\Cref{OA:resources} provides details on areas that experienced such economic shocks.}


\subsection{Controlling for relative mobility}\label{subsec:rel_mobility}
We now show that our headline result is robust to using an alternative measure of economic mobility that measures only how people move relative to others within the same county. Specifically, we use $1$ minus the `rank-rank slope' from \citet{chetty2014land} (then standardized to put it on the unit interval). The `rank-rank slope' is the coefficient from a regression of children's income percentile rank in adulthood on their parents' income percentile rank.

Economically, it captures the degree of intergenerational persistence in income ranks. And it allows for overall shifts in the level of the income rank within a county -- this is absorbed by the regression intercept. This strips out county-level shocks -- such as the oil boom in the Bakken region that we discussed above -- that push a county up or down relative to the rest of the nation. A larger regression coefficient indicates greater intergenerational persistence, and hence lower relative mobility.\footnote{\citet{chetty2014land} show that the relationship between parent rank and child rank within a county is approximately linear (although with some non-linearities at the tails), and so argue that the coefficient on the regression -- the `rank-rank slope' -- is a good measure of relative mobility overall.}

\begin{table}[htbp]\centering
\def\sym#1{\ifmmode^{#1}\else\(^{#1}\)\fi}
\caption{WELL-BEING AND ECONOMIC MOBILITY -- AN ALTERNATIVE MEASURE OF MOBILITY}\label{tab:3}
\begin{tabular}{l*{6}{c}}
\toprule
Dep. var.: well-being
          & (I) & (II) & (III) & (IV) & (V) & (VI) \\
\hline 
Relative Mobility &    4.545\sym{***}&    4.751\sym{***}&    2.728\sym{***}&    2.034\sym{**} &    1.859\sym{**} &    1.249         \\
          &  (0.807)         &  (0.920)         &  (0.814)         &  (0.791)         &  (0.803)         &  (0.851)         \\
Relative Mobility$^{2}$&   -1.730\sym{**} &   -1.874\sym{**} &   -1.506\sym{**} &   -1.298\sym{*}  &   -1.648\sym{**} &   -0.535         \\
          &  (0.711)         &  (0.797)         &  (0.703)         &  (0.686)         &  (0.739)         &  (0.759)         \\
Integration&                  &   -0.147\sym{*}  &   -0.436\sym{***}&   -0.639\sym{***}&   -1.049\sym{***}&   -2.452\sym{***}\\
          &                  &  (0.075)         &  (0.127)         &  (0.129)         &  (0.349)         &  (0.616)         \\
Relative Mobility $\times$ Integration&                  &                  &                  &                  &    0.724         &   -0.118         \\
          &                  &                  &                  &                  &  (0.572)         &  (0.719)         \\
Median Income&                  &                  &    0.141\sym{***}&    0.137\sym{***}&    0.133\sym{***}&    0.127\sym{***}\\
          &                  &                  &  (0.026)         &  (0.027)         &  (0.027)         &  (0.028)         \\
(Absolute) Mobility &                  &                  &                  &                  &                  &   18.346\sym{***}\\
          &                  &                  &                  &                  &                  &  (3.448)         \\
(Absolute) Mobility$^{2}$&                  &                  &                  &                  &                  &  -25.026\sym{***}\\
          &                  &                  &                  &                  &                  &  (4.202)         \\
(Absolute) Mobility $\times$ Integration&                  &                  &                  &                  &                  &    4.433\sym{**} \\
          &                  &                  &                  &                  &                  &  (1.784)         \\
\hline
State FE & NO & NO & YES & YES & YES & YES \\
Pop. Contr. & NO & NO & NO & YES & YES & YES  \\
Religion Contr. & NO & NO & NO & YES & YES & YES \\
Social Capital Contr. & NO & NO & NO & YES & YES & YES \\
\hline
R-squared & 0.133 & 0.124 & 0.427 & 0.501 & 0.502 & 0.508  \\
Observations& 2764  & 2649 & 2649 & 2649  & 2649  & 2649  \\
\bottomrule
\end{tabular}
\begin{center}{\parbox[b]{16cm}{\footnotesize \emph{Notes:} Standard errors in parentheses. \sym{*} \(p<0.10\), \sym{**} \(p<0.05\), \sym{***} \(p<0.01\). Population, Religion, and Social capital controls as in \Cref{tab:1}.}} 
\end{center}
\end{table}

Columns (I)--(V) of \Cref{tab:3} report the same regressions as the respective columns in \Cref{tab:1}, but using `relative mobility' instead of our preferred measure. Column (VI) then includes both our preferred measure and the alternative measure (`relative mobility') at the same time. The coefficient on relative mobility (and on its squared term) is no longer statistically significant in column (VI), but keeps the same sign. Additionally, the interaction term between relative mobility and integration flips sign between (V)and (VI), while the absolute mobility interaction in (VI) is significant. Other results are unchanged. So when including both, our preferred measure appears to have a more robust relationship with well-being.

The key point is that our headline finding -- that beyond a certain point there can be welfare costs to more economic mobility -- persists if we use this alternative measure of mobility. This is despite the fact that it strips out the kinds of local shocks that affected the Bakken region and which we contend have real welfare implications that should be accounted for.

\section{Concluding Discussion}\label{sec:discussion}   We finish by briefly discussing an alternative interpretation of our theory, and then make a couple of broader comments on the contribution of our paper to public debate.

\subsection{Mobility \emph{as} Meritocracy: an alternative interpretation}\label{sec:meritocracy}
Our key parameter of interest -- economic mobility -- measures the extent to which a person's economic outcomes are determined by their own ability relative to their parental background. Economic mobility therefore captures the extent to which a society rewards talent. So our notion of economic mobility can be interpreted as a view of \emph{meritocracy} that aligns ``merit'' with talent \citep{loury1981intergenerational}. This is the view of meritocracy set out in Michael Sandel's influential \citeyear{sandel2020tyranny} book \emph{The Tyranny of Merit}, as well as \cite{markovits2019meritocracy} and \cite{young1958rise} (who is credited with coining the term). Adopting this view of meritocracy, we could interpret changes in our key parameter, $s$, as changes in the extent to which a society is meritocratic. All of our results would clearly remain exactly the same under this interpretation.

Both \cite{sandel2020tyranny} and \cite{markovits2019meritocracy} argue that meritocracy can be corrosive to society, in part by creating (and justifying) large inequalities between groups. Additionally, in this view, meritocracy can provoke humiliation and resentment in those who end up at the bottom of society by pressing a moral argument that those people \emph{deserve} to be there. In doing so it compounds their lack of economic success by reducing their social prestige or esteem. Further, it can provoke hubris in those who end up at the top of society -- because it claims they too \emph{deserve} their position. 

We contend that this idea of provoking humiliation and hubris maps closely onto increases in the overall strength of social comparison, $d$, in our model. Intuitively, more humiliation from being at the bottom of the economic pile is one way of making unfavorable comparisons (with those who are richer/more successful) more harmful. Greater hubris from being at the top of the economic pile also seems likely to compound this.  
And note that increasing $d$ pushes down welfare.\footnote{It is clear that in our model increasing the strength of social comparison, $d$, increases utility losses from unfavorable comparisons, but does not provide any offsetting benefits. So increasing $d$ will, all else equal, reduce utilitarian welfare.}

So suppose that an increase in the extent of meritocracy were to increase both economic mobility, $s$, and the strength of social comparisons, $d$, simultaneously. Then welfare will be decreasing in the extent of meritocracy whenever mobility is initially high -- exactly as in \Cref{prop:convex_comps_result}. However, we cannot guarantee that welfare is increasing in the extent of meritocracy in this setup. When mobility is initially low, the negative welfare effects of increasing $d$ could outweigh the gains from raising $s$. 

\paragraph{Ability vs. Background.} So far, we have worked on the basis that more weight on ability corresponds to greater social mobility/meritocracy. In theory, this is straightforward. But it abstracts from the thorny question of how to separate ability from parental background. In reality, many aspects of what are commonly considered ability -- including intelligence, work ethic, and some aspects of health -- are at least partly determined by one's parents \citep{benabou2000merit}. Whether these are inherited (nature) or due to the use of parental resources in childhood (nurture) is irrelevant for our purposes. So it is best to view the ability parameter in our model as the part of an agent's ability that is not attributable to their parents. 

\subsection{Policy Implications}
Whether interpreted as mobility or as meritocracy, our model shows how putting more weight on people's ability can generate negative welfare effects. Those with low ability lose out -- and are pushed into ever more costly social comparisons. When economic mobility is high enough, these losses outweigh the gains to the winners. But our results also show a critical interaction with the level of integration. In well-integrated societies, these costs to the losers are mitigated, so there are greater benefits to high mobility.

Policymakers who want society to benefit from greater mobility should also integrate society, allowing people's relationships to transcend the circumstances of their birth and to become organized according to ability rather than parental background. For example, organizing education by a merit-based grammar school system, rather than a private school system, could help a society that is already fairly mobile and wants to ensure that it continues to experience gains from high mobility. This is because it aims to sort children more along the lines of ability than along the lines of parental background, helping them to form friendships in this way.

\subsection{Mobility and Inequality}\label{sec:mobility_inequality}
The other mechanism that drives our general welfare results is the `U-shaped' relationship between economic mobility and earning potential (income) inequality (see \Cref{rem:income_inequality}). This relationship does not depend on social comparisons in our model. It simply recognizes that past a certain point, those who gain from greater economic mobility (high-ability agents with poor parents) are already out-earning those who lose out from greater mobility (low-ability agents with rich parents). So past this point, greater mobility exacerbates the differences in earning potential between groups -- increasing inequality. 
This creates an additional drag on the benefits of increasing mobility when welfare is inequality averse. 

This is in many ways obvious in theory, but is largely ignored in public debate. Instead, the focus is largely on the downwards-sloping part of the relationship: that higher inequality goes together with lower economic mobility. This is often called the ``Great Gatsby Curve'' \citep{corak2013income}. One possible reason for the focus on this part of the relationship is that it is currently the more pervasive one empirically. 

While \Cref{fig:2} documents a `U-shaped' relationship between income inequality and social mobility in the United States, the turning point is around the 85th percentile of the data. So the majority of US counties sit in the downwards-sloping region. For them, more mobility is associated with less inequality. And ignoring non-linearities, the overall relationship is a downwards-sloping one -- see column (I) in \Cref{tab:2}. 
So our finding complements existing work on the Great Gatsby Curve. We find a non-linearity in the relationship, and that for high levels of economic mobility, the relationship goes into reverse -- more mobility is associated with \emph{more} inequality. 

Moreover, whenever social welfare is strictly inequality-averse, this `U-shaped' relationship between social mobility and inequality will push further towards an `inverse-U' shaped relationship between social mobility and social welfare. But note that this feature does \emph{not} feed into our empirical finding. We, and others, implicitly use a utilitarian aggregation in our empirical estimation, and so do not permit an inequality-averse social planner. Including inequality-aversion at the aggregate level would only further reinforce the `inverse-U' shaped relationship we observe, and strengthen both of our results and the main economic messages from the paper.

\subsection{Mobility and Efficiency}\label{sec:mobility_efficiency}
One mechanism we have abstracted away from -- and a common argument in favor of putting more weight on people's ability -- is that higher mobility can improve overall efficiency.
Allocating jobs to people based on ability leads to better matching between people and jobs. This can have spillover benefits for everyone in society, not just the high ability people who `move up' as a result of higher mobility. 

Efficiency gains would therefore push towards higher welfare, no matter the starting level of economic mobility. If the efficiency gains are weak enough, they would not alter our welfare results in Propositions \ref{prop:convex_comps_result} and \ref{prop:welfare_rho}. Of course, sufficiently strong efficiency gains would then leave an everywhere increasing relationship between economic mobility and welfare. As we cannot shut down an efficiency gains channel in our data analysis, whether the `inverse-U' shaped relationship predicted by our theory is then preserved in practice becomes an empirical question. The evidence we presented suggests that it \emph{is} preserved, and this in part motivates our abstraction from efficiency benefits in the theory.

\subsection{An Optimal Response by Local Government}
In our analysis, we noted that integration and economic mobility are complements in terms of their welfare effects. The marginal gain from increasing one increases in the prevailing level of the other. Given this, if local governments can impact the level of integration, we might expect them to do more to improve integration if they are endowed with a more economically mobile community. Similarly, if they can impact economic mobility, we might expect them to do more to improve mobility if they are endowed with a more integrated community.

Regardless of the direction of causality, such an endogenous response by local governments could create a positive relationship between economic mobility and integration. It turns out that we observe exactly this relationship in our data. More economically mobile counties are, on average, more integrated (corr$ = 0.41 , p<0.001$). \Cref{fig:3} plots the overall relationship.

\begin{figure}
    \centering
    \includegraphics[width=0.8\textwidth]{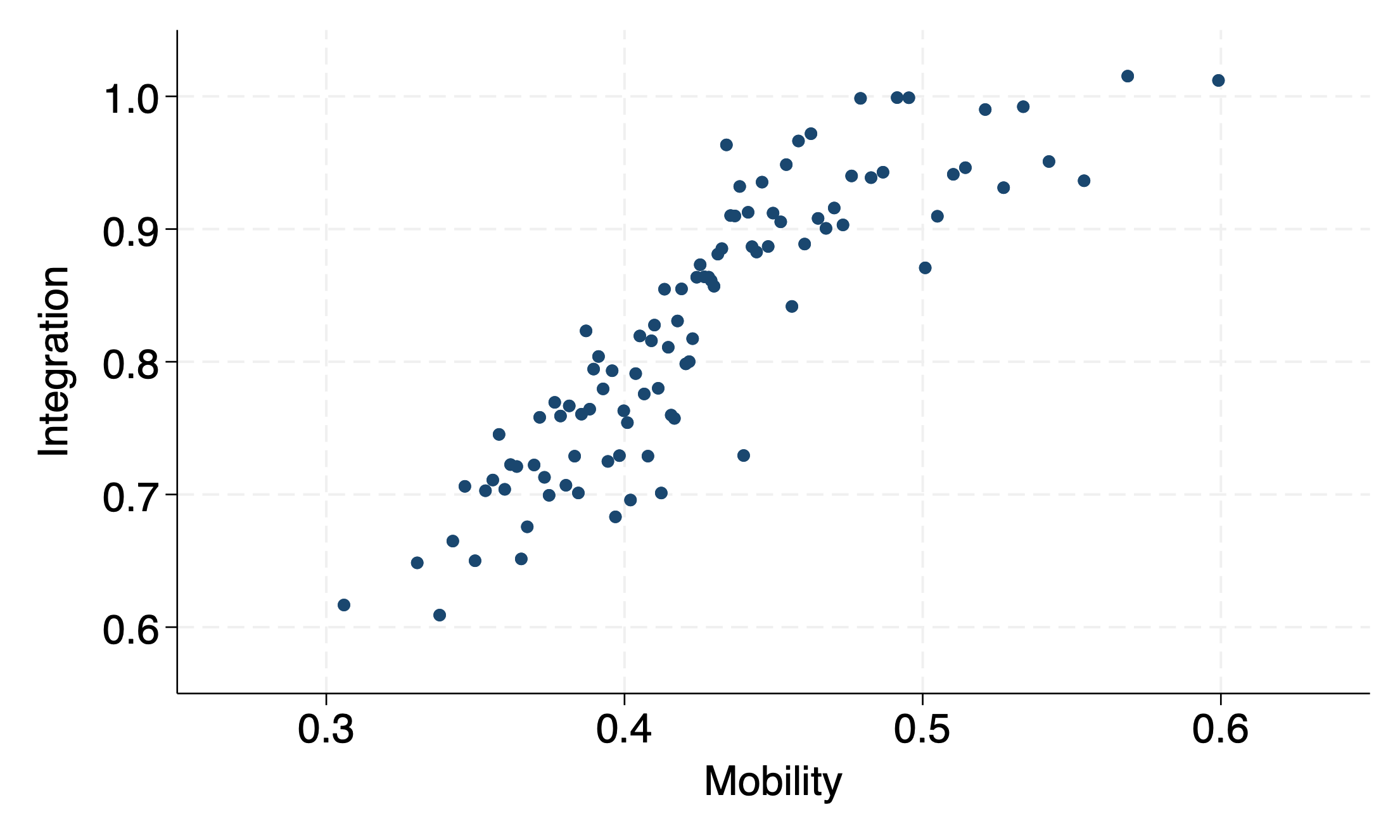}
    \caption{Integration, measured by ``childhood economic connectedness'' \citep{chetty_nature1,chetty_nature2}, and social mobility for US counties, binned scatter plot.}
    \label{fig:3}
\end{figure}

\subsection{The Range of Social Mobility} 
Our theory finds that for `low' levels of economic mobility, more mobility is good for welfare, and for `high' levels of economic mobility, more mobility is bad for welfare. But our theory is silent as to what will count as `low' and `high' in practice. So it is entirely possible that the `high' level of economic mobility falls outside of the empirically observed range. In that case, there could be a monotone relationship between well-being and mobility in the data. While this would not be inconsistent with our theory, it would leave it of more limited empirical relevance.

But this is not the case. The threshold \emph{does} lie somewhere within the range of social mobility levels that are observed in the United States. And so we are able to identify the `inverse-U' shaped relationship. Again, while stressing the back-of-the-envelope nature of our empirical exercise, it is instructive to note that the turning point is around the 80th -- 90th percentile of economic mobility in our data. So this negative relationship may be more of a fringe concern for policymakers today -- as it only appears to be `biting' for a minority of counties. But if society succeeds in improving economic mobility over time, then the dangers that we highlight in this paper -- and the network-related policy interventions that we show can mitigate them -- will also become more important over time.

\singlespacing
\bibliographystyle{abbrvnat}
\addcontentsline{toc}{section}{References}
\bibliography{bib}
\cleardoublepage
\onehalfspacing
\appendix
\numberwithin{equation}{section}
\numberwithin{thm}{section}
\numberwithin{prop}{section}
\numberwithin{defn}{section}
\numberwithin{rem}{section}
\numberwithin{cnj}{section}
\numberwithin{lem}{section}
\numberwithin{cor}{section}
\newpage
\section{Proofs}\label{sec:proofs} 
\paragraph{Preliminaries.} For convenience we let $\theta := d \, (1-\omega)$ and work with this throughout. 
It is also convenient to let
    \begin{align}\label{eq:marginal_payoff}
        \Psi_i(x_i , \mathbf{x}_{-i}) := u_x(x_i, y_i) + \sum_{j \neq i} G_{ij} \ F'\big( \max\{ x_{j} - x_i , \, 0 \} \big),
    \end{align}
which is agent $i$'s marginal payoff when she consumes $x_i$ and the other agents consume $\mathbf{x}_{-i}$.\footnote{Because $F(0) = F'(0) = 0$ by assumption, every term in \Cref{eq:preferences} is continuously differentiable in $x_i$.} Three properties follow immediately:
\begin{itemize}
    \item[(P1)] \emph{Best responses are unique and strictly positive.} This is because $u(x,y)$ has a unique interior maximum (by assumption), $u_x(\cdot, y)$ is decreasing and $F'(\cdot)$ is increasing with $F'(0) = 0$.
    
    \item[(P2)] \emph{Social comparisons (weakly) increase consumption.} This follows from the fact that $F'(\max\{x_j - x_i,0\})\geq0$ for all $x_i$ is decreasing in $x_i$ (strictly so when $x_j > x_i$).   
    \item[(P3)] \emph{Higher earning potential raises both the ideal level and the marginal benefit of consumption.} If $y_i > y_j$ then $x^a(y_i) > x^a(y_j)$ and $u_x(x,y_i) > u_x(x,y_j)$ for all $x>0$. Both follow from \Cref{ass:u_fn}(iii). 
\end{itemize}

\noindent It is also helpful to prove a series of lemmas that will be used in the proofs of the main results.



\begin{lem}\label{lem:A0_uniqueness}
    The game has a unique Nash equilibrium, and it is class-symmetric (i.e. $x_i^* = x_j^* = x_q^*$ for all $i,j \in q$).
\end{lem}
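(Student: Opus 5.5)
Existence and uniqueness both come from showing that the best-response map is a contraction in the supremum norm. Class symmetry then follows because the unique equilibrium must be invariant under any permutation of agents within a class that preserves the class-level network structure. I expect the most delicate part to be that second step rather than the contraction itself.

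\textbf{Best responses.} By (P1), each agent $i$ has a unique best response $BR_i(\mathbf{x}_{-i})$. It is the unique root of $\Psi_i(\cdot,\mathbf{x}_{-i})=0$, since $\Psi_i$ is strictly decreasing in $x_i$: $u_{xx}<0$, and $F'(\max\{x_j-x_i,0\})$ is weakly decreasing in $x_i$. I restrict attention to the compact box $[0,\bar x]^n$ with $\bar x := x^a(\max_q y_q)$. Best responses map this box into $[x^a(y_i),\bar x]$. The lower bound is (P2). For the upper bound, at $x_i\ge \max_j x_j$ the comparison terms vanish and $u_x(x_i,y_i)\le u_x(x_i,\max_q y_q)\le 0$ once $x_i\ge \bar x$, using (P3).

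\textbf{Contraction.} Differentiate $\Psi_i(BR_i,\mathbf{x}_{-i})=0$ implicitly, which is valid because $F'$ is $C^1$. This gives
\[
\frac{\partial BR_i}{\partial x_j}=\frac{G_{ij}F''(\max\{x_j-x_i,0\})}{-u_{xx}(x_i,y_i)+\sum_k G_{ik}F''(\max\{x_k-x_i,0\})},
\]
which is nonnegative. The row sums satisfy
\[
\sum_j \frac{\partial BR_i}{\partial x_j}=\frac{S}{-u_{xx}+S}\le \frac{S_{\max}}{c+S_{\max}}<1,
\]
where $S:=\sum_k G_{ik}F''(\cdot)$, $c:=\min(-u_{xx})>0$ on the compact box, and $S_{\max}:=d\max F''$ on $[0,\bar x]$. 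So the map is a contraction in the sup norm. Where $\max\{\cdot,0\}$ creates kinks, I would work with Lipschitz constants along segments instead of derivatives. The bound still holds because $F'(\max\{t,0\})$ is Lipschitz with constant $\max F''$.

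By Banach's fixed point theorem, there is a unique fixed point in the box. Any Nash equilibrium lies in the box, by the bounds just established, so the equilibrium is unique overall. I expect the main technical care to be in this contraction step: the uniform bound on $-u_{xx}$ needs continuity of $u_{xx}$ on a compact set, and the kink at zero in the comparison term has to be handled through the Lipschitz argument.

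\textbf{Class symmetry.} I cannot simply permute agents, because $G$ is only pinned down at the class level (the within-class sums $\sum_{j\in q'}G_{ij}=d\,g_{qq'}$), so $G$ need not be invariant under within-class permutations. Instead I would construct a class-symmetric profile and show it is an equilibrium, so that uniqueness forces the equilibrium to be this profile.

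Consider the reduced four-class game in which class $q$ chooses $x_q$ to satisfy
\[
u_x(x_q,y_q)+d\sum_{q'}g_{qq'}F'(\max\{x_{q'}-x_q,0\})=0.
\]
The same contraction argument shows this reduced system has a solution. At a class-symmetric profile, agent $i\in q$ has
\[
\sum_j G_{ij}F'(\max\{x_j-x_i,0\})=\sum_{q'}F'(\max\{x_{q'}-x_q,0\})\sum_{j\in q'}G_{ij}=d\sum_{q'}g_{qq'}F'(\max\{x_{q'}-x_q,0\}).
\]
So $\Psi_i=0$ for every agent, and the class-symmetric profile is a Nash equilibrium. By uniqueness it is the equilibrium, which gives $x_i^*=x_q^*$ for all $i\in q$.
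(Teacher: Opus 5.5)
Your argument is correct, but it reaches existence and uniqueness by a different route from the paper's.

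\textbf{The paper's route.} Uniqueness comes from an elementary extremal-agent comparison. Given equilibria $\mathbf{x}\neq\mathbf{z}$, take the agent $i$ maximizing $x_i-z_i>0$. Every gap $\max\{x_j-x_i,0\}$ is then weakly smaller at $\mathbf{x}$ than at $\mathbf{z}$, while $u_x(x_i,y_i)<u_x(z_i,y_i)$. So $\Psi_i(\mathbf{x})<\Psi_i(\mathbf{z})$, contradicting $\Psi_i=0$ at both. Existence is obtained separately from Tarski's theorem, applied to class-symmetric profiles in $[0,x^a(y_{HR})]^4$. It uses only that best responses are non-decreasing in others' actions, and it produces a class-symmetric equilibrium directly; uniqueness then finishes. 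Neither step uses any curvature bounds.

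\textbf{What your route buys and costs.} Your Banach argument gets existence and uniqueness in one stroke. It also gives global convergence of iterated best responses; it is essentially the $\zeta$-contraction the paper later uses for Remark~\ref{rem:cutoff_convergence}. The cost is the compact box, the uniform bounds on $-u_{xx}$ and $F''$, the Lipschitz treatment of the kink, and an extra step placing every equilibrium in the box. That last step does not follow from the box being invariant under best responses, so state it explicitly: in any equilibrium, the highest-consuming agent faces no comparison terms in her first-order condition, so she consumes $x^a(y_i)\le\bar x$.

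A small correction: your Jacobian should carry $\mathbf{1}\{x_j>x_i\}$ in place of $F''(\max\{x_j-x_i,0\})$. The latter is nonzero for $x_j<x_i$ whenever $F''(0)>0$, though your row-sum bound survives.

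\textbf{Class symmetry.} Your class-symmetry step is the same as the paper's: build a class-symmetric solution via $\sum_{j\in q'}G_{ij}=d\,g_{qq'}$ and invoke uniqueness. You were right to drop the permutation-invariance idea from your overview, since $G$ need not be invariant under within-class permutations.
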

\begin{proof}
    \textbf{Step 1: uniqueness.} Suppose $\mathbf{x}$ and $\mathbf{z}$ are both Nash equilibria, with $\mathbf{x} \neq \mathbf{z}$. By (P1) we must have $\Psi_i(\mathbf{x}) = \Psi_i(\mathbf{z}) = 0$ for all $i$. Then let
    $\delta \ := \ \max_{i \in N} \ (x_i - z_i) \ > \ 0$,
    and let $i$ be an agent attaining this maximum. For every other agent $j$ we have $x_j - z_j \leq \delta = x_i - z_i$, and therefore
    \begin{align*}
        x_j - x_i \ \leq \ z_j - z_i \qquad \text{and hence} \qquad \max\{ x_j - x_i , \, 0 \} \ \leq \ \max\{ z_j - z_i , \, 0 \} .
    \end{align*}
    As $F'(\cdot)$ is increasing and $G_{ij}\geq 0$, every comparison term in $\Psi_i$ is weakly smaller at $\mathbf{x}$ than at $\mathbf{z}$. And $x_i > z_i$, so $u_x(x_i , y_i) < u_x(z_i , y_i)$ by \Cref{ass:u_fn}(i). Adding the two yields $\Psi_i(\mathbf{x}) < \Psi_i(\mathbf{z})$. Contradiction.


    \textbf{Step 2: existence.} Let $S := [0 , x^a(y_{HR})]$, and consider profiles in $S^n$. First, best responses map $S^n$ into $S$: they are strictly positive by (P1), and at $x_i = x^a(y_{HR})$ no other agent consumes strictly more, so every comparison term vanishes and $\Psi_i = u_x\big( x^a(y_{HR}) , y_i \big) \leq u_x\big( x^a(y_{HR}) , y_{HR} \big) = 0$ by (P3). So $x^{BR}_i \leq x^a(y_{HR})$. Second, $i$'s best response is non-decreasing in $x_j$ for all $j\neq i$ (as $F'\big( \max\{x_j - x_i , 0\} \big)$ is increasing in $x_j$). 

    Now restrict attention to class-symmetric profiles. By the paragraph above, the best response of every member of a class is the same at such a profile, so the best-response map takes class-symmetric profiles to class-symmetric profiles. Identifying class-symmetric profiles in $S^n$ with vectors in $S^4$, which is a complete lattice, the best-response map is therefore a monotone self-map of $S^4$. By Tarski's fixed point theorem it has a fixed point. 

    \textbf{Putting it together.} Step 2 shows there is a class-symmetric equilibrium and Step 1 says there is at most one equilibrium. So the equilibrium is unique, and it is class-symmetric.
\end{proof}

In light of \Cref{lem:A0_uniqueness}, we write $x^*_q$ for the common equilibrium consumption of class $q$, and work with the class-level marginal payoffs $\Psi_q$ from here on.

\begin{lem}\label{lem:A1_ordering}
    For all $s \in (0,1)$, we have $x_{HR}^* > x_{HP}^* , x_{LR}^* > x_{LP}^*$.
\end{lem}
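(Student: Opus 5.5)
The plan is to work with the class-level first-order conditions $\Psi_q(x^*) = 0$ from \Cref{lem:A0_uniqueness}, together with (P1)--(P3) and the earning-potential ordering. By assumption (ii), for $s\in(0,1)$ we have $y_{HR} > y_{HP}$, $y_{HR} > y_{LR}$, $y_{HP} > y_{LP}$ and $y_{LR} > y_{LP}$. The argument has four steps: show $HR$ is strictly on top, show $LP$ is strictly at the bottom, and then show that $HP$ and $LR$ each lie strictly between them.

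\textbf{Step 1 ($HR$ is weakly on top).} Suppose some class $q$ attains $\max_{q'} x^*_{q'}$. That class faces no comparison terms, so $u_x(x^*_q, y_q) = 0$, that is, $x^*_q = x^a(y_q)$. Now consider the $HR$ class.
\begin{itemize}
\item If $x^*_{HR} < x^*_q$, then $\Psi_{HR} = 0$ forces $u_x(x^*_{HR}, y_{HR}) \le 0$, so $x^*_{HR} \ge x^a(y_{HR})$.
\item By (P3), $x^a(y_{HR}) \ge x^a(y_q)$, with equality only if $q = HR$.
\end{itemize}
Together these give $x^*_{HR} \ge x^*_q$, contradicting $x^*_{HR} < x^*_q$. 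Hence $HR$ attains the maximum, and $x^*_{HR} = x^a(y_{HR})$, which also recovers the claim in \Cref{rem:unique_eqm}.

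\textbf{Step 2 ($LP$ is weakly at the bottom).} The argument is symmetric. Let $q$ attain the minimum and suppose $x^*_{LP} > x^*_q$. Since $LP$ has the lowest earning potential, I compare the first-order conditions $\Psi_{LP} = 0$ and $\Psi_q = 0$. In each, the comparison terms are $\sum_{q'} d\,g\,F'(x^*_{q'} - x)$ taken over the classes above. The difficulty is that the weights differ across classes: $g_{LP,HP} = (1-\rho)(1-\omega)$, whereas $g_{q,\cdot}$ is structured differently.

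My first approach is a direct check. Each non-$HR$ class has exactly two cross-class neighbors with positive weight, namely the two adjacent classes in the square of \Cref{fig:network_structure}.
\begin{itemize}
\item If $q = HP$ is the minimum, its neighbors are $HR$ (weight $\rho$) and $LP$ (weight $1-\rho$). Since $x^*_{LP} > x^*_{HP}$, both are above $HP$.
\item Meanwhile $LP$'s neighbors are $HP$, which is below it, and $LR$.
\end{itemize}
Using $u_x(x, y_{HP}) > u_x(x, y_{LP})$ together with $x^*_{HP} < x^*_{LP}$, the intrinsic marginal benefit is strictly larger for $HP$. So $HP$'s comparison terms must be more negative than $LP$'s, which is impossible since $u_x$ enters positively and the comparison terms are nonnegative. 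This is a sign-check, and I expect it to need care.

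A cleaner route is a perturbation argument in the style of Step 1 of \Cref{lem:A0_uniqueness}. Because best responses are monotone in others' actions and the map is a contraction, I would start the monotone iteration from the profile $x^a(y_q)$ and show that the ordering $x_{LP} \le x_{HP}, x_{LR}$ is preserved by best responses. For example, $LP$'s best response is bounded above by $HP$'s whenever $x_{HP} \ge x_{LP}$ and $y_{HP} > y_{LP}$, after accounting for the differing weights. The ordering then passes to the limit. Within this iteration, the place where integration weights differ is exactly where I would use the square structure: each of $HP$ and $LR$ compares upward only with $HR$, and possibly with $LP$ if $LP$ is higher.

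\textbf{Step 3 (strictness).} For $HR > HP$: if $x^*_{HP} = x^*_{HR}$, then $HP$ faces no upward comparisons, so $x^*_{HP} = x^a(y_{HP}) < x^a(y_{HR})$, a contradiction. The case $HR > LR$ is identical. For $HP > LP$: if they were equal, $LP$'s first-order condition would give
\[
0 = u_x(x, y_{LP}) + \theta \rho F'(x^*_{LR} - x) < u_x(x, y_{HP}) + \theta \rho F'(x^*_{HR} - x) = \Psi_{HP},
\]
using $x^*_{HR} \ge x^*_{LR}$ and (P3), a contradiction. The case $LR > LP$ is symmetric.

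I expect the main obstacle to be Step 2: ruling out that $LP$ overtakes $HP$ or $LR$ when the link weights $\rho$ and $1-\rho$ differ. I would handle it with the preserved-ordering argument along the monotone best-response iteration.
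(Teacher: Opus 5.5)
Your Steps 1 and 3 are sound (Step 1 is essentially the paper's Claim 1). The gap is Step 2, which you rightly call the heart of the lemma but never establish.

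In the direct check you correctly obtain that $HP$'s intrinsic marginal benefit exceeds $LP$'s. So $HP$'s total comparison pressure must be \emph{strictly smaller} than $LP$'s. You then call this impossible ``since the comparison terms are nonnegative,'' but nonnegativity gives nothing of the kind. What is needed is the reverse inequality
\[
\rho\theta F'(x^*_{HR}-x^*_{HP}) + (1-\rho)\theta F'(x^*_{LP}-x^*_{HP}) \;\geq\; \rho\theta F'\big(\max\{x^*_{LR}-x^*_{LP},0\}\big).
\]
It holds precisely because the $HP\to HR$ link and the $LP\to LR$ link carry the \emph{same} weight $\rho\theta$ (both are same-ability links), together with $x^*_{HR}\geq x^*_{LR}$ from Step 1 and $x^*_{HP}<x^*_{LP}$. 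Your diagnosis that ``the weights differ across classes'' is the misstep: the relevant weights coincide, and that coincidence is the whole proof. In the mirror case, the $LR\to HR$ and $LP\to HP$ links both carry $(1-\rho)\theta$.

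The iteration you fall back on does not sidestep this. Its key step, that $LP$'s best response stays below $HP$'s ``after accounting for the differing weights,'' is asserted rather than shown. It is only true under the extra invariant $x_{LR}\leq x_{HR}$. Verifying it at each iterate needs exactly the same weight-matched comparison, plus care with own-class terms off equilibrium and a limit argument. So it relocates the crux rather than resolving it.

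The repair is already in your Step 3: run the same computation without assuming equality. Suppose $x^*_{HP}\leq x^*_{LP}$. Take $LP$'s class-level marginal payoff $u_x(x,y_{LP})+(1-\rho)\theta F'(\max\{x^*_{HP}-x,0\})+\rho\theta F'(\max\{x^*_{LR}-x,0\})$, which is strictly decreasing in $x$ with root $x^*_{LP}$. Evaluate it at $x=x^*_{HP}$, where its $HP$ term vanishes, and subtract $HP$'s first-order condition. This leaves
\[
\big[u_x(x^*_{HP},y_{LP})-u_x(x^*_{HP},y_{HP})\big]+\rho\theta\big[F'(\max\{x^*_{LR}-x^*_{HP},0\})-F'(x^*_{HR}-x^*_{HP})\big]-(1-\rho)\theta F'(x^*_{LP}-x^*_{HP})<0,
\]
so $x^*_{LP}<x^*_{HP}$, a contradiction. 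The $LR$ case is the mirror image with $\rho$ and $1-\rho$ swapped.

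This is essentially the paper's Case (ii) argument. In this form it covers every configuration at once and delivers the strict bottom ordering directly, so neither the iteration nor your separate equality case is needed.
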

\begin{proof}
    \textbf{Claim 1: $HR$ class have the highest consumption.} Suppose not. Then some other class $\hat q \neq HR$ has the highest consumption. So every comparison term in $\Psi_{\hat q}$ vanishes at $x^*_{\hat q}$. So $\Psi_{\hat q}(x^*_{\hat q}) = u_x(x^*_{\hat q} , y_{\hat q}) = 0$, which implies that $x^*_{\hat q} = x^a(y_{\hat q})$. But $y_{\hat q} < y_{HR}$, so applying (P3) and then (P2),
    $x^*_{\hat q} = x^a(y_{\hat q}) < x^a(y_{HR}) \leq x^*_{HR}$, which contradicts $x^*_{\hat{q}} \geq x^*_{HR}$.

    \textbf{Claim 2: $LP$ class have lowest consumption.} Given Claim 1, it is enough to show that $x^*_{LP} < x^*_{HP}$ and $x^*_{LP} < x^*_{LR}$. Suppose not. Then exactly one of the three cases below holds, and we derive a contradiction in each. 
    \textbf{Case (i).} Suppose $x^*_{LP} \geq x^*_{HP},x^*_{LR}$. Then every comparison term in $\Psi_{LP}$ vanishes at $x^*_{LP}$ and hence $x^*_{LP} = x^a(y_{LP})$.\footnote{Recall that the $LP$ class do not make comparisons with the $HR$ class.} But $y_{LP} < y_{HP}$, so (P3) and then (P2) give $x^*_{LP} = x^a(y_{LP}) < x^a(y_{HP}) \leq x^*_{HP}$. A contradiction.

    \textbf{Case (ii).} Suppose $x^*_{LR} > x^*_{LP} \geq x^*_{HP}$. Here we evaluate the marginal payoff of the $LP$ class \emph{at} the consumption level of the $HP$ class. At $x = x^*_{HP}$, the $LP$ class only makes an unfavorable comparison with the $LR$ class. 
    So:
    \begin{align}
        \Psi_{LP}(x^*_{HP}) = u_x(x^*_{HP} , y_{LP}) + \rho \theta F'(x^*_{LR} - x^*_{HP}). \label{eq:E.2}
    \end{align}
    And the first order condition of the $HP$ class, $\Psi_{HP}(x^*_{HP}) = 0$, reads:
    \begin{align}
        u_x(x^*_{HP} , y_{HP}) + \rho \theta F'(x^*_{HR} - x^*_{HP}) + (1-\rho)\theta F'(x^*_{LP} - x^*_{HP}) = 0. \label{eq:E.3}
    \end{align}
    Subtracting \Cref{eq:E.3} from \Cref{eq:E.2} yields:
    \begin{align*}
        \Psi_{LP}(x^*_{HP}) = \underbrace{\Big[ u_x(x^*_{HP} , y_{LP}) - u_x(x^*_{HP} , y_{HP}) \Big]}_{<\, 0}
        &+ \underbrace{\rho \theta \Big[ F'(x^*_{LR} - x^*_{HP}) - F'(x^*_{HR} - x^*_{HP}) \Big]}_{<\, 0} \\
        &- \underbrace{(1-\rho)\theta F'(x^*_{LP} - x^*_{HP})}_{\geq \, 0}.
    \end{align*}
    The first bracket is negative by (P3), as $y_{LP} < y_{HP}$ and $x^*_{HP} > 0$ (by (P1)). The second is negative because $x^*_{LR} < x^*_{HR}$ (by Claim 1) and $F'(\cdot)$ is increasing. And the third term is non-negative, and is subtracted. So $\Psi_{LP}(x^*_{HP}) < 0$, which by (P1) means $x^*_{LP} < x^*_{HP}$. Contradiction.

    \textbf{Case (iii).} Suppose $x^*_{HP} > x^*_{LP} \geq x^*_{LR}$. This is the mirror image of Case (ii), with the $HP$ and $LR$ classes interchanging roles and $\rho$ replaced by $1-\rho$ throughout.\footnote{Both facts that the argument in case (ii) relies on survive the swap: $y_{LP} < y_{LR}$ (now because the $LP$ and $LR$ classes differ in background rather than in ability), and $x^*_{HP} < x^*_{HR}$ (by Claim 1).} 
%
%
Therefore $x_{LP}^*$ must be lower than the consumption of any other class.
\end{proof}


\begin{lem}\label{lem:A2_unique_eqm}
    The unique Nash equilibrium is the solution to the following system of equations: 
    \begin{align*}
        u_x(x^*_{HR}, y_{HR}) &= 0 \\
        u_x(x^*_{HP}, y_{HP}) + \rho \theta F'(x^*_{HR} - x^*_{HP}) &= 0 \\
        u_x(x^*_{LR}, y_{LR}) + (1-\rho) \theta F'(x^*_{HR} - x^*_{LR}) &= 0 \\
        u_x(x^*_{LP}, y_{LP}) + (1-\rho)\theta F'(x^*_{HP} - x^*_{LP}) + \rho \theta F'(x^*_{LR} - x^*_{LP}) &= 0
    \end{align*}
\end{lem}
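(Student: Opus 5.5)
The plan is to combine \Cref{lem:A0_uniqueness} and \Cref{lem:A1_ordering} with the first-order characterization of best responses in (P1). By \Cref{lem:A0_uniqueness} the equilibrium exists, is unique, and is class-symmetric, so it suffices to work with the four class-level consumptions $x^*_q$. By (P1) each agent's best response is strictly positive and unique. Because $u(\cdot,y_i)$ is strictly concave and the comparison cost is convex in $x_i$, a best response is characterized by $\Psi_i(x_i,\mathbf{x}_{-i})=0$. So the equilibrium is exactly the profile solving $\Psi_q(x^*_q)=0$ for every class $q$.

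Next I would write out $\Psi_q$ for each class using the network structure. For $i\in q$, the comparison sum $\sum_j G_{ij}F'(\max\{x_j-x_i,0\})$ splits into four blocks, one per class $q'$, with total weight $d\,g_{qq'}$. The own-class block vanishes because all members of $q$ consume $x^*_q$ and $F'(0)=0$. The weights of the remaining blocks are as follows:
\begin{itemize}
\item $HR$ places weight $\rho\theta$ on $HP$ and $(1-\rho)\theta$ on $LR$.
\item $HP$ places weight $\rho\theta$ on $HR$ and $(1-\rho)\theta$ on $LP$.
\item $LR$ places weight $(1-\rho)\theta$ on $HR$ and $\rho\theta$ on $LP$.
\item $LP$ places weight $(1-\rho)\theta$ on $HP$ and $\rho\theta$ on $LR$.
\end{itemize}
In each case the remaining class receives weight zero.

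Finally I would use the ordering $x_{HR}^*>x_{HP}^*,x_{LR}^*>x_{LP}^*$ from \Cref{lem:A1_ordering}, valid for $s\in(0,1)$, to evaluate each $\max\{\cdot,0\}$:
\begin{itemize}
\item For $HR$, both gaps are negative, so only $u_x(x^*_{HR},y_{HR})=0$ remains.
\item For $HP$, the $LP$ term vanishes and the $HR$ term survives with weight $\rho\theta$.
\item For $LR$, the $LP$ term vanishes and the $HR$ term survives with weight $(1-\rho)\theta$.
\item For $LP$, both gaps are positive and both terms survive.
\end{itemize}
This yields exactly the displayed system.

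The only delicate point I anticipate is the converse direction and the boundary cases $s\to0$ or $s\to1$, where some classes may tie. Ties are harmless, because a tied comparison contributes $F'(0)=0$ and the displayed equation is unchanged. For uniqueness of the solution to the system itself, I would argue directly that it is triangular. The first equation pins down $x^*_{HR}$. Given $x^*_{HR}$, the left side of each of the next two equations is strictly decreasing in its own unknown, by $u_{xx}<0$ and $F'$ increasing, so each has at most one root. The last equation then has at most one root in $x^*_{LP}$ by the same monotonicity. Hence any solution coincides with the equilibrium.
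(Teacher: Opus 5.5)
Your proposal is correct and follows the paper's own argument. The paper likewise uses the ordering from \Cref{lem:A1_ordering} to decide which comparison terms in each class's first-order condition are active: none for $HR$, only $HR$ for $HP$ and $LR$, and both $HP$ and $LR$ for $LP$. Your explicit link-weight bookkeeping and your triangular-uniqueness check for the converse spell out details the paper leaves implicit; the latter matches its footnote on a constructive route to uniqueness.
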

\begin{proof}
    \Cref{lem:A1_ordering} pins down which classes make (unfavorable) social comparisons with which: the $HR$ class make none; the $HP$ and $LR$ classes compare themselves only with the $HR$ class (they are not linked to each other, and they consume more than the $LP$ class); and the $LP$ class compare themselves with both.\footnote{This gives a second, constructive route to the uniqueness already established in \Cref{lem:A0_uniqueness}.}
    %
\end{proof}

\begin{lem}\label{lem:A3_limiting_cons}
    Uniformly in $\rho$: (i) as $s\to 1$, $x_{HP}^* \to x_{HR}^*$ and $x_{LR}^* \to x_{LP}^*$, and (ii) as $s\to 0$, $x_{LR}^* \to x_{HR}^*$ and $x_{HP}^* \to x_{LP}^*$.
\end{lem}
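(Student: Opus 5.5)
We work directly with the system of first-order conditions in \Cref{lem:A2_unique_eqm}. The key observation is that $x_{HR}^* = x^a(y_{HR})$ does not depend on $s$ or $\rho$. Moreover, by assumption (ii) on earning potentials, as $s \to 1$ we have $y_{HP} \to y_{HR}$ and $y_{LR} \to y_{LP}$. The approach is to show that the gap $x_{HR}^* - x_{HP}^*$ is bounded by a quantity that depends only on $y_{HR} - y_{HP}$ and not on $\rho$. We then handle the $LR$/$LP$ pair by the same device. Part (ii) is the mirror image, with the roles of ability and background swapped and $\rho$ replaced by $1-\rho$.

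\textbf{Step 1: the $HP$ gap.} From the $HP$ first-order condition, $u_x(x^*_{HP}, y_{HP}) = -\rho\theta F'(x^*_{HR} - x^*_{HP}) \le 0$. Hence $x^*_{HP} \ge x^a(y_{HP})$, by $u_{xx}<0$. By \Cref{lem:A1_ordering}, $x^*_{HP} < x^*_{HR} = x^a(y_{HR})$. This gives
\[
0 < x^*_{HR} - x^*_{HP} \le x^a(y_{HR}) - x^a(y_{HP}),
\]
uniformly in $\rho \in (0,1)$. The function $x^a(\cdot)$ is continuous by the implicit function theorem, since $u_{xx}<0$ and $u$ is $C^2$. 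Therefore the right-hand side tends to $0$ as $s \to 1$, because $y_{HP} \to y_{HR}$. This step uses that $y_{HR}$ and $y_{LP}$ are constant in $s$, and that each $y_q(s)$ has a limit as $s\to 1$; monotonicity in $s$ guarantees such limits exist.

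\textbf{Step 2: the $LR$–$LP$ gap.} By \Cref{lem:A1_ordering}, $x^*_{LP} < x^*_{LR}$, so we need an upper bound on $x^*_{LR}$ and a lower bound on $x^*_{LP}$, both uniform in $\rho$. The same argument as in Step 1 gives $x^*_{LP} \ge x^a(y_{LP})$. For the upper bound we use that $x^*_{LR}$ solves $u_x(x, y_{LR}) + (1-\rho)\theta F'(x^*_{HR} - x) = 0$. Since $(1-\rho)\theta F'(\cdot) \le \theta F'(\cdot)$ and the map is decreasing in $x$, we get $x^*_{LR} \le \bar x(y_{LR})$. Here $\bar x(y)$ is the unique solution of $u_x(x,y) + \theta F'(\max\{x^*_{HR}-x,0\}) = 0$, i.e. the $\rho=0$ value.

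The problem is that $\bar x(y_{LP})$ is strictly above $x^a(y_{LP})$, so these two bounds alone do not close the gap. We therefore need a sharper argument. The $LP$ class also feels comparison pressure, namely $\rho\theta F'(x^*_{LR}-x^*_{LP})$ together with the pull from $HP$, whose consumption tends to $x^*_{HR}$. I would argue by contradiction with a limit. Suppose there is a sequence $(s_k \to 1, \rho_k)$ with $x^*_{LR} - x^*_{LP} \ge \varepsilon$. All consumptions lie in the compact set $[0, x^a(y_{HR})]$, so we may pass to a subsequence where $\rho_k \to \bar\rho \in [0,1]$ and every $x^*_q$ converges. Then take limits in the $LR$ and $LP$ first-order conditions, using $y_{LR}\to y_{LP}$ and $x^*_{HP}\to x^*_{HR}$ from Step 1. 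This yields limits $\ell_{LR} \ge \ell_{LP}+\varepsilon$ satisfying
\[
u_x(\ell_{LR},y_{LP}) + (1-\bar\rho)\theta F'(x^*_{HR}-\ell_{LR}) = 0,
\]
\[
u_x(\ell_{LP},y_{LP}) + (1-\bar\rho)\theta F'(x^*_{HR}-\ell_{LP}) + \bar\rho\theta F'(\ell_{LR}-\ell_{LP}) = 0 .
\]
Subtracting, the $LP$ marginal payoff evaluated at $\ell_{LR}$ equals $\bar\rho\theta F'(0)=0$. The map $x\mapsto u_x(x,y_{LP}) + (1-\bar\rho)\theta F'(x^*_{HR}-x)$ is strictly decreasing. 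But $\ell_{LP}<\ell_{LR}$, so it is strictly positive at $\ell_{LP}$, while the $LP$ condition forces $u_x(\ell_{LP},y_{LP}) + (1-\bar\rho)\theta F'(x^*_{HR}-\ell_{LP}) = -\bar\rho\theta F'(\ell_{LR}-\ell_{LP}) \le 0$. This is a contradiction. The argument works even at the boundary values $\bar\rho\in\{0,1\}$, which is exactly what delivers uniformity in $\rho$.

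\textbf{Main obstacle.} The hardest step is Step 2, specifically the uniformity as $\rho$ approaches the endpoints. The compactness-and-contradiction argument is designed to handle this. It needs continuity of $u_x$ in $y$ and convergence of $y_{LR}(s)$ to $y_{LP}$ as $s\to1$, both of which follow from the stated assumptions. Part (ii) then follows by the symmetric argument: Step 1 applied to $LR$ using $y_{LR}\to y_{HR}$, and Step 2 applied to the $HP$/$LP$ pair.
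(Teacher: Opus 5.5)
Your proof is correct, but it is organized differently from the paper's. The paper extends the equilibrium to the closed square $(s,\rho)\in[0,1]^2$. It argues, by compactness of the consumption range plus uniqueness of equilibrium, that the equilibrium profile is continuous there, and hence uniformly continuous. It then evaluates the boundary exactly: at $s=1$, $y_{HP}=y_{HR}$, so $x^*_{HR}$ solves the $HP$ condition; and $y_{LR}=y_{LP}$, so $x^*_{LR}$ solves the $LP$ condition. Uniform continuity then turns these boundary identities into convergence that is uniform in $\rho$.

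You never leave the interior of the parameter space. For the $HP$ gap, your sandwich $x^a(y_{HP})\le x^*_{HP}<x^a(y_{HR})$ gives an explicit, $\rho$-free modulus $x^a(y_{HR})-x^a(y_{HP})$, with no compactness at all. For the $LR$--$LP$ gap, your subsequence argument is a sequential version of the paper's boundary computation. The limiting $LR$ condition makes $\ell_{LR}$ the root of the limiting $LP$ marginal payoff, which rules out $\ell_{LP}<\ell_{LR}$; this is the paper's remark that $x^*_{LR}$ solves the $LP$ condition at $s=1$. You obtain it from limits of interior first-order conditions. The paper instead relies on existence, uniqueness and the \Cref{lem:A2_unique_eqm} system at the degenerate parameters $s\in\{0,1\}$ and $\rho\in\{0,1\}$, where the strict orderings of \Cref{lem:A1_ordering} fail. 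The paper passes over that point quickly.

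The paper's route is shorter and handles all four gaps with one device. It also delivers the continuity-of-the-root fact that it reuses in \Cref{lem:A4_marginal_value}, \Cref{lem:A6_endpoint_separation} and \Cref{prop:welfare_intermediate_mobility_general}. Yours is more self-contained and partly quantitative. Your intermediate bound $x^*_{LR}\le\bar x(y_{LR})$ plays no role in the final argument and can be dropped.
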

\begin{proof}
First, note that \Cref{lem:A0_uniqueness} holds for all $(s,\rho)\in[0,1]^2$, and hence so do the first-order condition ins \Cref{lem:A2_unique_eqm}. Moreover, these first-order conditions are continuous in choices and parameters, so $x_q^* \in [0, x^*_{HR}]$ and is uniformly continuous in $(s,\rho)$ for all $q\in Q, (s,\rho) \in [0,1]^2$.\footnote{More explicitly, for any sequence $(s_n,\rho_n)\to(s,\rho)$, compactness gives a convergent subsequence of equilibrium profiles. Continuity of the first-order conditions implies that its limit is an equilibrium at $(s,\rho)$; uniqueness then implies that the limit is $\mathbf{x}^*(s,\rho)$.}

    \textbf{Part (i).} At $s=1$ we have $y_{HP}=y_{HR}$. So we have: $u_x(x_{HP},y_{HR}) +\rho\theta F'(x^*_{HR}-x_{HP})=0$. Since $u_x(x^*_{HR},y_{HR})=0$ and $F'(0)=0$, $x^*_{HR}$ solves this condition. Hence, $x^*_{HP}(1,\rho)=x^*_{HR}$ for every $\rho\in[0,1]$. Additionally, at $s=1$, we have $y_{LR}=y_{LP}$, and so by an identical argument we have $x^*_{LP}(1,\rho)=x^*_{LR}(1,\rho)$ for every $\rho\in[0,1]$. Uniform continuity of the equilibrium profile then gives both convergence statements, uniformly in $\rho$.


    \textbf{Part (ii).} Follows an identical argument, with $y_{LR} \to y_{HR}$ and $y_{HP} \to y_{LP}$ as $s\to0$, and interchanging $\rho$ with $1-\rho$.
\end{proof}

\begin{lem}\label{lem:A4_marginal_value}
    Let $\mu_q := u_y(x^*_q , y_q)$ denote the marginal value of earning potential to class $q$.

    (i) For $s$ sufficiently close to $1$, $\mu_{HP} < \mu_{LR}$,
    (ii) For $s$ sufficiently close to $0$, $\mu_{HP} > \mu_{LR}$.
\end{lem}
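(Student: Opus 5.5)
We compare $\mu_{HP}=u_y(x^*_{HP},y_{HP})$ and $\mu_{LR}=u_y(x^*_{LR},y_{LR})$ through the envelope-theorem identity for $V$. Because $x^a(y)$ is interior and maximizes $u(\cdot,y)$, we have $V'(y)=u_y(x^a(y),y)$. Assumption~\ref{ass:u_fn}(iv) then says that $y\mapsto u_y(x^a(y),y)$ is strictly decreasing. Assumption~\ref{ass:u_fn}(iii) says that $u_y(\cdot,y)$ is strictly increasing in $x$ for $x>0$. For each $q$ we therefore write
\begin{align*}
\mu_q = V'(y_q) + \big[u_y(x^*_q,y_q)-u_y(x^a(y_q),y_q)\big],
\end{align*}
and call the bracket the \emph{over-consumption wedge}. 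It is nonnegative by Remark~\ref{rem:unique_eqm} together with (iii).

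\textbf{Part (i), $s\to1$.} By assumption (ii) on earning potentials, $y_{LR}-y_{LP}\to0$ and $y_{HP}-y_{HR}\to0$. Since $y_{HR}$ and $y_{LP}$ do not depend on $s$, both $y_{HP}$ and $y_{LR}$ converge. The limits are $y_{HR}$ and $y_{LP}$ respectively, and $y_{HR}>y_{LP}$ strictly: both are constant in $s$, and $y_{HR}>y_{LR}>y_{LP}$ at any interior $s$. Lemma~\ref{lem:A3_limiting_cons}(i), combined with the limiting first-order conditions of Lemma~\ref{lem:A2_unique_eqm}, gives two limits. First, $x^*_{HP}\to x^*_{HR}=x^a(y_{HR})$. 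Second, $x^*_{LR}$ converges to the limit of $x^*_{LP}$, which is the value at $s=1$ where $y_{LR}=y_{LP}$.

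The clean way to finish is to evaluate directly at $s=1$ and use the continuity noted in the proof of Lemma~\ref{lem:A3_limiting_cons}. At $s=1$, $\mu_{HP}=u_y(x^a(y_{HR}),y_{HR})=V'(y_{HR})$ exactly, with no wedge. On the other side, $\mu_{LR}=u_y(x^*_{LR},y_{LP})\ge u_y(x^a(y_{LP}),y_{LP})=V'(y_{LP})$, using $x^*_{LR}\ge x^a(y_{LR})$ and (iii). Strict concavity of $V$ and $y_{HR}>y_{LP}$ give $V'(y_{HR})<V'(y_{LP})$. Hence $\mu_{HP}<\mu_{LR}$ at $s=1$. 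The inequality is strict, so continuity of $\mu_q$ in $s$ extends it to a neighbourhood of $1$. That continuity follows from continuity of $x^*_q$, of $y_q$, and of $u_y$ ($u$ is $C^2$).

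\textbf{Part (ii), $s\to0$.} This is symmetric. Now $y_{LR}\to y_{HR}$ and $y_{HP}\to y_{LP}$, and Lemma~\ref{lem:A3_limiting_cons}(ii) gives $x^*_{LR}\to x^*_{HR}=x^a(y_{HR})$. At $s=0$, $\mu_{LR}=V'(y_{HR})$ and $\mu_{HP}\ge V'(y_{LP})>V'(y_{HR})$, so $\mu_{HP}>\mu_{LR}$ near $0$.

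\textbf{Main obstacle.} The delicate point is justifying evaluation at the endpoints $s\in\{0,1\}$. Two things are needed there: that the $y_q(s)$ extend continuously to the endpoints, and that equilibrium consumption is continuous up to them. The proof of Lemma~\ref{lem:A3_limiting_cons} already asserts both on $[0,1]^2$, and I would invoke it. If one prefers to avoid endpoints, the same argument works with limits. The $LR$ wedge stays nonnegative throughout. The $HP$ term converges to $V'(y_{HR})$ by continuity of $u_y$, while $V'(y_{LR})\to V'(y_{LP})$. The strict gap $V'(y_{LP})-V'(y_{HR})>0$ then survives for $s$ near $1$.
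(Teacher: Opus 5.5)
Your proof is correct and follows the paper's argument. Both write $\mu_q$ as $V'(y_q)$ plus an over-consumption term. Both note that $\mu_{HP}\to V'(y_{HR})$ with no over-consumption term, and both obtain the strict gap from strict concavity of $V$ together with $y_{HR}>y_{LP}$.

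You differ in two small ways. You use only the weak inequality $\mu_{LR}\geq V'(y_{LP})$, whereas the paper also shows the limiting $LR$ consumption strictly exceeds $x^a(y_{LP})$. You also evaluate at the endpoint and invoke continuity, rather than taking limits directly.
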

\begin{proof}
    \textbf{Part (i).} 
    By \Cref{lem:A3_limiting_cons}, as $s\to1$ we have $x^*_{HP} \to x^*_{HR} = x^a(y_{HR})$, while $y_{HP} \to y_{HR}$. So the $HP$ class end up consuming exactly their ideal level, and $\mu_{HP} \to u_y\big(x^a(y_{HR}) , y_{HR}\big) = V'(y_{HR})$,
    where the equality is the Envelope Theorem applied to $V(y) := \max_{x\geq0} u(x,y)$.

    Also as $s\to1$ we have $y_{LR} \to y_{LP}$, and $x^*_{LR} \to \bar{x}$, where $\bar{x}$ is the unique root of $u_x(x , y_{LP}) + (1-\rho)\theta F'(x^*_{HR} - x)$ (see \Cref{lem:A3_limiting_cons}). As $\bar{x} < x^*_{HR}$, the social comparison term is strictly positive at $\bar{x}$, so $u_x(\bar{x} , y_{LP}) < 0$ and hence $\bar{x} > x^a(y_{LP})$. Therefore $\mu_{LR} \to u_y(\bar{x} , y_{LP})$, and:
    \begin{align*}
        \lim_{s\to1} \mu_{HP} = V'(y_{HR}) \ < \ V'(y_{LP}) = u_y\big(x^a(y_{LP}) , y_{LP}\big) \ < \ u_y(\bar{x} , y_{LP}) = \lim_{s\to1}\mu_{LR}.
    \end{align*}
    The first inequality holds because $V(\cdot)$ is strictly concave (\Cref{ass:u_fn}(iv)) and $y_{HR} > y_{LP}$. The second holds because $u_y(\cdot , y_{LP})$ is strictly increasing (\Cref{ass:u_fn}(iii)) and $\bar{x} > x^a(y_{LP})$. So both the difference in earning potentials and the difference in consumption push in the same direction. As the inequalities are strict, $\mu_{HP} < \mu_{LR}$ for $s$ sufficiently close to $1$.

    \textbf{Part (ii).} Follows a mirrored argument. As $s\to0$, we have $x^*_{LR} \to x^*_{HR} = x^a(y_{HR})$ and $x^*_{HP} \to \bar{x}' > x^a(y_{LP})$. Interchanging the roles of the $HP$ and $LR$ classes throughout (and replacing $\rho$ with $1-\rho$) then gives $\lim_{s\to0}\mu_{LR} = V'(y_{HR}) < V'(y_{LP}) < u_y(\bar{x}' , y_{LP}) = \lim_{s\to0}\mu_{HP}$. 
\end{proof}

\begin{lem}\label{lem:A5_bounds}
    There exist $\underline{K}>0$ and $\overline{K}< \infty$ such that, for all $s \in (0,1)$ and all $\rho \in (0,1)$: $x^*_{HP}$ is continuously differentiable in $y_{HP}$ with $\frac{d x_{HP}^*}{d y_{HP}} \in [\underline{K}, \overline{K}]$, and $x^*_{LR}$ is continuously differentiable in $y_{LR}$ with $\frac{d x_{LR}^*}{d y_{LR}} \in [\underline{K}, \overline{K}]$.
\end{lem}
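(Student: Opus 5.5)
The plan is to apply the implicit function theorem to the two first-order conditions in \Cref{lem:A2_unique_eqm} that determine $x^*_{HP}$ and $x^*_{LR}$. These conditions have a triangular structure: $x^*_{HR}=x^a(y_{HR})$ does not depend on $y_{HP}$ or $y_{LR}$, so each of the two conditions involves a single unknown once $x^*_{HR}$ is fixed. Consider first
\begin{align*}
\Phi(x,y_{HP}) := u_x(x,y_{HP}) + \rho\theta F'(x^*_{HR}-x) = 0 .
\end{align*}
Then $\partial_x\Phi = u_{xx}(x,y_{HP}) - \rho\theta F''(x^*_{HR}-x) < 0$, by \Cref{ass:u_fn}(i) and convexity of $F$. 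The implicit function theorem therefore gives a $C^1$ solution with
\begin{align*}
\frac{dx^*_{HP}}{dy_{HP}} = \frac{u_{xy}(x^*_{HP},y_{HP})}{-u_{xx}(x^*_{HP},y_{HP}) + \rho\theta F''(x^*_{HR}-x^*_{HP})}.
\end{align*}
This requires $F$ to be $C^2$ and $u$ to be $C^2$, which holds since $F$ is smooth and $u$ is twice continuously differentiable. The same computation for $LR$, with $\rho$ replaced by $1-\rho$, gives the analogous formula. Positivity is immediate: the numerator is strictly positive by \Cref{ass:u_fn}(iii), since $x^*_{HP}>0$ by (P1), and the denominator is strictly positive.

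The remaining task is to bound this ratio uniformly over $(s,\rho)\in(0,1)^2$. The strategy is compactness. The relevant arguments lie in a compact set. Consumption satisfies $x^*_q\in[x^a(y_{LP}),x^a(y_{HR})]$; the lower bound comes from (P2) and (P3) via $x^*_q\ge x^a(y_q)\ge x^a(y_{LP})$, and $x^a(y_{LP})>0$ is interior. Earning potentials lie in $[y_{LP},y_{HR}]$, because $y_{HP}$ and $y_{LR}$ move monotonically between these constant endpoints. Gaps satisfy $x^*_{HR}-x^*_q\in[0,x^a(y_{HR})]$, and $\rho\theta\in[0,\theta]$.

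On this compact set, continuity gives the following bounds. First, $u_{xy}$ is bounded above and is bounded below by a positive constant, because it is continuous and strictly positive on a compact set of $x$ bounded away from $0$. Second, $-u_{xx}$ is bounded above and bounded below by a positive constant. Third, $F''\ge0$ is bounded above. The denominator is therefore in $[\min(-u_{xx}),\,\max(-u_{xx})+\theta\max F'']$, and the numerator is in $[\min u_{xy},\max u_{xy}]$. Taking $\underline K=\min u_{xy}/(\max(-u_{xx})+\theta\max F'')$ and $\overline K=\max u_{xy}/\min(-u_{xx})$ gives the claimed bounds.

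I expect the main obstacle to be justifying the strict positive lower bounds on $u_{xy}$ and $-u_{xx}$. The assumptions state only pointwise strict inequalities, so the uniform bounds have to come from restricting to the compact box above, which requires the lower bound $x^*_q\ge x^a(y_{LP})>0$. If some $y$ endpoint is attained only in the limit as $s\to0$ or $s\to1$, I would use the closed interval $[y_{LP},y_{HR}]$ together with continuity of $u_{xy}$ and $u_{xx}$ there. This is legitimate because \Cref{ass:u_fn} holds for all $y_i$.
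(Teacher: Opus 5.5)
Your proposal is correct and essentially reproduces the paper's proof. It applies the implicit function theorem to the $HP$ (resp.\ $LR$) first-order condition with $x^*_{HR}=x^a(y_{HR})$ held fixed, then uses compactness of $[x^a(y_{LP}),x^a(y_{HR})]\times[y_{LP},y_{HR}]$, yielding exactly the paper's constants $\underline K$ and $\overline K$; the paper additionally spells out that the strict ordering $x^*_{LP}<x^*_{HP}<x^*_{HR}$ from \Cref{lem:A1_ordering} keeps this reduced first-order condition the relevant one, which you take directly from \Cref{lem:A2_unique_eqm}.
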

\begin{proof}
We prove this in \Cref{sec:extra_proofs}.
\end{proof}

\begin{lem}\label{lem:A6_endpoint_separation}
There exist $\underline{s}, \bar{s} \in (0,1)$ with $\underline{s} < s^y < \bar{s}$ such that:

(i) for all $s < \bar{s}$: \ $\lim_{\rho \to 1} (U^*_{HP} - U^*_{LR}) < 0$ \ and \ $\lim_{\rho \to 1} (\mu_{HP} - \mu_{LR}) > 0$;

(ii) for all $s > \underline{s}$: \ $\lim_{\rho \to 0} (U^*_{HP} - U^*_{LR}) > 0$ \ and \ $\lim_{\rho \to 0} (\mu_{HP} - \mu_{LR}) < 0$.
\end{lem}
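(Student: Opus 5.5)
The plan is to compute the $\rho\to1$ and $\rho\to0$ limits explicitly from the first-order conditions of \Cref{lem:A2_unique_eqm}. We then compare the $HP$ and $LR$ classes directly for $s\le s^y$ (resp.\ $s\ge s^y$), and finally extend past $s^y$ by continuity in $s$. By the continuity argument in the proof of \Cref{lem:A3_limiting_cons}, the equilibrium profile $\mathbf{x}^*(s,\rho)$ is continuous on $[0,1]^2$. Hence the limits as $\rho\to1$ or $\rho\to0$ are just the equilibrium values at $\rho=1$ or $\rho=0$. The same holds for $U^*_q$ and $\mu_q=u_y(x^*_q,y_q)$, which are continuous functions of the profile.

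\textbf{Part (i), the limit $\rho\to1$.} At $\rho=1$ the $LR$ first-order condition reads $u_x(x_{LR},y_{LR})=0$, so $x^*_{LR}=x^a(y_{LR})$. The $LR$ class's only unfavorable comparison, with $HR$, carries weight $(1-\rho)\theta=0$. Hence $U^*_{LR}=V(y_{LR})$ and, by the Envelope Theorem, $\mu_{LR}=V'(y_{LR})$.

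For $HP$, the condition is $u_x(x_{HP},y_{HP})+\theta F'(x^*_{HR}-x_{HP})=0$. For $s<1$ we have $y_{HP}<y_{HR}$, so $x^a(y_{HP})<x^*_{HR}$. The same argument as for $\bar x$ in \Cref{lem:A4_marginal_value} then gives $x^a(y_{HP})<x^*_{HP}<x^*_{HR}$. Consequently
\begin{align*}
U^*_{HP}=u(x^*_{HP},y_{HP})-\theta F(x^*_{HR}-x^*_{HP})<V(y_{HP}), \qquad \mu_{HP}=u_y(x^*_{HP},y_{HP})>u_y(x^a(y_{HP}),y_{HP})=V'(y_{HP}).
\end{align*}
The second inequality uses \Cref{ass:u_fn}(iii). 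Comparing the $\rho$-independent terms (where the $-\theta F$ above is understood up to the common factor $d$) requires only the class-specific comparison costs.

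Now take $s\le s^y$, so $y_{HP}\le y_{LR}$. Monotonicity of $V$, from \Cref{ass:u_fn}(ii), gives $U^*_{HP}<V(y_{HP})\le V(y_{LR})=U^*_{LR}$. Strict concavity of $V$, from \Cref{ass:u_fn}(iv), gives $\mu_{HP}>V'(y_{HP})\ge V'(y_{LR})=\mu_{LR}$. So both strict inequalities hold for every $s\in(0,s^y]$.

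At $s=s^y$ they are strict. The maps $s\mapsto \lim_{\rho\to1}(U^*_{HP}-U^*_{LR})$ and $s\mapsto\lim_{\rho\to1}(\mu_{HP}-\mu_{LR})$ are the $\rho=1$ equilibrium quantities, so they are continuous in $s$. Hence there is $\bar s>s^y$ such that both inequalities persist on $[s^y,\bar s)$. Combined with the interval $(0,s^y]$, this gives part (i).

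\textbf{Part (ii)} is the mirror image. At $\rho=0$ the $HP$ class has no active comparison, so $U^*_{HP}=V(y_{HP})$ and $\mu_{HP}=V'(y_{HP})$. The $LR$ class is pushed above its ideal point, since $x^a(y_{LR})<x^*_{LR}<x^*_{HR}$ for $s>0$. This gives $U^*_{LR}<V(y_{LR})$ and $\mu_{LR}>V'(y_{LR})$. For $s\ge s^y$ we have $y_{LR}\le y_{HP}$, which delivers both inequalities. Continuity at $s^y$ then yields $\underline s<s^y$, and $\underline s<s^y<\bar s$ by construction.

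The main obstacle is the limit interchange rather than the comparisons themselves. We need $\lim_{\rho\to1}$ to equal the $\rho=1$ value, and we need these limits to be continuous in $s$. Both rest on the joint continuity of $\mathbf{x}^*$ on the closed square, obtained from the uniqueness-plus-compactness argument in the footnote to \Cref{lem:A3_limiting_cons}.

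A second point to check is strictness at $s=s^y$, since the argument depends on it. It rests on $x^*_{HP}\neq x^a(y_{HP})$ (resp.\ $x^*_{LR}\neq x^a(y_{LR})$), which holds because $s^y\in(0,1)$ keeps $y_{HP},y_{LR}$ strictly below $y_{HR}$. This strictness is what lets $\bar s$ lie strictly above, and $\underline s$ strictly below, $s^y$.
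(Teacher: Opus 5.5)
Your proposal is correct and follows the paper's proof essentially step by step. You compute the $\rho\to1$ (resp.\ $\rho\to0$) limits from the first-order conditions. You then compare the $HP$ and $LR$ classes on $(0,s^y]$ (resp.\ $[s^y,1)$), using that $V$ is strictly increasing and strictly concave and that the class still bearing the $HR$ comparison over-consumes. Finally you extend past $s^y$ by continuity in $s$. Evaluating at $\rho=1$ via joint continuity on $[0,1]^2$ is the paper's continuity-of-the-root argument for $\bar{x}_{HP}(s)$ in different words.

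The only blemish is the parenthetical about ``the common factor $d$''. It is spurious and can be dropped, since $\theta = d(1-\omega)$ already absorbs $d$ and own-class comparisons contribute $F(0)=0$.
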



\begin{proof}
First, recall that $s^y$ is defined such that $y_{HP}(s^y) = y_{LR}(s^y)$, and $y_{HP}(s) \leq y_{LR}(s) \iff s \leq s^y$. Second, that $V(y) := \max_{x \geq 0} u(x,y)$ is strictly increasing and concave (by \Cref{ass:u_fn}(iv)). We prove part~(i); part~(ii) follows by the mirror argument described at the end. 

By \Cref{lem:A1_ordering}, each of the $HP$ and $LR$ classes makes an unfavorable social comparison with the $HR$ class and with no other class. So equilibrium utilities are
\begin{align*}
    U^*_{HP} &= u(x^*_{HP} , y_{HP}) - \rho\theta F(x^*_{HR} - x^*_{HP}), \quad \text{ and } \quad 
    U^*_{LR} &= u(x^*_{LR} , y_{LR}) - (1-\rho)\theta F(x^*_{HR} - x^*_{LR}).
\end{align*}

\paragraph{Step 1. The utility comparison.} As $\rho \to 1$, $(1-\rho)\theta \to 0$, while $F(x^*_{HR} - x^*_{LR})$ remains bounded. So $x^*_{LR} \to x^a(y_{LR})$, and hence $\lim_{\rho\to1} U^*_{LR} = V(y_{LR})$.
In contrast, $x^*_{HP}$ does not converge to $x^a(y_{HP})$ as $\rho \to 1$, because the social comparison cost does not vanish. So
\begin{align*}
    \lim_{\rho \to 1} U^*_{HP} = \underbrace{u\big( \bar{x}_{HP}(s) , y_{HP} \big)}_{< \ V(y_{HP})} - \underbrace{\theta F\big( x^*_{HR} - \bar{x}_{HP}(s) \big)}_{> \ 0} \ < \ V(y_{HP}),
\end{align*}
where $\bar{x}_{HP}(s)$ is defined (in \Cref{sec:extra_proofs}) as the unique root of $u_x(x , y_{HP}) + \theta F'(x^*_{HR} - x)$.
The first term is less than $V(y_{HP})$ because $x^a(y_{HP})$ is the \emph{unique} maximizer of $u(\cdot , y_{HP})$ and $\bar{x}_{HP}(s) \neq x^a(y_{HP})$. The second is strictly positive because $\bar{x}_{HP}(s) < x^*_{HR}$ and $F(\cdot)$ is strictly increasing with $F(0)=0$. 

Finally, note that $V(y_{HP}) - V(y_{LR}) \leq 0$ for all $s \leq s^y$, because $V(\cdot)$ is strictly increasing and $y_{HP} \leq y_{LR}$ exactly when $s \leq s^y$. So
\begin{align*}
    \lim_{\rho \to 1}(U^*_{HP} - U^*_{LR}) \ < \ V(y_{HP}) - V(y_{LR}) \ \leq \ 0 \qquad \text{for all } s \leq s^y.
\end{align*}

\paragraph{Step 2. The marginal value comparison.} The same two limits give
\begin{align*}
    \lim_{\rho\to1} \mu_{HP} = u_y\big( \bar{x}_{HP}(s) , y_{HP} \big), \qquad \lim_{\rho\to1} \mu_{LR} = u_y\big( x^a(y_{LR}) , y_{LR} \big) = V'(y_{LR}),
\end{align*}
the last equality by the Envelope Theorem. Writing $\Delta(s)$ for the difference, and splitting it into an over-consumption term and an earning potential term:
\begin{align}\label{eq:A7_split}
    \Delta(s) = \underbrace{\Big[ u_y\big(\bar{x}_{HP}(s) , y_{HP}\big) - V'(y_{HP}) \Big]}_{> \, 0}
    \ + \ \underbrace{\Big[ V'(y_{HP}) - V'(y_{LR}) \Big]}_{\geq \, 0 \ \text{ for } s \, \leq \, s^y}.
\end{align}
The first bracket is strictly positive: $V'(y_{HP}) = u_y\big(x^a(y_{HP}) , y_{HP}\big)$ by the Envelope Theorem, $\bar{x}_{HP}(s) > x^a(y_{HP})$, and $u_y(\cdot , y_{HP})$ is strictly increasing by \Cref{ass:u_fn}(iii) -- legitimate as $\bar{x}_{HP}(s) > x^a(y_{HP}) > 0$. The second is non-negative for $s \leq s^y$, because $V(\cdot)$ is strictly concave and $y_{HP} \leq y_{LR}$ exactly when $s \leq s^y$. So $\Delta(s) > 0$ for every $s \leq s^y$, again covering that whole range directly.

The two brackets pull the same way only on this range, and a threshold is genuinely needed. For $s > s^y$ the second bracket turns negative, and it eventually dominates: as $s\to1$ we have $y_{HP} \to y_{HR}$ and hence $\bar{x}_{HP}(s) \to x^*_{HR}$, so $\Delta(s) \to V'(y_{HR}) - V'(y_{LP}) < 0$.

\paragraph{Step 3. Extension past $s^y$.} Both $\lim_{\rho \to 1}(U^*_{HP} - U^*_{LR})$ and $\lim_{\rho \to 1}(\mu_{HP} - \mu_{LR})$ are continuous in $s$. 
So there must exist some $\bar{s} > s^y$ such that both $\lim_{\rho \to 1}(U^*_{HP} - U^*_{LR}) < 0$ and $\lim_{\rho \to 1}(\mu_{HP} - \mu_{LR}) < 0$ for all $s < \bar{s}$, which is part~(i).

\paragraph{Part (ii).} Follows the same argument, mirrored: interchange the roles of the $HP$ and $LR$ classes, replace $\rho$ with $1-\rho$, and note that the comparison of earning potentials reverses, as $y_{LR} \leq y_{HP}$ exactly when $s \geq s^y$. 
\end{proof}

\subsection{Proof of \Cref{rem:unique_eqm}}
Uniqueness is \Cref{lem:A0_uniqueness}. Characterization follows from \Cref{lem:A2_unique_eqm}. \hfill \qed  

\subsection{Proof of Remarks \ref{rem:indiv_comp_stat_x_s} and \ref{rem:indiv_comp_stat_x_rho}}
Throughout, we use the equilibrium first order conditions from \Cref{lem:A2_unique_eqm}, and write $u_{xx,q}$ and $u_{xy,q}$ for the second derivatives of $u(\cdot,\cdot)$ evaluated at $(x^*_q , y_q)$.

\textbf{HR class.} $x_{HR}^*$ is the unique solution to $u_x(x_{HR} , y_{HR}) = 0$. None of the terms depend on $s$ or on $\rho$. So $\frac{d x_{HR}^*}{ds} = 0$ and $\frac{d x_{HR}^*}{d\rho} = 0$.
\textbf{HP and LR classes.} $x_{HP}^*$ and $x_{LR}^*$ are, respectively, the unique solutions to:
\begin{align}
    \Phi_{HP} &:= u_x(x_{HP} , y_{HP}) + \theta\rho F'(x_{HR}^* - x_{HP}) = 0, \\
    \Phi_{LR} &:= u_x(x_{LR} , y_{LR}) + \theta(1 -\rho) F'(x_{HR}^* - x_{LR}) = 0.
\end{align}
As $x^*_{HR}$ is invariant in both $s$ and $\rho$, each of these depends on $s$ only through its own earning potential, and on $\rho$ only through its own comparison weight. So, respectively, we have:
\begin{align*}
    \frac{\partial \Phi_{HP}}{\partial x_{HP}} &= u_{xx,HP} - \theta\rho\, F''_{HP,HR} &< 0
    \quad &, \quad
    \frac{\partial \Phi_{LR}}{\partial x_{LR}} = u_{xx,LR} - \theta(1-\rho)\, F''_{LR,HR} &< 0 \\
    \frac{\partial \Phi_{HP}}{\partial s} &= u_{xy,HP} \ \frac{dy_{HP}}{ds} &> 0
    \quad &, \quad
    \frac{\partial \Phi_{LR}}{\partial s} = u_{xy,LR} \ \frac{dy_{LR}}{ds} &< 0 \\
    \frac{\partial \Phi_{HP}}{\partial \rho} &= \theta\, F'(x^*_{HR} - x^*_{HP}) &> 0
    \quad &, \quad
    \frac{\partial \Phi_{LR}}{\partial \rho} = -\theta\, F'(x^*_{HR} - x^*_{LR}) &< 0
\end{align*}
where the signs follow from \Cref{ass:u_fn}(i) and (iii) (evaluated at $x^*_q > 0$, by (P1)), the strict convexity of $F(\cdot)$ together with $x^*_{HR} > x^*_{HP} , x^*_{LR}$ (\Cref{lem:A1_ordering}), and $\frac{dy_{HP}}{ds} = - \frac{dy_{LR}}{ds} > 0$.
Then by the Implicit Function Theorem, we have:
\begin{align*}
    \frac{dx^*_{HP}}{ds} = - \frac{\partial \Phi_{HP} / \partial s}{\partial\Phi_{HP} / \partial x_{HP}} > 0
    \qquad \text{and} \qquad
    \frac{dx^*_{HP}}{d\rho} = - \frac{\partial \Phi_{HP} / \partial \rho}{\partial\Phi_{HP} / \partial x_{HP}} > 0 \\
    \frac{dx^*_{LR}}{ds} = - \frac{\partial \Phi_{LR} / \partial s}{\partial\Phi_{LR} / \partial x_{LR}} < 0
    \qquad \text{and} \qquad
    \frac{dx^*_{LR}}{d\rho} = - \frac{\partial \Phi_{LR} / \partial \rho}{\partial\Phi_{LR} / \partial x_{LR}} < 0
\end{align*}
\textbf{LP class.} Using a similar argument, it is straightforward to show that $\frac{d \Phi_{LP}}{d x_{LP}} < 0$. So, by the Implicit Function Theorem, $\text{sign}\big[\frac{dx^*_{LP}}{ds}\big] = \text{sign}\big[\frac{\partial \Phi_{LP}}{\partial s}\big]$ and $\text{sign}\big[\frac{dx^*_{LP}}{d\rho}\big] = \text{sign}\big[\frac{\partial \Phi_{LP}}{\partial \rho}\big]$. The expressions in part (iv) of each of Remarks \ref{rem:indiv_comp_stat_x_s} and \ref{rem:indiv_comp_stat_x_rho} are essentially those of $\frac{\partial \Phi_{LP}}{\partial s}$ and $\frac{\partial \Phi_{LP}}{\partial \rho}$, respectively. We provide full workings in \Cref{sec:extra_proofs}. \hfill \qed

\subsection{Proof of \Cref{prop:convex_comps_result}}
Given \Cref{lem:A1_ordering}, we can write equilibrium utilities as follows: $U^*_{HR} = u(x^*_{HR} , y_{HR})$ and
\begin{align*}
    U^*_{HP} &= u(x^*_{HP} , y_{HP}) - \rho \theta F(x_{HR}^* - x_{HP}^*) \quad , \quad 
    U^*_{LR} = u(x^*_{LR} , y_{LR}) - (1-\rho)\theta F(x_{HR}^* - x_{LR}^*) \\
    U^*_{LP} &= u(x^*_{LP} , y_{LP}) - (1-\rho)\theta F(x_{HP}^* - x_{LP}^*) - \rho\theta F(x_{LR}^* - x_{LP}^*) 
\end{align*}
Then recall that, by assumption, only $y_{HP}$ and $y_{LR}$ are directly functions of $s$, and additionally $\frac{d y_{HP}}{d s} = - \frac{d y_{LR}}{d s}$. So, applying this equality and the Envelope Theorem, we have $\frac{d U_{HR}^*}{ds} = 0$ and, writing $\mu_q := u_y(x^*_q , y_q)$ for the marginal value of earning potential (as in \Cref{lem:A4_marginal_value}):
\begin{align*}
    \frac{d U_{HP}^*}{ds} &= \mu_{HP} \ \frac{d y_{HP}}{d s}
    \qquad , \qquad
    \frac{d U_{LR}^*}{ds} = - \, \mu_{LR} \ \frac{d y_{HP}}{d s} \\
    \frac{d U_{LP}^*}{ds} &= - (1-\rho)\theta F'(x_{HP}^* - x_{LP}^*) \frac{d x_{HP}^*}{d y_{HP}} \frac{d y_{HP}}{d s} + \rho \theta F'(x_{LR}^* - x_{LP}^*) \frac{d x_{LR}^*}{d y_{LR}} \frac{d y_{HP}}{d s}
\end{align*}

\paragraph{Decomposing welfare changes.} The impact of a change in economic mobility on social welfare decomposes to:
\begin{align}
    \frac{d W}{d s} = \frac{d W}{d U_{HR}} \frac{d U_{HR}}{d s} + \frac{d W}{d U_{HP}} \frac{d U_{HP}}{d s} + \frac{d W}{d U_{LR}} \frac{d U_{LR}}{d s} + \frac{d W}{d U_{LP}} \frac{d U_{LP}}{d s}.
\end{align}
Note that all agents in a class $q$ have the same equilibrium utility, so $\frac{d W}{d U_{q}} := \ \sum_{i \in q} \frac{\partial W}{\partial U_i} \in (0,\infty)$. The Pigou-Dalton Principle carries over to class-level utilities, so that $U^*_q < U^*_{q'} \implies \frac{d W}{d U_q} \geq \frac{d W}{d U_{q'}}$.
Next, we substitute in the expressions for $\frac{d U_q}{d s}$ for $q \in \{HR, HP, LR, LP\}$. Hence we have:
\begin{align}\label{eq:convex_utilitarian_derivative}
    \frac{d W}{d s} &= \overbrace{\left[ \frac{d W}{d U_{HP}} \ \mu_{HP}
    - \frac{d W}{d U_{LR}} \ \mu_{LR}   \right] }^{\textbf{(A)}} \frac{d y_{HP}}{d s} \notag \\
    &\qquad 
    + \frac{d W}{d U_{LP}}  \underbrace{\left[  \rho F'(x_{LR}^* - x_{LP}^*) \frac{d x_{LR}^*}{d y_{LR}} - (1-\rho) F'(x_{HP}^* - x_{LP}^*) \frac{d x_{HP}^*}{d y_{HP}} \right]}_{\textbf{(B)}} \theta \frac{d y_{HP}}{d s}
\end{align}

\paragraph{Characterizing the expression.} We now examine each part of this expression in turn. 

\noindent \textbf{Part (A).} \textbf{For $s$ sufficiently close to $0$}, we have: (i) $\mu_{HP} > \mu_{LR}$, by \Cref{lem:A4_marginal_value}; and (ii) $\frac{d W}{d U_{HP}} \geq \frac{d W}{d U_{LR}}$. All four quantities are strictly positive -- the $\mu_q$ by \Cref{ass:u_fn}(ii), and the welfare weights by assumption -- so
$\frac{d W}{d U_{HP}} \mu_{HP} \geq \frac{d W}{d U_{LR}} \mu_{HP} > \frac{d W}{d U_{LR}} \mu_{LR}$, and hence $\textbf{(A)} > 0$.

Claim (ii) follows from the fact that for $s$ sufficiently close to $0$, $U_{HP}^* < U_{LR}^*$, coupled with the Pigou-Dalton principle. To see that $U_{HP}^* < U_{LR}^*$, recall from the proof of \Cref{lem:A6_endpoint_separation} that $V(y) := \max_{x\geq0}u(x,y)$ is strictly increasing. As $s\to0$ we have $y_{LR} \to y_{HR}$ and $x^*_{LR} \to x^*_{HR} = x^a(y_{HR})$ (\Cref{lem:A3_limiting_cons}), so the consumption of the $LR$ class converges to their ideal level while their social comparison cost vanishes: $U^*_{LR} \to V(y_{HR})$. Meanwhile $y_{HP} \to y_{LP}$, and the limiting value of $x^*_{HP}$ is strictly above $x^a(y_{LP})$ (shown in the proof of \Cref{lem:A4_marginal_value}). As $x^a(y_{LP})$ is the \emph{unique} maximizer of $u(\cdot , y_{LP})$, and social comparison costs are non-negative, $U^*_{HP}$ therefore tends to a limit strictly below $V(y_{LP})$. And $V(y_{LP}) < V(y_{HR})$, so $U^*_{HP} < U^*_{LR}$ for $s$ sufficiently close to $0$.
\textbf{For $s$ sufficiently close to $1$}, the argument proceeds in an identical fashion, 
yielding $\textbf{(A)} < 0$ for $s$ sufficiently close to $1$.

\paragraph{Part (B).} 
As $s\to1$, we have $x_{HP}^* \to x_{HR}^*$ and $x_{LR}^* \to x_{LP}^*$ (by \Cref{lem:A3_limiting_cons}). Hence $F'(x^*_{LR} - x^*_{LP}) \to F'(0) = 0$ and $F'(x^*_{HP} - x^*_{LP}) \to F'(x^*_{HR} - x^*_{LP}) > 0$ as $s\to1$. The bounds on $\frac{d x_{LR}^*}{d y_{LR}}$ and $\frac{d x_{HP}^*}{d y_{HP}}$ from \Cref{lem:A5_bounds} then guarantee that the first term in \textbf{(B)} converges to $0$ as $s \to 1$, and that the second term in \textbf{(B)} remains strictly positive as $s\to1$. Therefore for $s$ sufficiently close to $1$, we have \textbf{(B)}$<0$. 
By an identical argument, we have $\textbf{(B)} > 0$ for $s$ sufficiently close to $0$.

\paragraph{Putting things together.} For $s$ sufficiently close to $1$, we have $\textbf{(A)}<0$ and $\textbf{(B)}<0$. Then notice that all other terms in \Cref{eq:convex_utilitarian_derivative} are positive by assumption: $\frac{d y_{HP}}{d s}> 0, \frac{dW}{d U_{LP}} > 0, \theta>0$. Therefore we have $\frac{dW}{ds}< 0$ for $s$ sufficiently close to $1$.
An identical argument then shows that we have $\frac{dW}{ds} > 0$ for $s$ sufficiently close to $0$. This completes the proof. \hfill \qed

\subsection{Proof of \Cref{rem:income_inequality}}
Recall that by assumption: $y_{HR} > y_{HP}, y_{LR} > y_{LP}$, that $y_{HR}, y_{LP}$ do not change with $s$, and $\frac{d y_{HP}}{d s} = - \frac{d y_{LR}}{d s} > 0$. Additionally, there exists a unique $s^y$ such that $y_{HP} \geq y_{LR}$ iff $s \geq s^y$ (and $y_{HP} = y_{LR}$ iff $s = s^y$). 

It therefore follows that for any $s, s'$ with $s' > s \geq s^y$, $y(s')$ majorizes $y(s)$. And hence for any Schur-convex inequality measure $I(\cdot)$, $I(y(s')) \geq I(y(s))$ \citep{marshall1979inequalities}. The same argument applies for $s, s' \leq s^y$, with the inequality reversed. \hfill \qed

\subsection{Proof of \Cref{prop:welfare_rho}}
\paragraph{Step 1: how individual utilities change with integration.} Taking the characterization of equilibrium utilities from the proof to \Cref{prop:convex_comps_result} and differentiating with respect to $\rho$, we have $\frac{d U_{HR}^*}{d \rho} = 0$ and
\begin{align*}
    \frac{dU^*_{HP}}{d\rho} = -\theta F(x_{HR}^* - x^*_{HP}) 
    \quad , \quad
    \frac{dU^*_{LR}}{d\rho} = \theta F(x_{HR}^* - x^*_{LR}) .
\end{align*}

The $LP$ class are different, because the two classes it looks up at both adjust their own consumption when $\rho$ changes. While the Envelope Theorem again removes the direct term in $dx^*_{LP}/d\rho$, it does not remove these. Writing $G_{HP} := x^*_{HP} - x^*_{LP}$ and $G_{LR} := x^*_{LR} - x^*_{LP}$ for the two gaps the $LP$ class looks up at,
\begin{align*}
    \frac{dU^*_{LP}}{d\rho} = \underbrace{\theta\Big[ F(G_{HP}) - F(G_{LR}) \Big]}_{\text{direct effect}}
    \ - \ \underbrace{\theta\left[ (1-\rho) F'(G_{HP}) \frac{dx^*_{HP}}{d\rho} + \rho F'(G_{LR}) \frac{dx^*_{LR}}{d\rho} \right]}_{\text{indirect effect}} .
\end{align*}
This is the same distinction drawn in the proof to \Cref{prop:convex_comps_result}, where the corresponding indirect terms in $dx^*_{HP}/ds$ and $dx^*_{LR}/ds$ are carried explicitly.

\paragraph{Aside.} Differentiating the first order conditions of the $HP$ and $LR$ classes with respect to $\rho$, exactly as in the proof of Remarks \ref{rem:indiv_comp_stat_x_s} and \ref{rem:indiv_comp_stat_x_rho}, we get:
\begin{align}\label{eq:rho_response}
    \frac{dx^*_{HP}}{d\rho} = \frac{\theta F'(x^*_{HR} - x^*_{HP})}{-u_{xx,HP} + \theta\rho \, F''_{HP,HR}}
    \ , \qquad
    \frac{dx^*_{LR}}{d\rho} = \frac{-\theta F'(x^*_{HR} - x^*_{LR})}{-u_{xx,LR} + \theta(1-\rho) \, F''_{LR,HR}}.
\end{align}
As $y_q$ and $x_q^*$ for all $q$ lie in compact intervals, and $u_{xx}<0$ is continuous, there exists some $\zeta>0$ such that $-u_{xx}(x_q^*, y_q)\geq \zeta$. Moreover, $F'$ is bounded on the relevant compact interval, and $F'' \geq 0$. So the partial derivatives in \Cref{eq:rho_response} are bounded in magnitude by some constant. Then the uniform convergence established in \Cref{lem:A3_limiting_cons}, coupled with $F'(0) = 0$ gives 
\begin{align}\label{eq:marginal_cons_rho_limit}
    \sup_{\rho\in[0,1]}
 \left|\frac{\partial x^*_{LR}}{\partial\rho}\right|\to0
 \quad\text{as }s\to0
 \quad \text{ and } \quad 
 \sup_{\rho\in[0,1]}
 \left|\frac{\partial x^*_{HP}}{\partial\rho}\right|\to0
 \quad\text{as }s\to1.
\end{align}

\paragraph{Step 1a: when mobility is low.} Let $s \to 0$. By \Cref{lem:A3_limiting_cons}, we have $x^*_{LR} \to x^*_{HR}$ and $x^*_{HP} \to x^*_{LP}$ uniformly in $\rho$, with $x^*_{HR} > x^*_{LP}$ for all $(s,\rho) \in [0,1]^2$. This has several implication: \textbf{(A)} $G_{HP} \to 0$, which implies that \textbf{(B)} $F(G_{HP}) \to F(0) = 0$ and \textbf{(C)} $F'(G_{HP}) \to F'(0) = 0$; all uniformly in $\rho$. Additionally \textbf{(D)} $G_{LR} := x_{LR}^* - x_{LP}^* \to \kappa_1$ for some $\kappa_1 > 0$ and \textbf{(E)} $x^*_{HR} - x_{HP}^* \to \kappa_2$, for some $\kappa_2 > 0$, both uniformly in $\rho$.

Observation \textbf{(E)} yields $\lim_{s \to 0} d U^*_{HP} / d \rho = - \kappa_3$ for some $\kappa_3 > 0$, and observation \textbf{(D)} yields $\lim_{s \to 0} d U^*_{LR} / d \rho = 0$, both uniformly in $\rho$. Then combining observation \textbf{(C)} and the first expression in \Cref{eq:marginal_cons_rho_limit} yields that the `indirect effect' in $d U^*_{LP} / d \rho$ converges to $0$ as $s \to 0$, and does so uniformly in $\rho$. Then combining observations \textbf{(A)} and \textbf{(D)} yields that the `direct effect' has $\lim_{s \to 0} d U^*_{LP} / d \rho \to -\kappa_4$ for some $\kappa_4 > 0$, uniformly in $\rho$.



\paragraph{Step 1b: when mobility is high.} Let $s \to 1$. Everything is mirrored, and the argument of step 1a applies after interchanging the $HP$ and $LR$ classes. 

\paragraph{Step 2: putting things together.} Let $w_q := \frac{\partial W}{\partial U_q}$. By assumption, there exist $0<\underline{m}\leq \overline{m} < \infty$ such that $w_q \in [\underline{m}, \overline{m}]$ for all $q$. Then we have:
\begin{align}\label{eq:welfare_deriv_rho}
    \frac{\partial W}{\partial \rho} = \sum_{q \in Q} w_q \frac{\partial U^*_{q}}{\partial \rho}.
\end{align}

\paragraph{Step 2a: when mobility is low.} From step 1a, we have:
\begin{align*}
    \text{as } s \to 0: \frac{\partial U^*_{LR}}{\partial \rho} = 0 \text{ and } \frac{\partial U^*_{HP}}{\partial \rho} < 0 \text{ and } \frac{\partial U^*_{LP}}{\partial \rho} < 0
\end{align*}
uniformly for every $\rho \in [0,1]$ So the overall expression $\lim_{s \to 0} \frac{\partial W}{\partial \rho}$ is negative for every $\rho \in [0,1]$. Hence for $s$ close enough to $0$, $\frac{d W}{d \rho} < 0$.

\paragraph{Step 2b: when mobility is high.} From step 1b, we have:
\begin{align*}
    \text{as } s \to 1: \frac{\partial U^*_{LR}}{\partial \rho} > 0 \text{ and } \frac{\partial U^*_{HP}}{\partial \rho} = 0 \text{ and } \frac{\partial U^*_{LP}}{\partial \rho} > 0
\end{align*}
uniformly for every $\rho \in [0,1]$ So the overall expression $\lim_{s \to 0} \frac{\partial W}{\partial \rho}$ is negative for every $\rho \in [0,1]$. Hence for $s$ close enough to $0$, $\frac{d W}{d \rho} < 0$. 
\hfill \qed

\subsection{Proof of \Cref{prop:welfare_intermediate_mobility_general}}
Throughout, fix $s \in (0,1)$ and consider the behavior as $\rho \to 1$ (for part~(i)) or $\rho \to 0$ (for part~(ii)).

\paragraph{Step 1: how welfare changes with mobility.} 
We use the same decomposition of $dW/ds$ as in the proof of \Cref{prop:convex_comps_result}:
\begin{align*}
    \frac{d W}{d s} &= \overbrace{\left[ \frac{d W}{d U_{HP}} \ \mu_{HP}
    - \frac{d W}{d U_{LR}} \ \mu_{LR} \right] }^{\textbf{(A)}} \frac{d y_{HP}}{d s} \\
    &\qquad
    + \frac{d W}{d U_{LP}}  \underbrace{\left[  \rho F'(x_{LR}^* - x_{LP}^*) \frac{d x_{LR}^*}{d y_{LR}} - (1-\rho) F'(x_{HP}^* - x_{LP}^*) \frac{d x_{HP}^*}{d y_{HP}} \right]}_{\textbf{(B)}} \theta \frac{d y_{HP}}{d s}
\end{align*}
where $\mu_q := u_y(x^*_q , y_q)$ (as in \Cref{lem:A4_marginal_value}).

\paragraph{Step 2: characterizing the expression.} The remainder of the proof amounts to characterizing the sign of this expression. We now examine each part in turn. Note that $\underline{s}$ and $\bar{s}$ are the thresholds from \Cref{lem:A6_endpoint_separation}.

\paragraph{Part (A).} \textbf{Case $\rho \to 1$, with $s < \bar{s}$.}
By \Cref{lem:A6_endpoint_separation}(i), $\lim_{\rho\to1}(\mu_{HP} - \mu_{LR}) > 0$, so $\mu_{HP} > \mu_{LR}$ for $\rho$ sufficiently close to $1$; and, from the same part, $U^*_{HP} < U^*_{LR}$ for $\rho$ sufficiently close to $1$, so the Pigou-Dalton Principle gives $\frac{dW}{dU_{HP}} \geq \frac{dW}{dU_{LR}}$. All four quantities are strictly positive, the $\mu_q$ by \Cref{ass:u_fn}(ii), and the welfare weights by assumption. So:
    $\frac{dW}{dU_{HP}} \ \mu_{HP} \ \geq \ \frac{dW}{dU_{LR}} \ \mu_{HP} \ 
    > \ \frac{dW}{dU_{LR}} \ \mu_{LR}$,
and hence $\textbf{(A)} > 0$. 

\smallskip
\noindent\textbf{Case $\rho \to 0$, with $s > \underline{s}$.} The argument is identical, mirrored. By \Cref{lem:A6_endpoint_separation}(ii) we now have $\mu_{HP} < \mu_{LR}$ and $U^*_{HP} > U^*_{LR}$, so Pigou-Dalton gives $\frac{dW}{dU_{HP}} \leq \frac{dW}{dU_{LR}}$. Combining these gives $\textbf{(A)} < 0$.

\medskip
\paragraph{Part (B).} Fix $s \in (0,1)$. We show that, as $\rho \to 1$, the second term of \textbf{(B)} vanishes while the first stays bounded away from zero; the case $\rho \to 0$ is the mirror image. 

\smallskip
\noindent\textbf{Case $\rho \to 1$: the vanishing term.} Its coefficient is $(1-\rho) \to 0$, and the two factors it multiplies are bounded. First, $d x^*_{HP} / d y_{HP} \leq \overline{K}$ by \Cref{lem:A5_bounds}. Second, $0 < x^*_{HP} - x^*_{LP} < x^*_{HR}$ -- by \Cref{lem:A1_ordering} and $x^*_{LP} > 0$. So $F'(x^*_{HP} - x^*_{LP}) < F'(x^*_{HR})$, as $F'(\cdot)$ is increasing. The second term of \textbf{(B)} therefore tends to zero.

\smallskip
\noindent\textbf{Case $\rho \to 1$: the surviving term.} Here we need the gap $x^*_{LR} - x^*_{LP}$ to stay away from zero, so we take the limit of each consumption level in turn. From the limits recorded in \Cref{sec:extra_proofs}, $x^*_{LR} \to x^a(y_{LR})$. For the $LP$ class, recall their first order condition,
\begin{align*}
    u_x(x , y_{LP}) + (1-\rho)\theta F'(x^*_{HP} - x) + \rho\theta F'(x^*_{LR} - x) = 0 .
\end{align*}
As $\rho \to 1$ the weight on the $HP$ class vanishes -- and the $F'$ term it multiplies is bounded, as we have just seen -- while the weight on the $LR$ class tends to $\theta$. So the left hand side converges, uniformly in $x$, to
\begin{align*}
    \phi_{LP}(x) := u_x(x , y_{LP}) + \theta F'\big( \max\{ x^a(y_{LR}) - x , \, 0 \} \big) ,
\end{align*}
where the $\max$ operator is now written explicitly because we shall evaluate $\phi_{LP}$ at a point where the gap closes. By the same continuity-of-the-root argument used in \Cref{lem:A3_limiting_cons}, $x^*_{LP}$ converges to the root of $\phi_{LP}$, which we write $\hat{x}_{LP}$.

It remains to place $\hat{x}_{LP}$ strictly below $x^a(y_{LR})$. Note first that $\phi_{LP}$ is strictly decreasing: $u_x(\cdot , y_{LP})$ is strictly decreasing by \Cref{ass:u_fn}(i), and the comparison term is non-increasing in $x$. Now evaluate at $x = x^a(y_{LR})$, where the comparison term vanishes as $F'(0)=0$:
\begin{align*}
    \phi_{LP}\big( x^a(y_{LR}) \big) \ = \ u_x\big( x^a(y_{LR}) , y_{LP} \big) \ < \ u_x\big( x^a(y_{LR}) , y_{LR} \big) \ = \ 0 ,
\end{align*}
with the inequality by \Cref{ass:u_fn}(iii) with $y_{LP} < y_{LR}$, and the final equality by the definition of $x^a(y_{LR})$. Since $\phi_{LP}(0) \geq u_x(0 , y_{LP}) > 0$ (that $u_x(0,y)>0$ is implied by \Cref{ass:u_fn}(i)), this one evaluation does two jobs at once: it shows the root exists and is unique, and it locates it, giving $\hat{x}_{LP} \in \big( 0 , x^a(y_{LR}) \big)$.

So $x^*_{LR} - x^*_{LP} \to x^a(y_{LR}) - \hat{x}_{LP} > 0$, and $F'(x^*_{LR} - x^*_{LP})$ converges to a strictly positive limit.\footnote{This is where the order of limits matters. The gap $x^*_{LR} - x^*_{LP}$ is strictly positive for each fixed $s \in (0,1)$, but not uniformly in $s$: it vanishes as $s \to 1$ (\Cref{lem:A3_limiting_cons}). Fixing $s$ first, and only then sending $\rho$ to its limit, is what makes the argument work.} Together with $\rho \to 1$ and $d x^*_{LR} / d y_{LR} \geq \underline{K} > 0$, the first term of \textbf{(B)} is bounded away from zero. Combining the two halves, $\textbf{(B)} > 0$ for $\rho$ sufficiently close to $1$.

\smallskip
\noindent\textbf{Case $\rho \to 0$.} The mirror image: interchange the $HP$ and $LR$ classes and replace $\rho$ by $1-\rho$. It is now $x^*_{HP} \to x^a(y_{HP})$ that pins down the surviving comparison, and the argument above delivers $\lim x^*_{LP} < x^a(y_{HP})$ -- using $y_{LP} < y_{HP}$ in place of $y_{LP} < y_{LR}$, both of which are strict for every $s \in (0,1)$. So the roles of the two terms of \textbf{(B)} are exchanged: the first vanishes and the second is bounded away from zero, giving $\textbf{(B)} < 0$ for $\rho$ sufficiently close to $0$.

\paragraph{Putting things together.} For $\rho \to 1$, with $s < \bar{s}$, we have $\textbf{(A)}>0$ and $\textbf{(B)}>0$. All other terms in \Cref{eq:convex_utilitarian_derivative} are positive by assumption: $\frac{d y_{HP}}{d s}> 0, \frac{dW}{d U_{LP}} > 0, \theta>0$. Therefore we have $\frac{dW}{ds}> 0$. An identical argument then shows that for $\rho \to 0$, with $s > \underline{s}$, we have $\textbf{(A)}<0$ and $\textbf{(B)}<0$ and hence $\frac{dW}{ds} < 0$. As \Cref{lem:A6_endpoint_separation} delivers $\underline{s} < s^y < \bar{s}$, this completes the proof. \hfill \qed

\newpage
\begin{center}
    \section*{Online Appendix}\label{OA}
\end{center}
\setcounter{figure}{0} \renewcommand{\thefigure}{OA\arabic{figure}}
\setcounter{table}{0} \renewcommand{\thetable}{OA\arabic{table}}

\section{Supplementary Proofs}\label{sec:extra_proofs}
\subsection{Proof of \Cref{lem:A5_bounds}}
\begin{lem*}
    There exist $\underline{K}>0$ and $\overline{K}< \infty$ such that, for all $s \in (0,1)$ and all $\rho \in (0,1)$: $x^*_{HP}$ is continuously differentiable in $y_{HP}$ with $\frac{d x_{HP}^*}{d y_{HP}} \in [\underline{K}, \overline{K}]$, and $x^*_{LR}$ is continuously differentiable in $y_{LR}$ with $\frac{d x_{LR}^*}{d y_{LR}} \in [\underline{K}, \overline{K}]$.
\end{lem*}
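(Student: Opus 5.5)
The plan is to apply the Implicit Function Theorem to the first-order conditions of \Cref{lem:A2_unique_eqm}, and then bound the resulting ratio using compactness. Since $x^*_{HR}$ depends on neither $s$ nor $\rho$, the $HP$ condition $\Phi_{HP}(x,y) := u_x(x,y) + \rho\theta F'(x^*_{HR}-x) = 0$ pins down $x^*_{HP}$ as a function of $y_{HP}$ alone, with $\rho$ a parameter.

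\textbf{Step 1: differentiability.} The partial derivative is
\[
\partial_x\Phi_{HP} = u_{xx}(x,y) - \rho\theta F''(x^*_{HR}-x).
\]
This is strictly negative by \Cref{ass:u_fn}(i) and $F''\geq 0$. It is continuous because $u$ is $C^2$ and $F$ is smooth. The Implicit Function Theorem then gives continuous differentiability, with
\[
\frac{dx^*_{HP}}{dy_{HP}} = \frac{u_{xy}(x^*_{HP},y_{HP})}{-u_{xx}(x^*_{HP},y_{HP}) + \rho\theta F''(x^*_{HR}-x^*_{HP})}.
\]
The $LR$ case is identical, with $\rho$ replaced by $1-\rho$.

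\textbf{Step 2: compactness.} The bounds come from restricting to a compact set. As $(s,\rho)$ range over $[0,1]^2$, $y_{HP}$ lies in $[y_{LP},y_{HR}]$. By \Cref{lem:A1_ordering} and the limits in \Cref{lem:A3_limiting_cons}, $x^*_{HP}$ lies in $[\underline{x}, x^a(y_{HR})]$. Here $\underline{x}>0$ is a uniform lower bound, for instance $x^a(y_{LP})$, which holds by (P2)–(P3). The gap $x^*_{HR}-x^*_{HP}$ lies in $[0,x^a(y_{HR})]$. On this compact box the following hold.
\begin{itemize}
\item[(a)] Since $u_{xx}<0$ is continuous, $-u_{xx}\geq\zeta>0$, and $-u_{xx}$ is bounded above.
\item[(b)] $F''$ is bounded above, say by $M$, and below by $0$.
\item[(c)] $u_{xy}$ is bounded above.
\end{itemize}
This gives $\overline{K} = \max u_{xy}/\zeta$.

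\textbf{Step 3: the lower bound (main obstacle).} The lower bound $\underline{K}>0$ is the hard part. It requires $u_{xy}$ to be bounded away from zero on the box. \Cref{ass:u_fn}(iii) gives strict positivity only for $x>0$, so I must keep $x$ away from $0$. The uniform bound $x^*_{HP}\geq x^a(y_{LP})>0$ does exactly this. It follows because social comparisons only raise consumption (P2), and because $y_{HP}\geq y_{LP}$ with (P3).

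On the compact set $[x^a(y_{LP}),x^a(y_{HR})]\times[y_{LP},y_{HR}]$, continuity and strict positivity give $u_{xy}\geq\eta>0$. Hence
\[
\underline{K} = \frac{\eta}{\max(-u_{xx}) + \theta M}.
\]
Taking minima and maxima over the $HP$ and $LR$ constants yields common bounds. I need some care that the $y$-range and $x$-range used are valid uniformly in $(s,\rho)$. This is why I work on the closed parameter square, where the equilibrium is still defined (as noted in the proof of \Cref{lem:A3_limiting_cons}), rather than the open one.
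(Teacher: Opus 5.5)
Your proposal is correct and follows essentially the same route as the paper: the Implicit Function Theorem applied to the reduced $HP$ (resp.\ $LR$) first-order condition, then uniform bounds on $u_{xy}$, $-u_{xx}$ and $F''$ over the fixed compact set $[x^a(y_{LP}),x^a(y_{HR})]\times[y_{LP},y_{HR}]$, with $x^*_q\geq x^a(y_{LP})>0$ keeping $u_{xy}$ away from zero. The one cosmetic difference is your appeal to the closed parameter square and the limiting-consumption lemma, which is unnecessary because that compact set is already independent of $(s,\rho)$.
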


\paragraph{Proof.} We give the argument for the $HP$ class. That for the $LR$ class is identical, with $\rho$ replaced by $1-\rho$: the constants we construct below hold for every value of $\rho\theta \in (0,\theta)$, and so serve for both classes.

    \textbf{Step 1: the relevant First Order Condition.} By \Cref{lem:A1_ordering} we have $x^*_{LP} < x^*_{HP} < x^*_{HR}$, with both inequalities strict for every $s \in (0,1)$. So the $HP$ class make an unfavorable social comparison with the $HR$ class and with no other class, and this remains true for small perturbations of $y_{HP}$. Their equilibrium consumption is therefore the unique root of
    \begin{align*}
        \Phi(x , y) := u_x(x , y) + \rho\theta F'(x^*_{HR} - x),
    \end{align*}
    where $x^*_{HR} = x^a(y_{HR})$ is a constant (by \Cref{lem:A2_unique_eqm}). 

    \textbf{Step 2: the derivative.} $\Phi(\cdot,\cdot)$ is continuously differentiable, with $\Phi_x = u_{xx}(x,y) - \rho\theta F''(x^*_{HR}-x) < 0$ and $\Phi_y = u_{xy}(x,y)$. As $\Phi_x \neq 0$, the Implicit Function Theorem applies: $x^*_{HP}$ is continuously differentiable in $y_{HP}$, with
    \begin{align*}
        \frac{d x^*_{HP}}{d y_{HP}} = - \frac{\Phi_y}{\Phi_x} = \frac{u_{xy}(x^*_{HP} , y_{HP})}{- u_{xx}(x^*_{HP} , y_{HP}) + \rho\theta F''(x^*_{HR} - x^*_{HP})} \ > \ 0.
    \end{align*}

    \textbf{Step 3: bounding it, uniformly in $s$ and $\rho$.} It remains to bound the numerator and the denominator above, and away from zero. Every quantity in them is evaluated on a compact set that does not move with $s$ or with $\rho$. Indeed, for every class $q$:
    \begin{align*}
        x^a(y_{LP}) \ \leq \ x^a(y_q) \ \leq \ x^*_q \ \leq \ x^*_{HR} = x^a(y_{HR}), \qquad y_{LP} \ \leq \ y_q \ \leq \ y_{HR},
    \end{align*}
    where the first inequality holds because $x^a(\cdot)$ is increasing, the second is (P2), and the third is \Cref{lem:A1_ordering}. And neither $y_{LP}$ nor $y_{HR}$ varies with $s$. So $D := [x^a(y_{LP}) , x^a(y_{HR})] \times [y_{LP} , y_{HR}]$ is compact and contains $(x^*_q , y_q)$ for every class $q$, every $s$, and every $\rho$.

    On $D$, both $u_{xy}$ and $-u_{xx}$ are continuous and strictly positive -- the former by \Cref{ass:u_fn}(iii), using $x \geq x^a(y_{LP}) > 0$ everywhere on $D$, and the latter by \Cref{ass:u_fn}(i). A continuous and strictly positive function on a compact set attains a strictly positive minimum and a finite maximum, so there are constants with $0 < \underline{a} \leq u_{xy} \leq \overline{a} < \infty$ and $0 < \underline{c} \leq -u_{xx} \leq \overline{c} < \infty$ on $D$. Similarly, $F''(\cdot)$ is continuous and so bounded above by some $\overline{f} < \infty$ on the compact interval $[0 , x^a(y_{HR})]$, which contains every gap $x^*_{HR} - x^*_q$. Finally, $\rho\theta \in (0,\theta)$, and $\theta$ is a fixed parameter. So the denominator lies in $[\underline{c} , \overline{c} + \theta \overline{f}]$, and
    \begin{align*}
        \frac{d x^*_{HP}}{d y_{HP}} \ \in \ \left[ \frac{\underline{a}}{\overline{c} + \theta \overline{f}} \ , \ \frac{\overline{a}}{\underline{c}} \right] \ =: \ [\underline{K} , \overline{K}],
    \end{align*}
    where both constants are strictly positive, finite, and independent of $s$ and of $\rho$. \hfill \qed 

\subsection{Additional Claims}
\paragraph{Limiting consumption as integration approaches its bounds.} For $q \in \{HP , LR\}$, let $\bar{x}_q(s)$ denote the unique root of
\begin{align*}
    u_x(x , y_q) + \theta F'(x^*_{HR} - x),
\end{align*}
the consumption class $q$ would choose if it bore the \emph{entire} weight $\theta$ of the comparison with the $HR$ class. This root lies strictly inside $\big( x^a(y_q) , x^*_{HR} \big)$: at $x = x^a(y_q)$ the expression equals $\theta F'\big(x^*_{HR} - x^a(y_q)\big) > 0$, because $x^*_{HR} = x^a(y_{HR}) > x^a(y_q)$; and at $x = x^*_{HR}$ it equals $u_x(x^*_{HR} , y_q) < 0$.\footnote{Note that these bounds come from this First Order Condition itself, and not from \Cref{lem:A1_ordering} -- which does not apply, as $\bar{x}_q(s)$ is not the equilibrium consumption of class $q$ for any $\rho \in (0,1)$.}

The First Order Conditions of the $HP$ and $LR$ classes are $u_x(x , y_{HP}) + \rho\theta F'(x^*_{HR} - x) = 0$ and $u_x(x , y_{LR}) + (1-\rho)\theta F'(x^*_{HR} - x) = 0$ respectively. So, by continuity of the root -- exactly as in the proof of \Cref{lem:A3_limiting_cons} -- we have
\begin{align*}
    \text{as } \rho \to 1: \quad x^*_{HP} \to \bar{x}_{HP}(s) , \quad x^*_{LR} \to x^a(y_{LR}) ;
    \qquad\qquad
    \text{as } \rho \to 0: \quad x^*_{HP} \to x^a(y_{HP}) , \quad x^*_{LR} \to \bar{x}_{LR}(s) .
\end{align*}
In words: as $\rho \to 1$ the $HP$ class come to bear the whole of the comparison with the $HR$ class while the $LR$ class are released from theirs, and the reverse as $\rho \to 0$. Throughout what follows, $s \in (0,1)$ is held fixed and $\rho$ is then sent to its limit, so how close $\rho$ must be may depend on $s$.

\paragraph{Extending a strict inequality past $s^y$.} Here, we establish strict signs on the whole of $(0 , s^y]$ and then extends them to a neighborhood beyond. We record that extension once, as it is used four times.

\medskip
\noindent\textbf{Extension Claim.} \emph{Let $\Phi : (0,1) \to \mathbb{R}$ be continuous. If $\Phi(s) > 0$ for every $s \leq s^y$, then}
\begin{align*}
    \bar{s}(\Phi) := \min\Big\{ \ \inf\big\{ s \in (s^y , 1) : \Phi(s) \leq 0 \big\} \ , \ \tfrac{1+s^y}{2} \ \Big\}, \qquad \text{with } \inf \emptyset := 1 ,
\end{align*}
\emph{satisfies $\bar{s}(\Phi) \in (s^y , 1)$ and $\Phi(s) > 0$ for every $s < \bar{s}(\Phi)$. Symmetrically, if $\Phi(s) > 0$ for every $s \geq s^y$, then}
\begin{align*}
    \underline{s}(\Phi) := \max\Big\{ \ \sup\big\{ s \in (0 , s^y) : \Phi(s) \leq 0 \big\} \ , \ \tfrac{s^y}{2} \ \Big\}, \qquad \text{with } \sup \emptyset := 0 ,
\end{align*}
\emph{satisfies $\underline{s}(\Phi) \in (0 , s^y)$ and $\Phi(s) > 0$ for every $s > \underline{s}(\Phi)$.}

\begin{proof}
    As $\Phi(s^y) > 0$ and $\Phi$ is continuous, $\Phi > 0$ on $[s^y , s^y + \eta)$ for some $\eta > 0$. So the infimum strictly exceeds $s^y$, giving $\bar{s}(\Phi) > s^y$; and $\bar{s}(\Phi) < 1$ by the second entry. Now take any $s < \bar{s}(\Phi)$: if $s \leq s^y$ then $\Phi(s)>0$ is the hypothesis, and if $s \in (s^y , \bar{s}(\Phi))$ then $s$ lies strictly below the infimum, so $\Phi(s) \leq 0$ fails. The second statement is the same argument reflected. Note that no monotonicity of $\Phi$ is claimed, and none is needed; and that the second entry in each expression is what prevents the threshold from landing on an endpoint, which the first entry alone permits when $\Phi > 0$ on all of $(0,1)$.
\end{proof}

\subsection{LP class proof of Remarks \ref{rem:indiv_comp_stat_x_s} and \ref{rem:indiv_comp_stat_x_rho}}
\textbf{LP class.} $x_{LP}^*$ is the unique solution to:
    \begin{align}
    \Phi_{LP} := u_x(x_{LP} , y_{LP}) + \theta[(1-\rho)\, F'(x^*_{HP} - x_{LP}) + \rho\, F'(x^*_{LR} - x_{LP})] = 0
    \end{align}
    Note that $y_{LP}$ is invariant in $s$, but both $x^*_{HP}$ and $x^*_{LR}$ depend on $s$. So we have:
    \begin{align*}
        \frac{\partial \Phi_{LP}}{\partial x_{LP}} &= u_{xx,LP} - \theta(1-\rho)\, F''_{LP,HP} - \theta\rho\, F''_{LP,LR} < 0 \\
        \frac{\partial \Phi_{LP}}{\partial s} &= \theta(1-\rho)\, F''_{LP,HP}\cdot\frac{dx^*_{HP}}{ds} + \theta\rho\, F''_{LP,LR}\cdot\frac{dx^*_{LR}}{ds}.
    \end{align*}
    And notice that $\Phi_{LP}$ depends on $\rho$ directly, and indirectly through $x^*_{HP}$ and $x^*_{LR}$. So we have:
    \begin{align*}
       \frac{\partial \Phi_{LP}}{\partial \rho} &= \underbrace{- \theta F'(x^*_{HP} - x^*_{LP}) + \theta F'(x^*_{LR} - x^*_{LP})}_{\text{direct effect}} \\
        &\qquad + \underbrace{\theta(1-\rho)\, F''_{LP,HP}\cdot\frac{dx^*_{HP}}{d\rho} + \theta\rho\, F''_{LP,LR}\cdot\frac{dx^*_{LR}}{d\rho}}_{\text{indirect effect}}
    \end{align*}
    Then by the Implicit Function Theorem, we have:
\begin{align*}
    \frac{dx^*_{LP}}{ds} = - \frac{\partial \Phi_{LP} / \partial s}{\partial \Phi_{LP} / \partial x_{LP}}
    \qquad \text{and} \qquad
    \frac{dx^*_{LP}}{d\rho} = - \frac{\partial \Phi_{LP} / \partial \rho}{\partial \Phi_{LP} / \partial x_{LP}}
\end{align*}
As $\frac{\partial \Phi_{LP}}{\partial x_{LP}} < 0$, we have $\text{sign}\big[\frac{dx^*_{LP}}{ds}\big] = \text{sign}\big[\frac{\partial \Phi_{LP}}{\partial s}\big]$ and $\text{sign}\big[\frac{dx^*_{LP}}{d\rho}\big] = \text{sign}\big[\frac{\partial \Phi_{LP}}{\partial \rho}\big]$. Dividing each by $\theta>0$ then gives parts (iv) of the two remarks.
\newpage
\section{Microfoundations for the network structure}\label{OA:network_microfoundations}
Here, we provide simple microfoundations that yield the network structure in \Cref{fig:network_structure}. The account below is in terms of \emph{expected} link shares, and delivers the structure of \Cref{sec:model} exactly in the large-$n$ limit. \Cref{sec:model} instead assumes exact class-specific weighted degrees, and an exactly common total degree $d$, for every agent. With finitely many agents a random meeting-and-matching process of the kind described here yields random degrees and random class shares, which agree with those assumptions only in expectation. Readers who prefer an exact construction at finite $n$ may instead take the row sums $\sum_{j \in q'} G_{ij} = d \, g_{qq'}$ as given and construct a network satisfying them directly, in the manner of a configuration model; the point of this appendix is the behavioral account of \emph{why} the shares take the form they do, not the construction itself.

Suppose agents meet uniformly at random. So for any agent $i$ in class $q$, the probability she \emph{meets} an agent in class $q'$ is equal to the fraction of the other agents who are in class $q'$. For $q' \neq q$ this is $\frac{0.25 n}{n-1}$, and for her own class it is $\frac{0.25 n - 1}{n-1}$, as she cannot meet herself. Both tend to one quarter as $n$ grows, and we work with that limit throughout.\footnote{Nothing of substance turns on this. Carrying the exact finite-$n$ probabilities through simply multiplies the own-class connecting probability by $\frac{0.25n-1}{0.25 n}$, which rescales $\omega$ without affecting $\rho$ -- and hence leaves the integration parameter, which is our object of interest, untouched.} Conditional on meeting, a pair of agents \emph{connect} -- form a link -- with the following probability:
\begin{align}
p_{qq'} =
\begin{cases}
    1 &\text{ if } q=q' \\
    a^\dagger &\text{ if } q \neq q', a_q = a_{q'} \\
    b^\dagger &\text{ if } q \neq q', b_q = b_{q'} \\
    0 &\text{ otherwise.}
\end{cases}
\end{align}
This says that, conditional on meeting, agents who have nothing (neither ability nor background) in common will never connect, while those with both ability and background in common always connect. Then agents with only one thing in common have some probability of connecting -- which can depend on which dimension (ability or background) it is that they have in common. 

As all classes are the same size (by assumption), the expected number of links from an agent in class $q$ to those in class $q'$ is simply proportional to the `connecting' probability $p_{qq'}$. Then the total expected number of links is proportional to $\sum_{q'} p_{qq'} = 1 + a^\dagger + b^\dagger$. So the expected \emph{fraction} of links from an agent in class $q$ to those in class $q'$ is:
\begin{align}\label{eq:microfounded_link_weights}
T_{qq'} =
\begin{cases}
    \frac{1}{1 + a^\dagger + b^\dagger} &\text{ if } q=q' \\
    \frac{a^\dagger}{1 + a^\dagger + b^\dagger} &\text{ if } q \neq q', a_q = a_{q'} \\
    \frac{b^\dagger}{1 + a^\dagger + b^\dagger} &\text{ if } q \neq q', b_q = b_{q'} \\
    0 &\text{ otherwise.}
\end{cases}
\end{align}
Finally, letting $\omega := (1 + a^\dagger + b^\dagger)^{-1}$, and $\rho := \frac{a^\dagger}{a^\dagger + b^\dagger}$ then yields the network structure in \Cref{sec:model}.\footnote{Note that this does not microfound the full range $\omega \in (0,1)$ that we consider in \Cref{sec:model}. It imposes a very natural restriction that people are disproportionately connected to their own group (as $\omega > \tfrac{1}{3}$ is a necessary condition for the link weights in \Cref{eq:microfounded_link_weights} to be feasible. Microfoundations for wider range of parameters could be done straightforwardly by also allowing homophily in meetings.}

Note that this very simple microfoundation has entirely focused on homophily driven by \emph{choices} (i.e. preferences for forming connections conditional on meeting). It abstracted away from differences in meeting rates, sometimes called opportunity-homophily or propinquity \citep{currarini2010identifying, zeng2008preference}. This is merely for clarity. It would be straightforward to add in differences in meeting rates. Doing so would add notation, but would not add insight.\label{OA.E.network_microfoundations}
\newpage
\section{Proofs for the enriched model}\label{OA:enriched_proofs}
\subsection{Preliminary Notation}\label{subsec:notation}
A strategy for agent $i$ is a measurable set $X_i \subseteq [0,A]$, identified with any other set that differs from it by a set of measure zero. We write
\begin{align*}
    x_i \ := \ |X_i| ,
    \qquad
    M(X_i) \ := \ \int_{X_i} m^{\alpha} \, dm ,
    \qquad
    \hat{X}_i \ := \ \big[ (1+\alpha) M(X_i) \big]^{\frac{1}{1+\alpha}} ,
\end{align*}
for the \emph{mass} of goods $i$ consumes, the \emph{expenditure} they require, and the expenditure aggregate of \Cref{eq:preferences_microfounded}. Because $X_i$ is contained in the bounded interval $[0,A]$, both $x_i \leq A$ and $M(X_i) \leq \frac{A^{1+\alpha}}{1+\alpha}$ are finite. In this notation agent $i$'s payoff is
\begin{align}\label{eq:enriched_payoff_sets}
    U_i(X) \ = \ v \, x_i \ - \ \kappa\big( \hat{X}_i , y_i \big) \ - \ \sum_{j \neq i} G_{ij} \, F\big( |X_j \setminus X_i| \big) ,
\end{align}
since $\int_0^A \max\{ x_{jm} - x_{im} , 0 \} \, dm = |X_j \setminus X_i|$. 

\subsection{Cost and benefits change is without loss}
Before starting the proofs proper, we first confirm that the decomposition of the intrinsic benefit into a linear term and a strictly increasing cost term is without loss. 
\begin{lem}\label{lem:cost_decomposition}
    For any $u(\cdot,\cdot)$ satisfying \Cref{ass:u_fn} there exist a constant $v>0$ and a function $\kappa(\cdot,\cdot)$, twice continuously differentiable with $\kappa_x > 0$ everywhere and $\kappa_{xy} < 0$ for $x>0$, such that
    \begin{align*}
        u(x,y) \ = \ v \, x - \kappa(x,y) \qquad \text{for all } x \geq 0 \text{ and all } y \in [y_{LP} , y_{HR}] .
    \end{align*}
\end{lem}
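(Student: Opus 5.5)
The plan is to define the decomposition directly and then check the required properties. I would choose $\kappa(x,y) := v\,x - u(x,y)$ for a suitably large constant $v>0$. With this choice the identity $u = v x - \kappa$ holds by construction, and $\kappa$ inherits twice continuous differentiability from $u$. So the only real content is choosing $v$ so that $\kappa_x > 0$ everywhere, together with checking the sign of $\kappa_{xy}$.

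The cross-partial is immediate. Since $v$ is a constant, $\kappa_{xy} = -u_{xy}$, and \Cref{ass:u_fn}(iii) gives $u_{xy} > 0$ for $x>0$, hence $\kappa_{xy}<0$ there. For the first derivative, $\kappa_x = v - u_x(x,y)$, so I need $v > \sup u_x(x,y)$ over the relevant domain. By \Cref{ass:u_fn}(i), $u_{xx}<0$, so for each $y$ the map $x \mapsto u_x(x,y)$ is strictly decreasing on $x \geq 0$. Its supremum over $x\ge 0$ is therefore attained at $x=0$, giving $\sup_{x \ge 0} u_x(x,y) = u_x(0,y)$. Because $u$ is $C^2$, the map $y\mapsto u_x(0,y)$ is continuous on the compact interval $[y_{LP},y_{HR}]$ and attains a finite maximum $\bar u$. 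I would then set $v := \max\{\bar u, 0\} + 1 > 0$. This gives $\kappa_x \geq v - \bar u > 0$ for all $x\ge0$ and all $y$ in that interval.

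The main (mild) obstacle is precisely this uniformity. I need a single constant $v$ working across all $x \ge 0$, which is an unbounded set. This is handled by monotonicity of $u_x$ in $x$, which reduces the problem to $x=0$. It also requires uniformity across $y$, which is handled by compactness of $[y_{LP},y_{HR}]$; this is why the lemma restricts $y$ to that range. If one wanted positivity of $v$ to carry meaning, note that interiority of $x^a(y)$ forces $u_x(0,y)>0$, so $\bar u>0$ anyway. Finally, I would remark that the microfounded model needs $\kappa$ only to be increasing in its first argument with $\kappa_{xy}<0$, so no further property has to be verified.
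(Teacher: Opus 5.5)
Your proposal is correct and follows the paper's proof essentially step for step. You set $\kappa := vx - u$ with $v$ exceeding $\max_{y\in[y_{LP},y_{HR}]} u_x(0,y)$, use $u_{xx}<0$ to reduce the supremum over $x$ to $x=0$, and use compactness in $y$ to make that bound finite, after which $\kappa_x>0$ and $\kappa_{xy}=-u_{xy}<0$ follow directly. The only cosmetic difference is that you fix $v=\max\{\bar u,0\}+1$, whereas the paper allows any $v$ strictly above that maximum.
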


\begin{proof}
    Since $u_{xx} < 0$, the function $u_x(\cdot , y)$ is strictly decreasing, so $\sup_{x\geq0} u_x(x,y) = u_x(0,y)$; and $u_x(0,\cdot)$ is continuous on the compact interval $[y_{LP} , y_{HR}]$, so $\bar{v} := \max_{y \in [y_{LP} , y_{HR}]} u_x(0,y)$ is finite, and strictly positive. Take any $v > \bar{v}$ and set $\kappa(x,y) := v x - u(x,y)$. Then $\kappa_x = v - u_x \geq v - \bar{v} > 0$, and $\kappa_{xy} = -u_{xy} < 0$ for $x > 0$ by \Cref{ass:u_fn}(iii). Twice continuous differentiability is inherited from $u$.
\end{proof}

\newpage
\subsection{Proof of \Cref{prop:class_symmetric}}
\paragraph{Payoffs with class-symmetry.} Fix a class-symmetric Nash equilibrium and write $X_q \subseteq [0,A]$ for the set of goods consumed by agents in class $q \in Q$. As usual an agent $i \in q$ takes others' choices as given, so she maximizes
\begin{align}\label{eq:class_objective}
    U_q(Z) \ = \ v \, |Z| \ - \ \kappa\big( \hat{Z} , y_q \big) \ - \ \sum_{p \in Q} d \, g_{q p} \, F\big( |X_{p} \setminus Z| \big)
\end{align}
over measurable $Z \subseteq [0,A]$. In equilibrium $X_q$ attains this maximum.

\paragraph{Additional notation.} For a good $m$, we say its \emph{clientele}, denoted $C(m)$, is the set of classes that buy it. And for a (sub)set of the classes $S \subseteq Q$, we say a \emph{bracket}, denoted $E_S$, is the set of goods whose clientele is exactly $S$. So:
\begin{align*}
    C(m) \ := \ \{ p \in Q \ : \ m \in X_p \} \qquad\text{and}\qquad E_S := \{ m \in [0,A] : C(m) = S \}
\end{align*}
The $2^4$ sets $E_S$ are the level sets of $C$, so they are disjoint and cover $[0,A]$. Note that on $E_S$, the indicator $\mathbf{1}\{ m \in X_p \}$ equals $\mathbf{1}\{ p \in S \}$ for every $p$. 

Call a set of classes $S \subseteq Q$ a \emph{club} if $|E_S| > 0$. That is, if $S$ is the exclusive clientele of some positive mass of goods. Write $\mathcal{K}$ for the collection of clubs. Additionally, call a club $S$ a \emph{separating club} if there are some $p,q \in Q$, with $p \neq q$, such that $p \in S$ and $q \notin S$ (note that all clubs are separating clubs except for $S = Q$ and $S = \emptyset$). Also let $\bar{m} := \max_{q \in Q} \operatorname{ess\,sup} X_q$ denote the most expensive good consumed. Goods $m > \bar{m}$ are bought by nobody, so we discard them and work on $[0 , \bar{m}]$ throughout. 

\paragraph{A First Order Condition.} As in the `simpler' model, agents consume goods whenever the (net) marginal benefit of doing so is positive. In the simpler model, this amounted to a first order condition where \Cref{eq:marginal_payoff} holds with equality. There is a direct analogy here, but it requires a little more notation. For convenience, we write:
\begin{align*}
    \tau_q \ := \ \frac{\kappa_x\big( \hat{X}_q , y_q \big)}{\hat{X}_q^{\,\alpha}},
    \qquad
    \lambda_{qp} \ := \ d \, g_{qp} \, F'\big( |X_p \setminus X_q| \big).
\end{align*}
Using these, the analogue of \Cref{eq:marginal_payoff} is:
\begin{align}\label{eq:psi_def}
    \psi_q(m) \ := \ v \ + \sum_{p \neq q} \lambda_{qp} \, \mathbf{1}\{ m \in X_p \} \ - \ \tau_q \, m^{\alpha}.
\end{align}
So then our First Order Condition is that $X_q$ is the set of goods for which $\psi_q(m) > 0$. Formally:
\begin{lem}\label{lem:marginal}
    $X_q \ = \ \big\{ m \in [0,A] \ : \ \psi_q(m) > 0 \big\}$ up to a set of measure zero. Moreover, $\hat{X}_q > 0$, so $\tau_q$ is well defined and strictly positive.
\end{lem}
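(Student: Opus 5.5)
The plan is a pointwise variational argument: take the equilibrium set $X_q$, which maximizes \Cref{eq:class_objective}, and perturb it by adding or removing a small-measure set of goods near a Lebesgue density point. Then read off the sign of the first-order change in $U_q$. The function $\psi_q(m)$ should be exactly this marginal payoff density.

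First, record the derivatives of the three terms of \Cref{eq:class_objective} with respect to such perturbations. Let $B$ be a small set, with $B\cap X_q=\emptyset$ when adding and $B\subseteq X_q$ when removing.
\begin{itemize}
\item[(a)] The benefit term $v|Z|$ changes by $\pm v|B|$.
\item[(b)] $M(Z)$ changes by $\pm\int_B m^\alpha dm$. Since $\hat Z=[(1+\alpha)M(Z)]^{1/(1+\alpha)}$, we have $d\hat Z/dM=\hat Z^{-\alpha}$. So, to first order, the cost term changes by $\mp\,\kappa_x(\hat X_q,y_q)\hat X_q^{-\alpha}\int_B m^\alpha dm=\mp\,\tau_q\int_B m^\alpha dm$. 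This needs $\hat X_q>0$, and $\kappa\in C^2$ gives the uniform remainder.
\item[(c)] Adding $B$ reduces $|X_p\setminus Z|$ by $|B\cap X_p|$, and removing $B$ raises it by the same amount. Since $F\in C^1$, the comparison term changes by $\pm\sum_{p\neq q}\lambda_{qp}|B\cap X_p|+o(|B|)$.
\end{itemize}
The own-class term drops out because $X_q\setminus X_q=\emptyset$ and $F'(0)=0$; its second-order effect is $o(|B|)$.

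Combining (a)--(c), the change in $U_q$ equals $\pm\int_B\psi_q(m)\,dm+o(|B|)$. Optimality of $X_q$ then forces $\psi_q\le0$ a.e. off $X_q$ and $\psi_q\ge0$ a.e. on $X_q$. To make this rigorous I would argue by contradiction. Suppose $\{m\notin X_q:\psi_q(m)>0\}$ had positive measure. Then, because $\psi_q$ is piecewise continuous and takes finitely many forms indexed by clientele, one can find a positive-measure set $B$ on which $\psi_q\ge\varepsilon>0$. Shrinking $B$ makes the gain $\ge\varepsilon|B|-o(|B|)>0$, contradicting optimality. The symmetric argument handles removals.

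It remains to show that $\{\psi_q=0\}$ is null, which upgrades the weak inequalities to the stated equality up to measure zero. On each bracket $E_S$, $\psi_q(m)=v+\sum_{p\in S\setminus\{q\}}\lambda_{qp}-\tau_q m^\alpha$. For $\tau_q>0$ this is strictly decreasing in $m$, so it vanishes at no more than one point per bracket. Since there are only $16$ brackets, the zero set is finite.

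The nonnegativity claim, $\hat X_q>0$, comes first logically. Suppose instead $|X_q|=0$, so $M(X_q)=0$. Consider adding $B=[0,\delta]$. The benefit rises by $v\delta$, and comparison costs weakly fall. The cost rises by $\kappa\big((\delta^{1+\alpha})^{1/(1+\alpha)},y_q\big)-\kappa(0,y_q)=\kappa(\delta,y_q)-\kappa(0,y_q)=\kappa_x(0,y_q)\delta+o(\delta)$. We have $\kappa_x(0,y_q)=v-u_x(0,y_q)<v$, because $u_x(0,y)>0$ by the interior-ideal assumption. So the net change is positive for small $\delta$, a contradiction. Hence $\hat X_q>0$, and $\tau_q>0$ since $\kappa_x>0$ by \Cref{lem:cost_decomposition}.

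\textbf{Main obstacle.} I expect the hardest step to be making the first-order expansion in (b) uniform, since $\hat Z$ is a nonlinear functional of the set. The fix is to work with $M$ directly, using the $C^1$ composition $M\mapsto\kappa([(1+\alpha)M]^{1/(1+\alpha)},y)$ near $M(X_q)>0$. A secondary care point is choosing $B$ inside a single bracket, so that the indicators in $\psi_q$ are constant on $B$.
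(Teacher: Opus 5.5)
Your proposal is correct and follows essentially the same route as the paper: you prove positivity of $\hat{X}_q$ by adding $[0,\delta]$ to a null bundle, you expand small additions and removals to first order with linear part $\pm\int_B \psi_q$, and you show ties are null because $\psi_q$ is strictly decreasing on each of the finitely many brackets. The only difference is technical: the paper localizes at Lebesgue density points of each bracket $E_S$ and divides by $|W_\delta|$, whereas you argue by contradiction on a positive-measure set where $\psi_q \geq \varepsilon$. That set exists by countable additivity alone, without any piecewise continuity, and your $\int_B \psi_q$ expansion makes restricting $B$ to a single bracket unnecessary.
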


\begin{proof}
    \textbf{Step 1: consumption is positive.} Suppose $|X_q| = 0$. Replacing $X_q$ by $[0,\eta]$ weakly shrinks every set $X_{p} \setminus \, \cdot \,$, so 
    \begin{align*}
        U_q\big( [0,\eta] \big) - U_q(X_q) \ \geq \ v \, \eta - \kappa(\eta , y_q) + \kappa(0 , y_q) \ = \ u(\eta , y_q) - u(0 , y_q) \ > \ 0
    \end{align*}
    for sufficiently small $\eta > 0$, because $u_x(0 , y_q) > 0$ by \Cref{ass:u_fn}(i). Contradiction. So $\hat{X}_q > 0$, and hence $\tau_q > 0$. 
  
    \textbf{Step 2: no gain from buying more.} Fix $S \subseteq Q$ with $q \notin S$, and suppose class $q$ deviates by buying a measurable $W$ inside the bracket $E_S$ (i.e., $W \subseteq E_S$). Since $W \cap X_q = \emptyset$ and $W \subseteq X_p$ exactly for $p \in S$, her shortfall against $p$ falls from $|X_p \setminus X_q|$ to $|X_p \setminus X_q| - |W|$ when $p \in S$, and is unchanged when $p \notin S$. Her change in utility is therefore:
    \begin{align}\label{eq:toggle}
    U_q( X_q \cup W ) - U_q( X_q )
    \ = \ & \ v \, |W| \ + \sum_{p \in S} d \, g_{qp} \Big[ F\big( |X_p \setminus X_q| \big) - F\big( |X_p \setminus X_q| - |W| \big) \Big] \nonumber \\
    & - \ \Big[ \kappa\big( \widehat{X_q \cup W} , y_q \big) - \kappa\big( \hat{X}_q , y_q \big) \Big] .
    \end{align}
    In equilibrium this must be weakly negative for every such $W$. Now shrink $W$ to a point. Fix a Lebesgue density point $m$ of $E_S$ -- almost every point of $E_S$ is one -- and put $W_\delta := E_S \cap (m - \delta , \, m + \delta)$, which has positive measure, with $|W_\delta| \downarrow 0$ as $\delta \downarrow 0$. Divide \Cref{eq:toggle} by $|W_\delta|$ and let $\delta \downarrow 0$: each bracket becomes a derivative, and $M(W_\delta) / |W_\delta| \to m^{\alpha}$, since $m \mapsto m^{\alpha}$ is continuous and so its average over $W_\delta \subseteq (m-\delta , m+\delta)$ is squeezed to $m^{\alpha}$. So we have
    \begin{align*}
        v \ + \sum_{p \in S} d \, g_{qp} \, F'\big( |X_p \setminus X_q| \big) \ - \ \kappa_x\big( \hat{X}_q , y_q \big) \, \hat{X}_q^{-\alpha} \, m^{\alpha} \ \leq \ 0 .
    \end{align*}
    The last term is the marginal cost of the extra expenditure. Buying $W_\delta$ raises expenditure by $M(W_\delta)$, and the cost of an expenditure $M$ is $\kappa( \hat{Z} , y_q )$ with $\hat{Z} = [(1+\alpha) M]^{1/(1+\alpha)}$. Since $d \hat{Z} / d M = \hat{Z}^{-\alpha}$, the chain rule gives
    \begin{align*}
        \frac{d}{d M} \, \kappa\big( \hat{Z} , y_q \big) \bigg|_{M \, = \, M(X_q)} \ = \ \kappa_x\big( \hat{X}_q , y_q \big) \, \hat{X}_q^{-\alpha} \ = \ \tau_q ,
    \end{align*}
    which is well defined because Step 1 gives $\hat{X}_q > 0$. Multiplying by $M(W_\delta) / |W_\delta| \to m^{\alpha}$ produces $\tau_q \, m^{\alpha}$. Recalling that $\lambda_{qp} = d \, g_{qp} F'( |X_p \setminus X_q| )$, and that $\mathbf{1}\{ m \in X_p \} = \mathbf{1}\{ p \in S \}$ for $m \in E_S$, the display above is exactly $\psi_q(m) \leq 0$. This holds for almost every $m \in E_S$; and the sets $E_S$ with $q \notin S$ are finite in number and cover $[0,A] \setminus X_q$, so $\psi_q(m) \leq 0$ for almost every $m \notin X_q$.

    \textbf{Step 3: no gain from buying less.} Suppose instead that class $q$ deviates by giving up a measurable $W \subseteq E_S$ for some $S \ni q$, so that $W \subseteq X_q$. The reasoning of Step 2 runs again, with three quantities moving the other way: the mass falls by $|W|$, expenditure falls by $M(W)$, and the shortfall against each $p \in S$ rises from $|X_p \setminus X_q|$ to $|X_p \setminus X_q| + |W|$. So \Cref{eq:toggle} holds with $v|W|$, $M(W)$ and $|W|$ each carrying the opposite sign, and the same limit gives $-\psi_q(m) \leq 0$, that is $\psi_q(m) \geq 0$, for almost every $m \in X_q$. The own-class term $p = q$ is active here -- an agent who drops $W$ falls behind her own class by $|W|$ -- but it contributes $d \, g_{qq} F'(0) = 0$ in the limit, which is why $\psi_q$ sums only over $p \neq q$.
 
    \textbf{Step 4: ties are null.} Combining steps 2 and 3, we have $\psi_q(m) > 0$ for $m \in X_q$ and $\psi_q(m) < 0$ for $m \in [0,A] \setminus X_q$, up to a set of measure zero. All that remains is to rule out ties. Since $g_{qp} = 0$ unless $p$ is one of the two classes $q$ is linked to, the sum in $\psi_q$ takes at most four values; on the set where it equals a given $\ell$ the equation $\psi_q(m) = 0$ reads $\tau_q m^{\alpha} = v + \ell$, which has a single root because $\tau_q > 0$. So $\{ \psi_q = 0 \}$ is finite, and the two inclusions collapse to $X_q = \{ \psi_q > 0 \}$ up to a set of measure zero.
\end{proof}

\paragraph{Social comparison pressures across classes.} We now consider when a class can pass over some cheaper goods and buy dearer ones in their place. Such a `swap' can only be worthwhile if it reduces the social comparison costs with some `neighboring' class. So no class will do this unilaterally: there must be some other neighboring class that does so too (both buying the dearer goods and dropping the cheaper ones). Before proving this formally, we need to introduce more notation.

\emph{Adjacent classes.} Let $\mathcal{N}(q)$ denote the \emph{neigbhours} of class $q$ (i.e., the other classes that $q$ is linked to). Given the network structure we have assumed (shown in \Cref{fig:network_structure}) the classes form a cycle: $HR \ - \ HP \ - \ LP \ - \ LR \ - \ HR $, with adjacent classes sharing an ability or a background, and the two diagonal pairs sharing neither. Write $q_1 , q_2 , q_3 , q_4$ for $HR , HP , LP , LR$ with indices modulo $4$, so $\mathcal{N}(q_k) = \{ q_{k-1} , q_{k+1} \}$. Therefore we have $g_{qp} > 0$ exactly for $p \in \mathcal{N}(q) \cup \{q\}$.   

\emph{Placing pressure.} We say that class $p$ \emph{exerts pressure on} $q$, written $p \succ q$, if: $|X_p \setminus X_q| > 0$ and $p \in \mathcal{N}(q)$.\footnote{It is important to note that $\succ$ here is \emph{not} an ordering. It is neither transitive nor antisymmetric: $p \succ q$ and $q \succ p$ can hold together, since two classes can each buy goods the other does not.} 
Equivalently, $p \succ q$ if $p$ is a neighboring class (to $q$) and there exists some club that contains $p$ but not $q$. Any such club is a separating club, since it contains $p$ but not $q$. So only separating clubs can give rise to pressures. Let $\mathcal{N}^+(q) := \{ p : p \succ q \}$ denote the set of classes that exert pressure on $q$.

As $F'$ is strictly increasing with $F'(0) = 0$, we have $\lambda_{qp} > 0$ if and only if $p \in \mathcal{N}^+(q)$. Write
\begin{align}\label{eq:pressure}
    \ell_q(S) \ := \ \sum_{p \, \in \, S} \lambda_{qp} \ = \sum_{p \, \in \, \mathcal{N}^+(q) \, \cap \, S} \lambda_{qp}
\end{align}
for the pressure that the classes in $S$ put on $q$: a sum of strictly positive terms, depending on $S$ only through $\mathcal{N}^+(q) \cap S$. The sum in \Cref{eq:psi_def} is constant on each bracket, $E_S$, so
\begin{align*}
    \psi_q(m) \ = \ v \ + \ \ell_q(S) \ - \ \tau_q \, m^{\alpha}
    \qquad \text{for } m \in E_S ,
\end{align*}
which is strictly decreasing in $m$.

\emph{Comparing clubs by their cheapest good.} For $S \in \mathcal{K}$ let $\underline{m}_S := \operatorname{ess\,inf} E_S$ and $\overline{m}_S := \operatorname{ess\,sup} E_S$. We say that $S$ \emph{dips below} $S'$, written $S \lhd S'$, if $\underline{m}_S < \overline{m}_{S'}$: that is, if some of the goods in $E_S$ are cheaper than some of the goods in $E_{S'}$. Any two distinct clubs must be comparable in this way: one dips below the other (or both). Otherwise $\overline{m}_S \leq \underline{m}_{S'} \leq \overline{m}_{S'} \leq \underline{m}_S \leq \overline{m}_S$, forcing $\underline{m}_S = \overline{m}_S$ and hence $|E_S| = 0$ (in which case $S$ is not a club, yielding a contradiction).

\begin{lem}\label{lem:jump}
    Let $S , S' \in \mathcal{K}$ with $S \lhd S'$.
    \emph{(i)} If $q \in S' \setminus S$, then $\mathcal{N}^+(q) \cap (S' \setminus S) \neq \emptyset$.
    \emph{(ii)} Hence $|S' \setminus S| \neq 1$; and no two clubs have $|S \setminus S'| = |S' \setminus S| = 1$.
\end{lem}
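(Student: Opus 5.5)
The plan is to compare class $q$'s marginal payoff $\psi_q$ on the two brackets, using \Cref{lem:marginal}. Since $q\in S'\setminus S$, class $q$ buys the goods in $E_{S'}$ but not those in $E_S$. So, up to null sets, $\psi_q>0$ a.e.\ on $E_{S'}$ and $\psi_q<0$ a.e.\ on $E_S$. On each bracket we have $\psi_q(m)=v+\ell_q(\cdot)-\tau_q m^\alpha$, and these expressions are strictly decreasing in $m$.

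The hypothesis $S\lhd S'$ says $\underline m_S<\overline m_{S'}$. So we can pick a positive-measure piece of $E_S$ near its essential infimum and a positive-measure piece of $E_{S'}$ near its essential supremum, with points $m\in E_S$, $m'\in E_{S'}$ and $m<m'$. Then $\tau_q m^\alpha<\tau_q m'^\alpha$. Combining the two sign conditions gives
\[
v+\ell_q(S)\le \tau_q m^\alpha<\tau_q m'^\alpha\le v+\ell_q(S').
\]
The non-strict versions are obtained by taking limits along density points, using continuity of $m^\alpha$ and $\tau_q>0$. Hence $\ell_q(S')>\ell_q(S)$.

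Now write $\ell_q(S')-\ell_q(S)$ as
\[
\sum_{p\in \mathcal N^+(q)\cap(S'\setminus S)}\lambda_{qp}\;-\;\sum_{p\in\mathcal N^+(q)\cap(S\setminus S')}\lambda_{qp}.
\]
Every $\lambda_{qp}$ is nonnegative, so for this difference to be positive the first sum must be nonempty. This proves (i).

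Part (ii) has two claims.
\begin{itemize}
\item \emph{$|S'\setminus S|\neq1$.} If $S'\setminus S=\{q\}$, then (i) requires $\mathcal N^+(q)\cap\{q\}\neq\emptyset$. This is impossible, because $\mathcal N^+(q)\subseteq\mathcal N(q)$ and $\mathcal N(q)$ excludes $q$.
\item \emph{No two clubs with $|S\setminus S'|=|S'\setminus S|=1$.} Distinct clubs are always comparable under $\lhd$ (shown just before the lemma). So either $S\lhd S'$ or $S'\lhd S$, and either case contradicts the first claim applied to that ordered pair.
\end{itemize}

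The main obstacle is making the strict inequality rigorous given that the sets are only defined up to measure zero. The danger is that $\underline m_S<\overline m_{S'}$ might only be witnessed by null sets. I would handle this by defining $\underline m_S$ and $\overline m_{S'}$ as essential infimum and supremum, which guarantees that $E_S\cap[\underline m_S,\underline m_S+\epsilon)$ and $E_{S'}\cap(\overline m_{S'}-\epsilon,\overline m_{S'}]$ have positive measure for every $\epsilon>0$. Taking $\epsilon$ smaller than half the gap, choosing density points in each piece at which the a.e.\ sign conditions hold, and using strict monotonicity of $\tau_q m^\alpha$ then yields the strict inequality.
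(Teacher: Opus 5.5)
Your proposal is correct and follows essentially the same route as the paper. It compares $\psi_q$ at points $m<m'$ drawn from $E_S$ and $E_{S'}$ to get $\ell_q(S)<\ell_q(S')$, deduces a pressuring neighbour in $S'\setminus S$, and obtains (ii) from the fact that no class is its own neighbour together with comparability of distinct clubs. Your extra care over essential infima/suprema and null sets makes explicit the point-selection step that the paper performs in one line.
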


\begin{proof}
    \emph{(i)} As $S \lhd S'$, we can pick $m \in E_S$ and $m' \in E_{S'}$ with $m < m'$, both outside the null set on which \Cref{lem:marginal} fails. Since $q \notin S$ we have $\psi_q(m) \leq 0$, and since $q \in S'$ we have $\psi_q(m') > 0$, so with $\tau_q > 0$ and $m < m'$,
    \begin{align*}
        v + \ell_q(S) \ \leq \ \tau_q \, m^{\alpha} \ < \ \tau_q \, (m')^{\alpha} \ < \ v + \ell_q(S') ,
    \end{align*}
    and $\ell_q(S) < \ell_q(S')$. By \Cref{eq:pressure} the two sides sum strictly positive terms over $\mathcal{N}^+(q) \cap S$ and over $\mathcal{N}^+(q) \cap S'$, so the second index set is not contained in the first.

    \emph{(ii)} For the first claim, suppose $|S' \setminus S| = 1$, say $S' \setminus S = \{p\}$. Then $p \in S' \setminus S$, so part \emph{(i)} applies -- yielding $\mathcal{N}^+(p) \cap (S' \setminus S) = \mathcal{N}^+(p) \cap \{p\} \neq \emptyset$. But by definition $p$ cannot be its own neighbor, nor can it exert pressure on itself. So $\mathcal{N}^+(p) \cap \{p\} = \emptyset$. Contradiction. 
    For the second claim, let $S \neq S'$ be clubs with $S \setminus S' = \{q\}$ and $S' \setminus S = \{p\}$. If $S \lhd S'$, the first claim gives $|S' \setminus S| \neq 1$, a contradiction. If instead $S' \lhd S$, the first claim with the roles of $S$ and $S'$ exchanged gives $|S \setminus S'| \neq 1$, again a contradiction. So neither club dips below the other -- and that is impossible, since any two distinct clubs are comparable.

\end{proof}

\paragraph{The top club.} Next we consider the club that consumes the most expensive good (of all the goods that are consumed) -- which we call the \emph{top club}. Let $T$ be a non-empty club with $\overline{m}_T = \bar{m}$. Every other club dips below it, since a bracket of positive measure inside $[0 , \bar{m}]$ has $\underline{m}_S < \bar{m}$.

\begin{lem}[Peeling]\label{lem:peel}
    Either \emph{(a)} some class $q$ has $X_q = [0 , \bar{m}]$ up to a null set; or \emph{(b)} every $q \in T$ has both of its neighbors in $T$, and both exert pressure on $q$.
\end{lem}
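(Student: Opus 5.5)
The plan is to assume that alternative \emph{(b)} fails and deduce \emph{(a)}, using only the threshold form of \Cref{lem:marginal} and the pressure inequality of \Cref{lem:jump}. Suppose, then, that some $q \in T$ either has a neighbor outside $T$ or has a neighbor that does not exert pressure on it.

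First I record what membership of $T$ gives. Since $\overline{m}_T = \bar{m}$ and $E_T$ has positive measure, \Cref{lem:marginal} gives $\psi_q(m) > 0$ for a.e.\ $m \in E_T$ near $\bar{m}$. Hence
\begin{align*}
v + \ell_q(T) \ \geq \ \tau_q \, \bar{m}^{\alpha} .
\end{align*}
Next, take any good $m < \bar{m}$ that $q$ does not buy. It lies in some bracket $E_S$ with $q \notin S$. Because $T$ is the top club, $S \lhd T$, so \Cref{lem:jump}(i) yields a neighbor $p \in \mathcal{N}^+(q) \cap (T \setminus S)$. So each gap in $X_q$ is caused by a pressuring neighbor in $T$ that also skips that good.

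The main step is the case where at most one neighbor of $q$ both lies in $T$ and exerts pressure on it; call this neighbor $p$. The goal is to show that $X_q$ is essentially a cut-off set $[0,c_q)$ together with the goods that $p$ buys in $[c_q, \bar{m}]$.

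To see this, note that on $E_S$ the value $\ell_q(S)$ depends only on $\mathcal{N}^+(q) \cap S$. So $\psi_q(m) = v + \ell_q(S) - \tau_q m^{\alpha}$ takes a small number of threshold forms, and by the previous paragraph every good $q$ skips is a good $p$ also skips. If the other neighbor $p'$ exerts pressure but is outside $T$, I would still have to show that the goods it supplies lie inside $X_q$ or are irrelevant. Here I would use $S \lhd T$ once more: it shows that $\ell_q(S) < \ell_q(T)$ forces the pressure to come from $T \setminus S$. It follows that $X_p \setminus X_q$ is null on the goods $q$ skips, which would contradict $p \succ q$ unless $q$ skips nothing above $c_q$ that $p$ buys. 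Combined with the cut-off structure, this should give $X_q \supseteq X_p$ up to a null set.

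The peeling step then repeats the argument for $p$. Since $p \in T$, the same reasoning applies to it. Its pressuring neighbors in $T$ cannot include $q$, because $X_q \setminus X_p$ can be non-null only below $c_q$. Following the cycle $HR - HP - LP - LR$, this produces a class whose pressuring neighbors in $T$ form the empty set.

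For such a class $r$, \Cref{lem:jump}(i) leaves no room for any gap below $\bar{m}$: every skipped good would need a pressuring neighbor in $T$, and there is none. So $X_r = [0,\bar{m}]$ up to a null set, which is alternative \emph{(a)}.

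I expect the hardest part to be this chain argument, in particular handling a neighbor that exerts pressure but lies outside $T$, and checking that the chain cannot cycle back on itself. If the chain bookkeeping becomes messy, I would first try an extremal choice instead: pick the $q \in T$ with the largest $\tau_q^{-1}(v + \ell_q(T))$ and argue directly that it has no gaps.
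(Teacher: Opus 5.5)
\textbf{Gap: the one-pressuring-neighbor case.} Your setup is right. Negating (b) gives some $q\in T$ with $|\mathcal{N}^+(q)\cap T|\le 1$. Applying \Cref{lem:jump}(i) to $S\lhd T$ is exactly the key step, and your final paragraph correctly handles $\mathcal{N}^+(q)\cap T=\emptyset$. The gap is the case $\mathcal{N}^+(q)\cap T=\{p\}$.

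There, \Cref{lem:jump}(i) produces, for every club $S$ omitting $q$, a member of $\mathcal{N}^+(q)\cap(T\setminus S)$. That member can only be $p$, so $p\notin S$. Hence every club containing $p$ also contains $q$, and $X_p\subseteq X_q$ up to a null set, which you essentially write down. But that is precisely $|X_p\setminus X_q|=0$, i.e.\ $p\not\succ q$, contradicting the choice of $p$. So this case cannot occur, and the proof ends there; this is the paper's argument. Your ``unless $q$ skips nothing above $c_q$ that $p$ buys'' clause does not rescue $p\succ q$. That clause \emph{is} the statement $X_p\subseteq X_q$, which is the contradiction itself.

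\textbf{The chain argument does not work as written.} Instead of drawing this contradiction, you continue with a chain argument whose steps are unsupported:
\begin{itemize}
\item The claim $q\notin\mathcal{N}^+(p)$ does not follow from ``$X_q\setminus X_p$ can be non-null only below $c_q$''. If $X_q\supseteq X_p$ with $|X_q\setminus X_p|>0$, then $q\succ p$.
\item Nothing ensures that $p$ itself has at most one pressuring neighbor in $T$, so ``the same reasoning applies to it'' is unjustified.
\item That the chain ends at a class $r$ with $\mathcal{N}^+(r)\cap T=\emptyset$, without cycling, is asserted rather than proved.
\item The cut-off description $X_q=[0,c_q)\cup\big(X_p\cap[c_q,\bar m]\big)$ is never established, and it is not needed.
\item Your worry about a pressuring neighbor $p'\notin T$ is moot. \Cref{lem:jump}(i) always supplies a member of $\mathcal{N}^+(q)\cap(T\setminus S)\subseteq\mathcal{N}^+(q)\cap T$, so $p'$ can never be the class it produces.
\end{itemize}

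\textbf{Fix.} Replace everything from your ``main step'' onward with the one-line contradiction above. Then every $q\in T$ has $|\mathcal{N}^+(q)\cap T|\in\{0,2\}$. A value of $0$ gives (a) by your final paragraph. If every $q\in T$ has value $2$, both neighbors of each $q$ lie in $T$ and exert pressure on it, which is (b).
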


\begin{proof}
    Suppose $\mathcal{N}^+(q) \cap T = \emptyset$ for some $q \in T$, and let $S$ be a club with $q \notin S$. Then $S \neq T$, so $S \lhd T$, and \Cref{lem:jump}\emph{(i)} applied to $q \in T \setminus S$ produces a member of $\mathcal{N}^+(q) \cap T$. Contradiction. So $q$ belongs to every club. Then, because the brackets $E_S$ partition $[0 , \bar{m}]$, we get $X_q = [0 , \bar{m}]$.

    Otherwise $\mathcal{N}^+(q) \cap T \neq \emptyset$ for every $q \in T$, and it is enough to rule out $\mathcal{N}^+(q) \cap T = \{p\}$. Suppose it holds. For any club $S$ with $q \notin S$ we have $S \lhd T$, and \Cref{lem:jump}\emph{(i)} places a member of $\mathcal{N}^+(q)$ in $T \setminus S$; it can only be $p$, so $p \notin S$. Every club containing $p$ therefore contains $q$, so $X_p \subseteq X_q$ up to a null set and $p \not\succ q$ -- contradicting $p \in \mathcal{N}^+(q)$.
\end{proof}

\begin{lem}[No full top]\label{lem:fulltop}
    Alternative \emph{(b)} of \Cref{lem:peel} cannot occur. So some class $q$ has $X_q = [0 , \bar{m}]$.
\end{lem}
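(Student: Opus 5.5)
The plan is to derive a contradiction from alternative \emph{(b)} of \Cref{lem:peel}, using the cycle structure of the class network. Suppose \emph{(b)} holds. Then $T$ is non-empty, and every member of $T$ has both of its neighbors in $T$. The classes form the cycle $HR - HP - LP - LR - HR$, so closure under taking neighbors forces $T = Q$. The top club is then the set of goods bought by everybody. Moreover, every class suffers pressure from both of its neighbors: $|X_p \setminus X_q| > 0$ for all eight ordered adjacent pairs $(p,q)$.

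First I will record the combinatorial consequences of \Cref{lem:jump}. Every separating club $S$ dips below $T = Q$. Applying \Cref{lem:jump}\emph{(i)} to $S$ and $Q$, each excluded class $q \in Q \setminus S$ has some neighbor in $Q \setminus S$ that exerts pressure on it. In particular $|Q \setminus S| \geq 2$, and the excluded set contains an adjacent pair.

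Now fix an adjacent ordered pair $(p,q)$ with $p \succ q$, and let $r$ be $q$'s other neighbor. Pressure $p \succ q$ needs a club $S$ with $p \in S$ and $q \notin S$. The excluded neighbor of $q$ that pressures it must then be $r$, so $S \subseteq Q \setminus \{q,r\}$. Running this over all eight ordered pairs produces a family of separating clubs. Each of them is contained in one of the four ``half-cycles'' (complements of adjacent pairs).

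The analytic step works with \Cref{lem:marginal}. On $E_S$ we have $\psi_q(m) = v + \ell_q(S) - \tau_q m^{\alpha}$, and every class buys the goods just below $\bar m$ (the bracket $E_Q$). For each $q$, the top goods therefore give $v + \lambda_{qp} + \lambda_{qr} \geq \tau_q \bar m^{\alpha}$. A good $m$ in a separating club $S \subseteq \{p, t\}$ that excludes $q$ gives $v + \ell_q(S) = v + \lambda_{qp}\mathbf{1}\{p \in S\} \leq \tau_q m^{\alpha}$, where $t$ denotes the class opposite $q$.

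I will then take an extremal separating club. My first choice is the separating club $S^\dagger$ whose bracket reaches highest, i.e.\ with largest $\overline m_{S}$ among clubs other than $Q$. I will compare it with the separating clubs that generate the reverse pressures.

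Consider a club $S \ni p$ that excludes $q$ and $r$, and a club $S' \ni q$ that excludes $p$ and $t$. These are disjoint subsets of complementary half-cycles. One of them dips below the other, so \Cref{lem:jump}\emph{(i)} applies, say with $S \lhd S'$. It forces $r \in S'$ with $r \succ q$ from inside $S' \setminus S$. The pressure terms $\lambda_{q\cdot}$ can then be chained around the cycle, and the resulting strict inequalities $\ell_q(S) < \ell_q(S')$ should be incompatible with each other. The aim is the following. At the highest separating club, some excluded class $q$ would need strictly more pressure from classes in that club than it faces from $Q$'s brackets lying above it. But $\ell_q(Q) = \lambda_{qp} + \lambda_{qr}$ is the maximal possible pressure on $q$. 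The goal is to derive exactly this contradiction, that a club would need more pressure on $q$ than $\ell_q(Q)$.

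The main obstacle is this last step: turning the eight pressure requirements, together with the pairwise ``dips below'' comparisons, into an actual contradiction rather than a consistent cyclic pattern. If the extremal choice of $S^\dagger$ does not close the argument directly, I will fall back on a second extremal choice: the class $q$ with the smallest $\tau_q$, or equivalently the largest mass $x_q$. Since $|X_p \setminus X_q| > 0$ and $|X_q \setminus X_p| > 0$, such a class must still skip some goods that a neighbor buys. I would then show that those skipped goods lie in a bracket where $q$'s marginal payoff is at least that of the neighbor, contradicting $\psi_q \leq 0$ there. Once \emph{(b)} is ruled out, \Cref{lem:peel}\emph{(a)} gives the class with $X_q = [0,\bar m]$.
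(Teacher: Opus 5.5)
Your setup is sound and matches the first half of the paper's argument. Under \emph{(b)} the top club is $T=Q$, and every class is pressured by both of its neighbors. Applying \Cref{lem:jump}\emph{(i)} to $S \lhd Q$ shows that every excluded class has an excluded neighbor, so the club witnessing $p \succ q$ must be $\{p\}$ or $\{p,t\}$. But the proposal stops where the contradiction must be produced, and the target you name cannot be reached. Because $Q$ contains every class, $S'\setminus Q=\emptyset$ for every club $S'$. So \Cref{lem:jump}\emph{(i)} with $Q$ in either slot only ever gives $\ell_q(S)<\ell_q(Q)$ for $q\notin S$, which is perfectly consistent. No club can be forced to put more pressure on $q$ than $\ell_q(Q)$; the contradiction has to come from comparing separating clubs with each other. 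Pairing the clubs for $p\succ q$ and $q\succ p$, as you do, only shows that at least one of them is an adjacent pair, which is not contradictory by itself. The fallback via the smallest $\tau_q$ also lacks support. It would need $\ell_q(S)\geq \ell_p(S)$ across two different classes, and nothing in the model orders these, since link weights and gaps differ. Nor is smallest $\tau_q$ equivalent to largest $x_q$, because $\tau_q$ also depends on $y_q$.

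The missing ingredient is \Cref{lem:jump}\emph{(ii)}: no two clubs can differ by exactly one class in each direction. The paper uses it to show there is at most one singleton club and no two overlapping adjacent-pair clubs, so at most three separating clubs exist. It then observes that each of the four requirements $q_k \succ q_{k+1}$ needs its own separating club, since a club meeting two of them would be a diagonal pair. Four requirements against at most three clubs gives the contradiction. Your own deduction also finishes quickly once \emph{(ii)} is available. Class $p$ must pressure both of its neighbors $q$ and $t$. This requires $\{p\}$ or $\{p,t\}$ to be a club, and also $\{p\}$ or $\{p,q\}$. If $\{p\}$ is not a club, then $\{p,t\}$ and $\{p,q\}$ are both clubs, violating \emph{(ii)}. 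Hence all four singletons are clubs, which violates \emph{(ii)} again.
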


\begin{proof}
    \textbf{Step 1: under \emph{(b)}, the top club contains every class.}
    Suppose, seeking a contradiction, that alternative \emph{(b)} holds. Every class in $T$ then has both of its neighbors inside $T$, so the network induced on $T$ has minimum degree two. As the network is a cycle, the subgraph induced on any proper subset of it is a forest -- which has a class with degree at most one. So it must be that $T = Q$; and then every class must have pressure exerted on it by both of its neighbors (i.e., $\mathcal{N}^+(q) = \mathcal{N}(q)$ for all $q$). Since $T = Q$ is the top club, every other club dips below it, so we may apply \Cref{lem:jump} to the pair $S \lhd Q$ for every club $S \neq Q$.

    \textbf{Step 2: which separating clubs are possible.} We now narrow down which sets of classes can be separating clubs. Let $S$ be a separating club and write $Z := Q \setminus S$ for the classes it omits. Since $S$ is neither $Q$ nor $\emptyset$, the set $Z$ is neither empty nor all of $Q$. Since $Q$ is the top club we have $S \lhd Q$. So \Cref{lem:jump}\emph{(i)} may be applied to this pair, which says that for any $q \in Z$ there exists a class in $\mathcal{N}^+(q) \cap Z$.\footnote{Here, we are setting the $S'$ of \Cref{lem:jump} to be $Q$, and hence $Z = S' \setminus S$.} 
    And recall that under alternative \emph{(b)}, every class is pressured by both of its neighbors, so $\mathcal{N}^+(q) = \mathcal{N}(q)$. In words, this says that every class $q$ \emph{not} in the club $S$ (i.e., $q\in Z$) must have a neighbor that is \emph{also} not in the club $S$. This implies that the network induced on $Z$ has no isolated class (i.e., one with no neighbors). 

    This imposes restrictions on the possible makeup of $Z$, and hence on the possible makeup of $S$. \textbf{(1)} First, $Z$ cannot be a singleton, because the single class would have no neighbors in $Z$ -- which we have shown is not possible. As a result, $S$ cannot be a three-element set (recall that $|S| + |Z| = |Q| = 4$ by definition). \textbf{(2)} Second, $Z$ cannot be a ``diagonal pair'' -- two classes that are not adjacent in the network $HR \ - \ HP \ - \ LP \ - \ LR \ - \ HR $. This is because each class in a diagonal pair would have no neighbors in $Z$, which is again not possible. As a result, $S$ cannot be a diagonal pair either. \textbf{(3)} Therefore $Z$ \emph{may} be an adjacent pair or a set of three classes; in both, each class in $Z$ has at least one neighbor in $Z$. Translating back to the club itself: $S$ is not a three-element set by \textbf{(1)}, not a diagonal pair by \textbf{(2)}, and neither $Q$ nor $\emptyset$ because it is a separating club. What remains is that $S$ is a singleton or an adjacent pair.


    \textbf{Step 3: how many separating clubs can coexist.} We now bound how many of these can be clubs simultaneously. By the previous paragraph, every separating club is either a singleton or an adjacent pair. 

    \emph{Claim (1).} At most one singleton is a club. Suppose not: so $\{ q_j \}$ and $\{ q_k \}$ are both clubs, with $j \neq k$. From above, all pairs of clubs are comparable by the `dips below' definition, so we have $\{q_j\} \lhd \{q_k\}$ (with a relabeling of $j,k$ if necessary). Then notice that $\{q_j\} \setminus \{q_k\} = \{q_j\}$ and $\{q_k\} \setminus \{q_j\} = \{q_k\}$. This implies that $|\{q_j\} \setminus \{q_k\}| = |\{q_k\} \setminus \{q_j\}| = 1$. Then letting $S = \{q_j\}$, $S' = \{q_k\}$ (with $S \lhd S'$) \Cref{lem:jump}\emph{(ii)} shows this is not possible. Contradiction.

    \emph{Claim (2).} Two adjacent pairs that share a class cannot both be clubs. Suppose not: so $\{ q_k , q_{k+1} \}$ and $\{ q_{k+1} , q_{k+2} \}$ are both clubs. By the same argument as in Claim (1), they are comparable by the `dips below' definition, so we have $\{ q_k , q_{k+1} \} \lhd \{ q_{k+1} , q_{k+2} \}$ (with a relabeling of $k$ if necessary). Then notice that $\{ q_k , q_{k+1} \} \setminus \{ q_{k+1} , q_{k+2} \} = \{ q_k \}$ and $\{ q_{k+1} , q_{k+2} \} \setminus \{ q_k , q_{k+1} \} = \{ q_{k+2} \}$. This implies that $|\{ q_k , q_{k+1} \} \setminus \{ q_{k+1} , q_{k+2} \}| = |\{ q_{k+1} , q_{k+2} \} \setminus \{ q_k , q_{k+1} \}| = 1$. Then letting $S = \{q_k , q_{k+1}\}$, $S' = \{q_{k+1},q_{k+2}\}$ (with $S \lhd S'$) \Cref{lem:jump}\emph{(ii)} shows this is not possible. Contradiction. 
    
    \emph{Claim (3).} At most two adjacent pairs are clubs. By Claim 2, overlapping adjacent pairs cannot both be clubs. So clubs made up of adjacent pairs must use disjoint sets of classes. Each club that is an adjacent pair therefore `uses up' two of the four classes, leaving room for at most two such clubs. Putting \emph{(1)} and \emph{(3)} together, there are at most three separating clubs: at most one singleton, and at most two adjacent pairs.

    \textbf{Step 4: alternative \emph{(b)} imposes four requirements.} We now show that alternative \emph{(b)} demands more separating clubs than exist. Under \emph{(b)} every class is pressured by both of its neighbors. Applying this to the class $q_{k+1}$, whose neighbors are $q_k$ and $q_{k+2}$, gives $q_k \succ q_{k+1}$ for each $k = 1,2,3,4$. And $q_k \succ q_{k+1}$ means, by definition, that some club contains $q_k$ but not $q_{k+1}$. Write $R_k$ for this requirement. All four of $R_1 , R_2 , R_3 , R_4$ must hold. Moreover a club that meets $R_k$ contains $q_k$ and excludes $q_{k+1}$, so it is a separating club: neither $Q$ nor $\emptyset$ can meet any requirement.

    \textbf{Step 5: the four requirements cannot all be met.} For all of $R_1 , R_2 , R_3 , R_4$ to hold, we would need four distinct separating clubs. To see this, note that if a single separating club $S$ meets two requirements $R_j$ and $R_k$, then it contains both $q_j$ and $q_k$, and excludes both $q_{j+1}$ and $q_{k+1}$. Suppose $j \neq k$. If $k = j+1$, then $R_k$ places $q_{j+1}$ inside $S$ while $R_j$ keeps it out; and if $k = j-1$, then $R_k$ keeps $q_j$ outside $S$ while $R_j$ places it in. Both are impossible. That leaves only $k = j+2$, which forces $S = \{ q_j , q_{j+2} \}$ -- a diagonal pair, which Step 2 has already ruled out. Hence each requirement must be met by a different separating club. Then the previous paragraph shows that there are at most three separating clubs. So alternative \emph{(b)} cannot hold.
\end{proof}

\paragraph{Putting it all together: proof of \Cref{prop:class_symmetric}.} Fix a class-symmetric equilibrium. We recursively peel classes. At any stage, let $R\subseteq Q$ be the \emph{remaining} classes, let $\mathcal P:=Q\setminus R$ be the \emph{peeled} classes, and define
\[    \bar m_R:=\max_{p\in R}\operatorname{ess\,sup}X_p.   \]
Our induction hypothesis is that every $q\in\mathcal P$ has a cut-off bundle $X_q=[0,x_q]$ with $x_q\geq \bar m_R$. Thus every peeled class buys every good $m \in [0,\bar m_R]$. For each $S\subseteq R$, define the residual bracket:\footnote{These brackets retain all goods in $[0,\bar m_R]$. In particular, goods bought only by peeled classes belong to $E_\varnothing^R$ and remain available as possible deviations.}
\[    E_S^R:=\{m\in[0,\bar m_R]:C(m)\cap R=S\}. \]
For any remaining class $p\in R$, its marginal payoff on this interval can be written as
\[  \psi_p(m)  =  \underbrace{v+\sum_{q\in\mathcal P}\lambda_{pq}}_{=:v_p^{\mathcal P}}
 +\sum_{r\in R\setminus\{p\}}    \lambda_{pr}\mathbf 1\{m\in X_r\}  -\tau_p m^\alpha.  \]
Each peeled class contributes a class-specific constant because it buys every good in the interval. Thus peeling changes only the intercept of each remaining class's marginal payoff; it removes neither goods nor feasible deviations. This simply rewrites the marginal payoff, so \Cref{lem:marginal} still applies. Moreover, \Cref{lem:jump} applies to residual brackets, wit pressure restricted to classes in $R$ (as the $v_p^{\mathcal P}$ terms cancel when comparing two residual brackets). \Cref{lem:peel} then yields that: either some $p\in R$ buys every good in $[0,\bar m_R]$, or every class in the residual top club has at least two pressure-exerting neighbors inside that club.

Initially, $R=Q$ and $\mathcal P=\varnothing$. \Cref{lem:fulltop} rules out the second alternative and gives some $q$ with $X_q=[0,\bar m_Q]$. Peel this class. At every subsequent stage, the graph induced on $R$ is a proper subgraph of the original network, and hence a forest. So for every possible remaining club, there is always a class with at most one neighbor inside that club. Therefore, the second alternative is again impossible, and we have some $p\in R$ that buys every good in $[0,\bar m_R]$. Since no remaining class buys above $\bar m_R$, this gives $X_p=[0,\bar m_R]$ up to a null set. Peel $p$. Its cut-off endpoint is at least the next value of $\bar m_R$, so the induction hypothesis is preserved. Iterating shows that every class has a cut-off bundle.

    So a class-symmetric equilibrium is a cut-off profile. Finally, \Cref{lem:cutoff_br} shows that cut-off profiles are the unique best responses to cut-off profiles, and that the game `collapses' to the `main' model of \Cref{sec:model}. By \Cref{lem:A0_uniqueness} (also by the contraction mapping argument of the proof to \Cref{rem:cutoff_convergence}) the levels $x_q$ are uniquely determined, and the profile of cut-off sets is therefore unique and the same as in the `main' model. \hfill \qed

\subsection{Myopic Best Responses from Cut-offs Yield Equilibrium}
On the way to proving \Cref{rem:cutoff_convergence}, it is helpful to present a pair of building blocks. First, we show that cut-off profiles are `cheap', in the sense that they cost less than any other profile of the same mass, and are `effective', in the sense that they minimize the harmful social comparisons an agent whose friends are playing cut-off strategies faces. Second, we show that best responses to cut-off strategies are themselves cut-off strategies. 

\begin{lem}
    \label{lem:rearrangement}
    Fix $t \in [0,A]$ and let $X \subseteq [0,A]$ be measurable with $|X| = t$. Then
    \begin{enumerate}
        \item[\emph{(i)}] $M(X) \ \geq \ M\big([0,t]\big) = \tfrac{t^{1+\alpha}}{1+\alpha}$, with equality if and only if $X = [0,t]$ up to a set of measure zero. Equivalently, $\hat{X} \geq t$, with equality under the same condition.
        \item[\emph{(ii)}] $\big| X \cap [0,s] \big| \ \leq \ \min\{ t , s \} \ = \ \big| [0,t] \cap [0,s] \big|$ for every $s \in [0,A]$.
    \end{enumerate}
\end{lem}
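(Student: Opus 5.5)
The plan is a standard bathtub (rearrangement) argument. It runs on the decomposition $X = (X \cap [0,t]) \cup (X \setminus [0,t])$ together with the complementary piece $[0,t] \setminus X$. Since $|X| = t = |[0,t]|$, subtracting the common part $|X \cap [0,t]|$ gives
\begin{align*}
    |X \setminus [0,t]| \ = \ |[0,t] \setminus X| \ =: \ \delta .
\end{align*}
Part (ii) will then be immediate, and part (i) will follow from comparing the integral of the increasing weight $m^\alpha$ over the two pieces that are swapped.

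For part (ii), fix $s \in [0,A]$. I would note that $X \cap [0,s] \subseteq X$, so its measure is at most $t$. It is also contained in $[0,s]$, so its measure is at most $s$. Hence $|X \cap [0,s]| \leq \min\{t,s\}$. The right-hand side is $\min\{t,s\} = |[0,\min\{t,s\}]| = |[0,t]\cap[0,s]|$. No further work is needed here.

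For part (i), I would write
\begin{align*}
    M(X) - M([0,t]) \ = \ \int_{X \setminus [0,t]} m^\alpha \, dm \ - \ \int_{[0,t] \setminus X} m^\alpha \, dm .
\end{align*}
On the first set every $m$ satisfies $m > t$, so $m^\alpha > t^\alpha$ because $\alpha > 0$. On the second set every $m$ satisfies $m \leq t$, so $m^\alpha \leq t^\alpha$. The difference is therefore at least $t^\alpha \delta - t^\alpha \delta = 0$. This gives $M(X) \geq M([0,t]) = \int_0^t m^\alpha\,dm = \tfrac{t^{1+\alpha}}{1+\alpha}$.

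The equality case is the only step that needs care. If $\delta > 0$, the first integral is strictly greater than $t^\alpha \delta$, because the integrand exceeds $t^\alpha$ pointwise on a set of positive measure. Hence the inequality is strict. So equality forces $\delta = 0$. In that case $X$ and $[0,t]$ differ by a null set. The converse direction is trivial. The degenerate case $t = 0$ is immediate, since then both sides vanish and $X$ is null.

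Finally, $u \mapsto [(1+\alpha)u]^{1/(1+\alpha)}$ is strictly increasing, and it maps $\tfrac{t^{1+\alpha}}{1+\alpha}$ to $t$. Applying it to both sides translates the statement about $M$ into the equivalent statement $\hat X \geq t$, with equality under the same condition.
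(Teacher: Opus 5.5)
Your proposal is correct and follows essentially the same argument as the paper: the equal-measure swap between $X\setminus[0,t]$ and $[0,t]\setminus X$ against the increasing weight $m^\alpha$ for (i), then the strictly increasing map $u \mapsto [(1+\alpha)u]^{1/(1+\alpha)}$ for the $\hat{X}$ version, and the containment bound for (ii). Your justification of strictness, namely that a strictly positive integrand $m^\alpha - t^\alpha$ integrated over a set of positive measure gives a strictly positive integral, makes explicit the step the paper states only briefly.
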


\paragraph{Proof.} \emph{(i)} Since $|X| = \big|[0,t]\big| = t$, the two halves of the symmetric difference have the same measure: writing $\eta := \big| X \setminus [0,t] \big|$,
    \begin{align*}
        \big| [0,t] \setminus X \big| \ = \ t - \big| X \cap [0,t] \big| \ = \ |X| - \big| X \cap [0,t] \big| \ = \ \eta .
    \end{align*}
    Canceling the goods the two bundles share,
    \begin{align*}
        M(X) - M\big([0,t]\big)
        \ = \ \int_{X \setminus [0,t]} m^{\alpha} \, dm \ - \int_{[0,t] \setminus X} m^{\alpha} \, dm .
    \end{align*}
    Every good in the first set has $m > t$ and so costs $m^\alpha > t^\alpha$; every good in the second has $m \leq t$ and so costs $m^\alpha \leq t^\alpha$. The two sets have the same measure $\eta$, so the difference is at least $\eta \, t^{\alpha} - \eta \, t^{\alpha} = 0$, and strictly positive whenever $\eta > 0$. Finally $\eta = 0$ says exactly that $X$ and $[0,t]$ agree up to a set of measure zero. The statement for $\hat{X}$ follows because $\hat{X}$ is a strictly increasing function of $M(X)$ which equals $t$ at $M(X) = \tfrac{t^{1+\alpha}}{1+\alpha}$.

    \emph{(ii)} $\big| X \cap [0,s] \big| \leq \min\big\{ |X| , \big|[0,s]\big| \big\} = \min\{t,s\}$, and $[0,t] \cap [0,s] = \big[ 0 , \min\{t,s\} \big]$. \hfill \qed

\begin{lem}
    \label{lem:cutoff_br}
    Suppose every agent $j \neq i$ plays a cut-off strategy, $X_j = [0 , x_j]$. Then the function
    \begin{align}\label{eq:cutoff_objective}
        \Pi_i(t) \ := \ u(t , y_i) \ - \ \sum_{j \neq i} G_{ij} \, F\big( \max\{ x_j - t , \, 0 \} \big)
    \end{align}
    has a unique maximizer $BR_i(x_{-i})$ on $[0,A]$, and agent $i$ has a \emph{unique} best response, namely the \emph{cut-off} strategy $\big[ 0 , BR_i(x_{-i}) \big]$.
\end{lem}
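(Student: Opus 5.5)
The plan has two parts. First I reduce agent $i$'s problem over measurable sets to a one-dimensional problem over the mass $t$, using \Cref{lem:rearrangement}. Then I show that the one-dimensional objective $\Pi_i$ is strictly concave, so it has a unique maximizer.

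\textbf{Step 1 (rearrangement bound).} Fix any measurable $X \subseteq [0,A]$ with $|X| = t$. Because every $X_j = [0,x_j]$, we have
\[
|X_j \setminus X| = x_j - |X \cap [0,x_j]| .
\]
By \Cref{lem:rearrangement}(ii), this is at least $x_j - \min\{t,x_j\} = \max\{x_j - t, 0\}$. Since $F$ is increasing, each comparison cost is therefore weakly larger under $X$ than under $[0,t]$.

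By \Cref{lem:rearrangement}(i), $\hat X \ge t$, with equality iff $X = [0,t]$ a.e. By \Cref{lem:cost_decomposition}, $\kappa_x > 0$, so $\kappa(\hat X, y_i) \ge \kappa(t, y_i)$, with strict inequality unless $X = [0,t]$ a.e. The benefit term $v\,t$ is the same for both sets.

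Combining these, $U_i(X) \le U_i([0,t]) = v t - \kappa(t,y_i) - \sum_j G_{ij} F(\max\{x_j - t,0\}) = \Pi_i(t)$. The inequality is strict whenever $X$ is not a.e. equal to $[0,t]$. Hence any best response must be a cut-off set $[0,t^*]$ with $t^*$ maximizing $\Pi_i$ on $[0,A]$. Conversely, any such $[0,t^*]$ attains the supremum of $U_i$ over all measurable sets.

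\textbf{Step 2 (unique maximizer of $\Pi_i$).} $\Pi_i$ is continuous on the compact set $[0,A]$, so a maximizer exists. The map $t \mapsto F(\max\{x_j - t, 0\})$ is convex, since it is a convex nondecreasing function composed with a convex function. It is $C^1$ because $F'(0)=0$. The term $u(\cdot,y_i)$ is strictly concave by \Cref{ass:u_fn}(i). So $\Pi_i$ is strictly concave and its maximizer is unique.

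Its derivative is the $\Psi_i$ of \Cref{eq:marginal_payoff}, which is strictly decreasing. At $t=0$ it is $u_x(0,y_i) + (\text{nonnegative}) > 0$, so the maximizer is strictly positive. At $t = A$ it is $u_x(A, y_i) < 0$, because $A$ is chosen large enough that $A > x^a(y_{HR}) \ge \max_j x_j$ along relevant profiles. So the maximizer is the interior root of $\Psi_i$, which coincides with (P1). Call it $BR_i(x_{-i})$. Uniqueness of the set-valued best response up to null sets then follows from the strictness in Step 1.

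\textbf{Expected obstacle.} The only delicate point is the strictness bookkeeping in Step 1. The comparison-cost inequality alone need not be strict; strictness must come from the expenditure term, via $\kappa_x>0$ and the equality case of \Cref{lem:rearrangement}(i). A second point needs care: if some $x_j$ exceeds the relevant bound, the boundary behaviour at $A$ relies on choosing $A$ ``sufficiently large''. If it is not, one still has a unique maximizer on $[0,A]$ by strict concavity, possibly at $A$, and the argument goes through unchanged.
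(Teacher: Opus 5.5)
Your proposal is correct and follows the paper's proof essentially step for step: the rearrangement bound $U_i(X)\le \Pi_i(|X|)$, with strictness supplied by the expenditure term via $\kappa_x>0$, followed by strict concavity and continuity of $\Pi_i$ on $[0,A]$. The interiority analysis is extra; the paper carries it out later, in the proof of \Cref{rem:cutoff_convergence}. There, $\Psi_i(A)=u_x(A,y_i)$ holds simply because every $x_j\le A$, so you do not need the bound $x^a(y_{HR})\ge \max_j x_j$, which in any case fails for arbitrary cut-off profiles.
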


\paragraph{Proof.}
    Fix $X_{-i}$ with $X_j = [0 , x_j]$ for every $j \neq i$, let $X \subseteq [0,A]$ be any strategy for $i$, and write $t := |X|$. The argument has two halves: Step 1 shows that replacing $X$ by the cut-off bundle of the same mass -- its \emph{cut-off rearrangement} $[0,t]$ -- raises $i$'s payoff, strictly unless $X$ was already of that form; Step 2 then optimizes over the mass.

    \textbf{Step 1: $U_i(X , X_{-i}) \leq \Pi_i(t)$, with equality if and only if $X = [0,t]$ up to a set of measure zero.}
    The benefit term of \Cref{eq:enriched_payoff_sets} is $v \, t$, which depends on $X$ only through $t$. For the cost term, \Cref{lem:rearrangement}\emph{(i)} gives $\hat{X} \geq t$, strictly unless $X$ is the cut-off set, so $\kappa_x > 0$ gives
    \begin{align*}
        - \, \kappa\big( \hat{X} , y_i \big) \ \leq \ - \, \kappa(t , y_i) ,
    \end{align*}
    again strictly unless $X$ is the cut-off set. For the comparison terms, \Cref{lem:rearrangement}\emph{(ii)} applied at $s = x_j$ gives
    \begin{align*}
        \big| X_j \setminus X \big| \ = \ x_j - \big| X \cap [0,x_j] \big| \ \geq \ x_j - \min\{ t , x_j \} \ = \ \max\{ x_j - t , \, 0 \} ,
    \end{align*}
    so each comparison cost is at least $G_{ij} F\big( \max\{x_j - t , 0\} \big)$, as $F$ is increasing and $G_{ij} \geq 0$. Summing the three terms and using $u(t,y_i) = v \, t - \kappa(t , y_i)$ delivers $U_i(X , X_{-i}) \leq \Pi_i(t)$. The cost term alone is strict whenever $X$ is not the cut-off set, which gives the equality claim.

    \textbf{Step 2: $\Pi_i$ has a unique maximizer on $[0,A]$.}
    The term $u(\cdot , y_i)$ is strictly concave by \Cref{ass:u_fn}\emph{(i)}. For each $j$, the map $t \mapsto \max\{x_j - t , 0\}$ is convex, and $F$ is increasing and convex, so the composition $t \mapsto F\big( \max\{x_j - t , 0\} \big)$ is convex; as $G_{ij} \geq 0$, the comparison term of $\Pi_i$ is concave. Hence $\Pi_i$ is strictly concave. It is also continuous on the compact interval $[0,A]$, so it attains a maximum, at a unique point $BR_i(x_{-i})$.

    \textbf{Step 3: conclusion.} Write $BR_i := BR_i(x_{-i})$. For an arbitrary strategy $X$, Steps 1 and 2 give
    \begin{align*}
        U_i(X , X_{-i}) \ \leq \ \Pi_i\big( |X| \big) \ \leq \ \Pi_i(BR_i) \ = \ U_i\big( [0 , BR_i] , X_{-i} \big) ,
    \end{align*}
    the final equality because Step 1 holds with equality at a cut-off set. So $[0,BR_i]$ is a best response. It is the only one: equality throughout requires $X = \big[0 , |X|\big]$ up to a set of measure zero, by Step 1, and then $|X| = BR_i$, by the uniqueness in Step 2. \hfill \qed

\paragraph{Some more notation.} Recall from \Cref{sec:proofs} the marginal payoff of the `main' model,
\begin{align*}
    \Psi_i(t , x_{-i}) \ = \ u_x(t , y_i) \ + \sum_{j \neq i} G_{ij} \, F'\big( \max\{ x_j - t , \, 0 \} \big) ,
\end{align*}
which is continuous. Write
\begin{align*}
    \underline{\sigma}_u \ := \min_{\substack{t \in [0,A] \\ y \in [y_{LP} , y_{HR}]}} \big( - u_{xx}(t , y) \big)
    \qquad \text{and} \qquad
    \bar{\sigma}_F \ := \max_{z \in [0,A]} F''(z)
\end{align*}
for the smallest curvature of the intrinsic benefit and the largest curvature of comparison costs, over the ranges that can arise. Both extrema are attained, the sets being compact and the functions continuous, and $\underline{\sigma}_u > 0$ because $-u_{xx} > 0$ everywhere by \Cref{ass:u_fn}\emph{(i)}. 

\paragraph{Proof of \Cref{rem:cutoff_convergence}.}
    \textbf{Step 1: the dynamic is the iteration of a map on $[0,A]^n$.}
    By hypothesis, $X^0$ is a profile of cut-off strategies. If $X^{t-1}$ is a profile of cut-off strategies, then \Cref{lem:cutoff_br} gives each agent a unique best response, itself a cut-off strategy, with level $BR_i(x^{t-1}_{-i})$. So by induction every $X^t$ is a profile of cut-off strategies, and $x^t = BR(x^{t-1})$, where $BR := (BR_i)_{i \in N}$ maps $[0,A]^n$ into itself.

    \textbf{Step 2: $BR_i(x_{-i})$ is interior, and is the unique root of $\Psi_i(\cdot , x_{-i})$.}
    By \Cref{lem:cutoff_br}, $\Pi_i$ is the `main' model payoff, so $\Pi_i' = \Psi_i$ and $\Pi_i$ is strictly concave; hence $\Psi_i(\cdot , x_{-i})$ is strictly decreasing and has at most one root. At the two endpoints,
    \begin{align*}
        \Psi_i(0 , x_{-i}) \ = \ u_x(0,y_i) + \sum_{j \neq i} G_{ij} F'(x_j) \ > \ 0 ,
        \qquad
        \Psi_i(A , x_{-i}) \ = \ u_x(A , y_i) \ < \ 0 ,
    \end{align*}
    the first by \Cref{ass:u_fn}(i) and $F' \geq 0$; the second because $x_j \leq A$ makes every comparison term vanish, while $A > x^a(y_{HR}) \geq x^a(y_i)$ and $u_x(\cdot , y_i)$ is strictly decreasing through zero at $x^a(y_i)$. So the root exists, lies in $(0,A)$, and equals $BR_i(x_{-i})$.

    \textbf{Step 3: $BR$ is a $\zeta$-contraction in the supremum norm.}
    Fix $x , x' \in [0,A]^n$, write $\delta := \| x - x' \|_\infty$, and fix an agent $i$. Relabeling if necessary, suppose $BR_i(x'_{-i}) \geq BR_i(x_{-i})$, and set $t := BR_i(x_{-i})$ and $h := BR_i(x'_{-i}) - t \geq 0$. By Step 2 both are roots, so
    \begin{align}\label{eq:contraction_difference}
        0 \ = \ \Psi_i(t + h , x'_{-i}) \ - \ \Psi_i(t , x_{-i})
        \ = \ \big[ u_x(t+h , y_i) - u_x(t , y_i) \big]
        \ + \sum_{j \neq i} G_{ij} \, \Delta_j ,
    \end{align}
    where $\Delta_j := F'\big( (x'_j - t - h)^+ \big) - F'\big( (x_j - t)^+ \big)$ and $(\cdot)^+ := \max\{ \cdot \, , 0 \}$. The first bracket is at most $- \underline{\sigma}_u \, h$. For the second, $x'_j \leq x_j + \delta$ and $F'$ is increasing, so
    \begin{align*}
        \Delta_j \ \leq \ F'\big( (x_j - t + \delta - h)^+ \big) - F'\big( (x_j - t)^+ \big) \ \leq \ \bar{\sigma}_F \, \max\{ \delta - h , \, 0 \} ,
    \end{align*}
    the last step because $r \mapsto r^+$ is $1$-Lipschitz and $F'$ is $\bar{\sigma}_F$-Lipschitz on $[0,A]$, and because the left-hand difference is non-positive when $\delta \leq h$. Substituting both bounds into \Cref{eq:contraction_difference}, and using $\sum_{j \neq i} G_{ij} = d$,
    \begin{align*}
        0 \ \leq \ - \, \underline{\sigma}_u \, h \ + \ d \, \bar{\sigma}_F \, \max\{ \delta - h , \, 0 \} .
    \end{align*}
    Were $h > \delta$ the right-hand side would equal $- \underline{\sigma}_u h < 0$, so $h \leq \delta$; and then $0 \leq - \underline{\sigma}_u h + d \bar{\sigma}_F (\delta - h)$, which rearranges to $(\underline{\sigma}_u + d \bar{\sigma}_F) h \leq d \bar{\sigma}_F \delta$. Then defining $\zeta:=\frac{d\bar{\sigma}_F}{\sigma_u+d\bar{\sigma}_F}$ yields $h\leq\zeta\delta$. And note that since $\sigma_u>0$, $d>0$, $\bar{\sigma}_F>0$ by assumption, we have $\zeta\in(0,1)$. As $i$ was arbitrary, $\| BR(x) - BR(x') \|_\infty \leq \zeta \, \| x - x' \|_\infty$.

    \textbf{Step 4: convergence, and the limit.}
    The set $[0,A]^n$ with the supremum norm is a complete metric space, so by the Banach fixed point theorem $BR$ has a unique fixed point $\bar{x}$, and $\| x^t - \bar{x} \|_\infty \leq \zeta^{\,t} \| x^0 - \bar{x} \|_\infty$. Now $\bar{x}$ is a fixed point of $BR$ exactly when every agent is best responding, and \Cref{lem:cutoff_br} gives this two readings at once. Read against \emph{levels}, $\Pi_i$ is the `main' model payoff, so $\bar{x}$ is a Nash equilibrium of that model and $\bar{x} = x^*$ by \Cref{lem:A0_uniqueness}. Read against \emph{bundles}, $\big[ 0 , \bar{x}_i \big]$ is $i$'s unique best response among \emph{all} measurable subsets of $[0,A]$, so $X^*_i = [0 , x^*_i]$ is a Nash equilibrium of the enriched model.

    \textbf{Step 5: the convergence is uniform.}
    Both $x^0$ and $x^*$ lie in $[0,A]^n$, so $\| x^0 - x^* \|_\infty \leq A$ and hence $\| x^t - x^* \|_\infty \leq \zeta^{\,t} A$ for \emph{every} cut-off starting profile. Given $\varepsilon > 0$, any date $t \geq \log(\varepsilon / A) / \log \zeta$ therefore places every agent within $\varepsilon$ of $x^*_i$, whatever the starting profile: the bound depends on neither $x^0$ nor $i$. \hfill \qed

\newpage
\section{The empirical relationship: robustness checks}\label{OA:empirical_relationship}

\subsection{Splitting the data}
In \Cref{sec:data}, we ran a quadratic regression on the sample of all US counties. In \Cref{tab:OA1}, we consider the top tercile (columns I-III) and the bottom tercile (columns IV-VI) separately. \Cref{tab:OA2} then reports top and bottom quartiles (columns I-II), and top and bottom halves (columns III-IV). 
All columns in \Cref{tab:OA2} exclude the mobility-integration interaction term.

\begin{table}[htbp]\centering
\def\sym#1{\ifmmode^{#1}\else\(^{#1}\)\fi}
\caption{WELL-BEING AND ECONOMIC MOBILITY -- REGRESSION RESULTS FOR TOP AND BOTTOM THIRD OF US COUNTIES}\label{tab:OA1}
\begin{tabular}{l*{6}{c}}
\toprule
Dep. var.: well-being
& \multicolumn{3}{c}{Top $1/3^{rd}$} & \multicolumn{3}{c}{Bottom $1/3^{rd}$} \\
          & (I) & (II) & (III) & (IV) & (V) & (VI) \\
\hline 
Mobility  &   -2.504\sym{**} &   -5.320         &   -1.461         &    2.802\sym{**} &    3.391         &    3.864\sym{***}\\
          &  (1.028)         &  (4.334)         &  (1.199)         &  (1.171)         &  (3.415)         &  (1.280)         \\
Integration&   -0.561\sym{**} &   -1.982         &                  &   -0.561\sym{***}&   -0.253         &                  \\
          &  (0.255)         &  (2.139)         &                  &  (0.215)         &  (1.690)         &                  \\
Mobility $\times$ Integration&                  &    2.921         &   -1.096\sym{**} &                  &   -0.839         &   -1.516\sym{***}\\
          &                  &  (4.368)         &  (0.522)         &                  &  (4.569)         &  (0.581)         \\
Median Income&    0.185\sym{***}&    0.186\sym{***}&    0.182\sym{***}&    0.004         &    0.005         &    0.006         \\
          &  (0.049)         &  (0.049)         &  (0.049)         &  (0.054)         &  (0.054)         &  (0.054)         \\
\hline
Other controls & YES & YES & YES & YES & YES & YES \\
\hline
R-squared & 0.491 & 0.491 & 0.490 & 0.532 & 0.532 & 0.532 \\
Observations&  829 &  829 &  829 &  927 &  927 &  927 \\
\bottomrule
\end{tabular}
\begin{center}{\parbox[b]{15cm}{\footnotesize \emph{Notes:} Standard errors in parentheses. \sym{*} \(p<0.10\), \sym{**} \(p<0.05\), \sym{***} \(p<0.01\). Other controls = State FE, Population, Religion, and Social capital controls (as in \Cref{tab:1}).}} 
\end{center}
\end{table}

\begin{table}[htbp]\centering
\def\sym#1{\ifmmode^{#1}\else\(^{#1}\)\fi}
\caption{WELL-BEING AND ECONOMIC MOBILITY -- ALTERNATIVE WAYS OF SPLITTING THE SAMPLE}\label{tab:OA2}
\begin{tabular}{l*{4}{c}}
\toprule
Dep. var.: well-being
& $Q1$ & $Q4$ & $H1$ & $H2$   \\
          & (I) & (II) & (III) & (IV) \\
\hline
Mobility  &    4.174\sym{***}&   -2.770\sym{**} &    3.254\sym{***}&   -1.472\sym{*}          \\
          &  (1.460)         &  (1.298)         &  (0.858)         &  (0.789)           \\
Integration &   -0.704\sym{***}&   -0.234         &   -0.543\sym{***}&   -0.750\sym{***}  \\
          &  (0.231)         &  (0.319)         &  (0.176)         &  (0.198)            \\
Median Income&    0.029         &    0.128\sym{**} &    0.044         &    0.157\sym{***} \\
          &  (0.061)         &  (0.060)         &  (0.042)         &  (0.040)                  \\
\hline
Other controls & YES & YES & YES & YES \\
\hline
R-squared & 0.544         &    0.445         &    0.514         &    0.483           \\
Observations&  680 &  591 & 1418 & 1310          \\
\bottomrule
\end{tabular}
\begin{center}{\parbox[b]{12cm}{\footnotesize \emph{Notes:} Standard errors in parentheses. \sym{*} \(p<0.10\), \sym{**} \(p<0.05\), \sym{***} \(p<0.01\). $Q=$ quartile, $H=$ half. $1$ denotes lowest quantile. Other controls = State FE, Population, Religion, and Social capital controls (as in \Cref{tab:1}).}} 
\end{center}
\end{table}

\subsection{Trimming the data}
While visual inspection of the data (\Cref{fig:histograms}) does not suggest that either the well-being or the social mobility data contains significant outliers, it is still helpful to confirm that the `inverse-U' shaped relationship is not being driven by the extremes. \Cref{tab:OA3} removes the top and bottom 5\% (column I) and top and bottom 10\% (column II) of observations by value of the well-being variable. It also removes the top and bottom 5\% (column III) and top and bottom 10\% (column IV) of observations by value of the social mobility variable. 

\begin{figure}[htbp]
    \centering
    \begin{subfigure}[b]{0.95\textwidth}
        \centering
        \includegraphics[width=\linewidth]{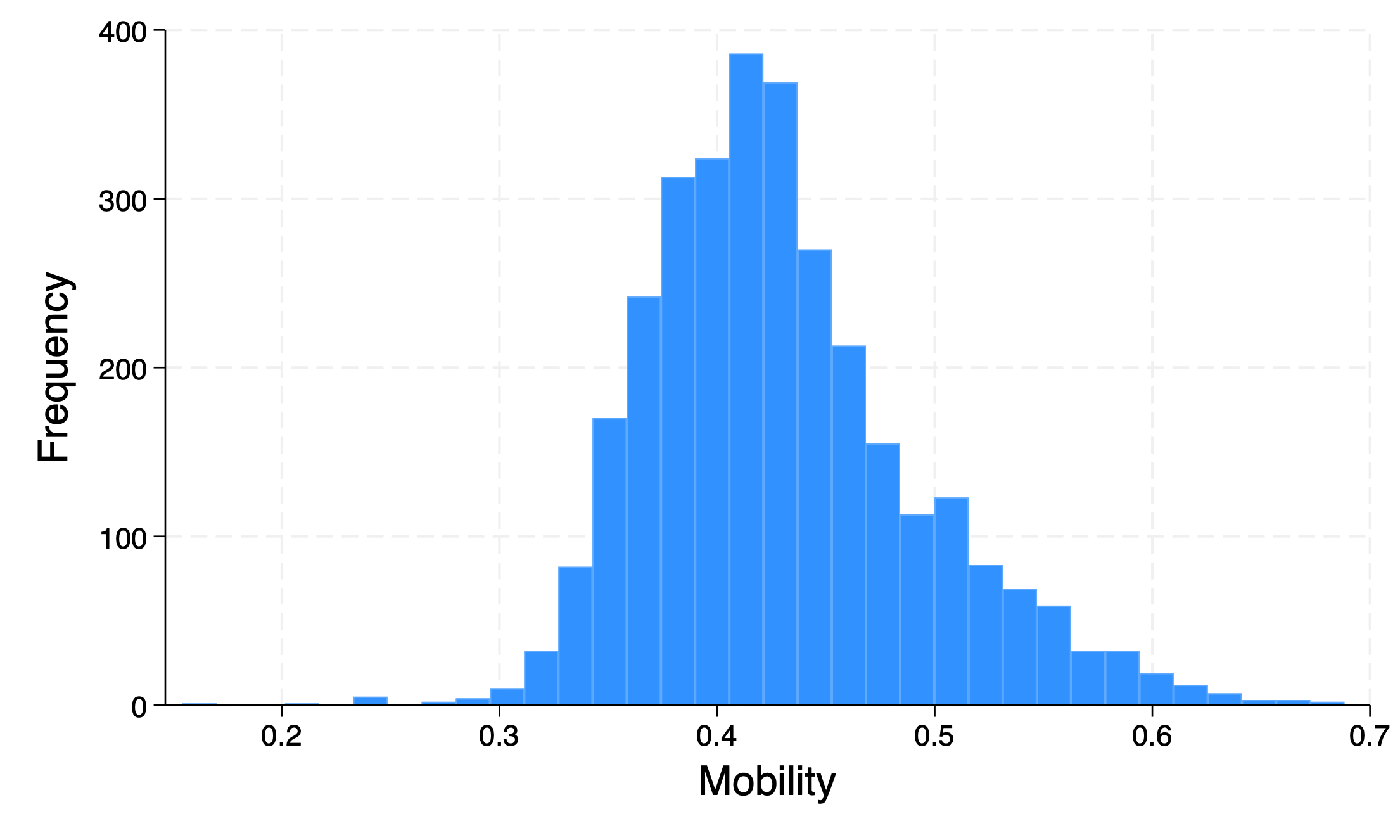}
    \end{subfigure}
    \hfill
    \begin{subfigure}[b]{0.95\textwidth}
        \centering
        \includegraphics[width=\linewidth]{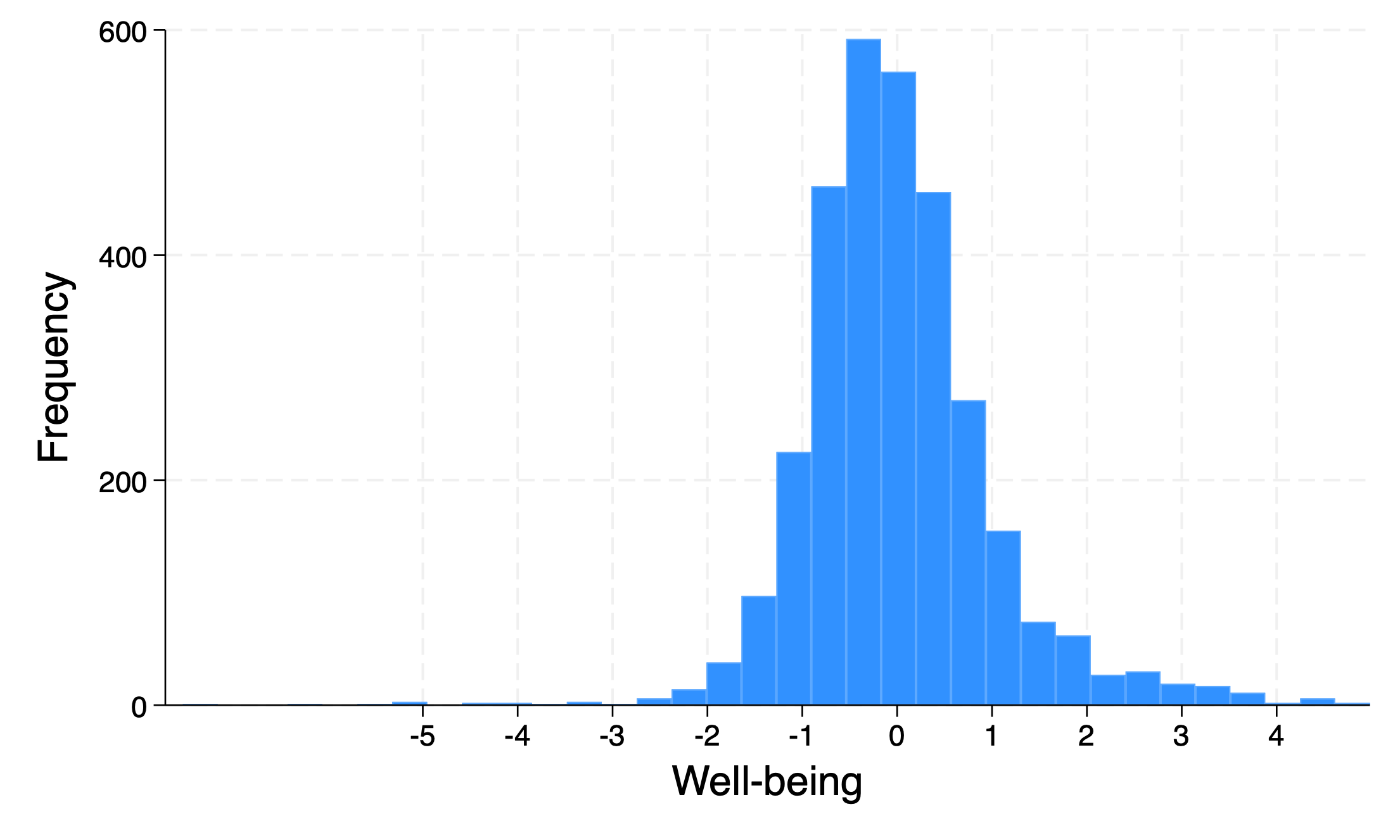}
    \end{subfigure}
    \caption{Histograms of values of mobility (top panel) and well-being (happiness, bottom panel) at US county level.}
    \label{fig:histograms}
\end{figure}

\begin{table}[htbp]\centering
\def\sym#1{\ifmmode^{#1}\else\(^{#1}\)\fi}
\caption{WELL-BEING AND ECONOMIC MOBILITY -- REGRESSION RESULTS AFTER TRIMMING UPPER AND LOWER TAILS}\label{tab:OA3}
\begin{tabular}{l*{4}{c}}
\toprule
Dep. var.: well-being
& $5\%$ happiness & $10\%$ happiness & $5\%$ mobility & $10\%$ mobility  \\
          & (I) & (II) & (III) & (IV) \\
\hline 
Mobility  &   16.137\sym{***}&   15.773\sym{***}&  25.253\sym{***}                &   25.854\sym{***}               \\
          &  (2.615)         &  (2.438)         &  (5.441)                &  (8.037)                \\
Mobility$^{2}$ &  -17.287\sym{***}&  -17.344\sym{***}&  -27.339\sym{***}&  -27.908\sym{***}\\
          &  (2.915)         &  (2.714)         &  (6.169)         &  (9.268)         \\
Integration&   -0.483\sym{***}&   -0.339\sym{***}&   -0.739\sym{***}&   -0.698\sym{***}\\
          &  (0.103)         &  (0.094)         &  (0.136)         &  (0.143)         \\
Median Income&    0.078\sym{***}&    0.072\sym{***}&    0.150\sym{***}&    0.142\sym{***}\\
          &  (0.022)         &  (0.020)         &  (0.029)         &  (0.030)         \\
\hline
Other controls & YES & YES & YES & YES \\
\hline
R-squared &    0.464         &    0.407         &    0.491         &    0.495         \\
Observations& 2526         & 2268  & 2539   & 2294      \\
\bottomrule
\end{tabular}
\begin{center}{\parbox[b]{15cm}{\footnotesize \emph{Notes:} Standard errors in parentheses. \sym{*} \(p<0.10\), \sym{**} \(p<0.05\), \sym{***} \(p<0.01\). Other controls = State FE, Population, Religion, and Social capital controls (as in \Cref{tab:1}). }} 
\end{center}
\end{table}

\newpage
\subsection{Alternative Measures of Well-being}
In \Cref{sec:data}, we used a measure of happiness from the \emph{Human Flourishing Geographical Index} \citep{iacus2025human} as our proxy for well-being. Happiness is perhaps the most direct proxy for the utility outcomes in our model. It is also the measure in the \emph{Human Flourishing Geographical Index} that has the most data.\footnote{\cite{iacus2025human} note that over $80\%$ of tweets in their dataset are judged to be `salient' for happiness (i.e. contain sentiments -- whether positive or negative -- that are related to happiness). The next most salient measure is `optimism'.}
Here, we use measures of life satisfaction and of optimism as two alternatives. After happiness, life satisfaction is perhaps conceptually closest to our notion of utility. Optimism is the variable with the second most `salient' data in the \emph{Human Flourishing Geographical Index} (after happiness). Further, we separately add income inequality as a control variable, to confirm that the relationship between well-being and mobility is not just a reflection of the relationship between well-being and inequality. \Cref{tab:OA4} reports the results.

\begin{table}[htbp]\centering
\def\sym#1{\ifmmode^{#1}\else\(^{#1}\)\fi}
\caption{WELL-BEING AND ECONOMIC MOBILITY -- INCLUDING INEQUALITY AND ALTERNATIVE MEASURES OF WELL-BEING}\label{tab:OA4}
\begin{tabular}{l*{6}{c}}
\toprule
Dep. var.:
& Happiness & Happiness & Life Sat. & Life Sat. & Optimism & Optimism  \\
          & (I) & (II) & (III) & (IV) & (V) & (VI) \\
\hline 
Mobility  &   21.668\sym{***}&   22.026\sym{***}&   11.601\sym{***}&   11.928\sym{***}&   20.029\sym{***}&   20.703\sym{***}\\
          &  (3.217)         &  (3.219)         &  (3.035)         &  (3.040)         &  (3.353)         &  (3.354)         \\
Mobility$^{2}$&  -23.547\sym{***}&  -26.985\sym{***}&  -10.939\sym{***}&  -13.468\sym{***}&  -22.769\sym{***}&  -27.981\sym{***}\\
          &  (3.598)         &  (3.904)         &  (3.400)         &  (3.709)         &  (3.757)         &  (4.092)         \\
Integration&   -0.766\sym{***}&   -2.110\sym{***}&   -0.669\sym{***}&   -1.632\sym{***}&   -0.698\sym{***}&   -2.683\sym{***}\\
          &  (0.131)         &  (0.609)         &  (0.123)         &  (0.579)         &  (0.136)         &  (0.638)         \\
Mobility $\times$ Integration &                  &    3.179\sym{**} &                  &    2.273\sym{*}  &                  &    4.684\sym{***}\\
          &                  &  (1.407)         &                  &  (1.334)         &                  &  (1.472)         \\
Inequality (Gini)      &    2.363\sym{***}&    2.252\sym{***}&                  &                  &                  &                  \\
          &  (0.495)         &  (0.497)         &                  &                  &                  &                  \\
Median Income&    0.178\sym{***}&    0.169\sym{***}&    0.106\sym{***}&    0.101\sym{***}&    0.135\sym{***}&    0.123\sym{***}\\
          &  (0.029)         &  (0.029)         &  (0.026)         &  (0.026)         &  (0.029)         &  (0.029)         \\
\hline
Other controls & YES & YES & YES & YES  & YES & YES \\
\hline
R-squared &    0.496         &    0.497         &    0.541         &    0.542         &    0.417         &    0.419         \\
Observations& 2728 & 2728    & 2728  & 2728  & 2728   & 2728       \\
\bottomrule
\end{tabular}
\begin{center}{\parbox[b]{16cm}{\footnotesize \emph{Notes:} Standard errors in parentheses. \sym{*} \(p<0.10\), \sym{**} \(p<0.05\), \sym{***} \(p<0.01\). Other controls = State FE, Population, Religion, and Social capital controls (as in \Cref{tab:1}). }} 
\end{center}
\end{table}

\subsection{Adding Higher-order Terms}
As an additional check, we also add higher-order terms for integration, income inequality and social mobility. This leaves some individual coefficients not significantly different from zero -- especially in columns (IV) and (V) -- due to high correlation between the linear and respective higher-order terms. But joint significance tests confirm that there is a significant effect of mobility, of integration, and of inequality (Wald test, $p<0.001$). Further, the non-linear relationship between happiness and mobility is also significant (Wald test, $p<0.001$).\footnote{More precisely, the joint significance test on the quadratic and cubic mobility terms rejects the null hypothesis that both are equal to zero.}

\begin{table}[htbp]\centering
\def\sym#1{\ifmmode^{#1}\else\(^{#1}\)\fi}
\caption{WELL-BEING AND ECONOMIC MOBILITY -- ADDING HIGHER-ORDER TERMS}\label{tab:OA5}
\begin{tabular}{l*{5}{c}}
\toprule
Dep. var.: well-being
          & (I) & (II) & (III) & (IV) & (V) \\
\hline 
Mobility  &   19.563\sym{***}&   19.606\sym{***}&   21.662\sym{***}&    2.641         &    4.815         \\
          &  (3.200)         &  (3.199)         &  (3.219)         & (19.777)         & (19.720)         \\
Mobility$^{2}$&  -21.432\sym{***}&  -21.496\sym{***}&  -23.540\sym{***}&   17.108         &   14.780         \\
          &  (3.585)         &  (3.584)         &  (3.600)         & (44.594)         & (44.455)         \\
Mobility$^{3}$ &                  &                  &                  &  -28.657         &  -28.481         \\
          &                  &                  &                  & (33.050)         & (32.944)         \\
Integration&   -0.683\sym{***}&   -1.300\sym{***}&   -0.766\sym{***}&   -0.686\sym{***}&   -1.164\sym{***}\\
          &  (0.130)         &  (0.405)         &  (0.131)         &  (0.130)         &  (0.405)         \\
Integration$^{2}$&                  &    0.366         &                  &                  &    0.236         \\
          &                  &  (0.228)         &                  &                  &  (0.229)         \\
Inequality (Gini)     &                  &                  &    2.855         &                  &    2.979         \\
          &                  &                  &  (6.811)         &                  &  (6.815)         \\
Inequality$^{2}$ (Gini$^{2}$)     &                  &                  &   -0.542         &                  &   -0.746         \\
          &                  &                  &  (7.491)         &                  &  (7.496)         \\
Median Income&    0.140\sym{***}&    0.131\sym{***}&    0.179\sym{***}&    0.138\sym{***}&    0.169\sym{***}\\
          &  (0.028)         &  (0.028)         &  (0.029)         &  (0.028)         &  (0.030)         \\
\hline
Other controls & YES & YES & YES & YES & YES  \\
\hline
R-squared &    0.492         &    0.492         &    0.492         &    0.492         &    0.496         \\
Observations& 2728 & 2728 & 2728  & 2728 & 2728 \\
\bottomrule
\end{tabular}
\begin{center}{\parbox[b]{16cm}{\footnotesize \emph{Notes:} Standard errors in parentheses. \sym{*} \(p<0.10\), \sym{**} \(p<0.05\), \sym{***} \(p<0.01\). Other controls = State FE, Population, Religion, and Social capital controls (as in \Cref{tab:1}).}} 
\end{center}
\end{table}

\subsection{Leavers in the data}
In the social mobility variable, `Child' incomes are measured in 2014--15 for birth cohorts 1978--83. Income measured in 2014--15 is allocated to the counties that the relevant child grew up in, weighted by the time they spent in each county. This means the economic mobility of a county is based on people who were born around 1980 and were living in that county as a child. These people may not still live in the county at the point their income is measured (although many will). In contrast, all other variables are based on the people actually living in the county at the point the relevant information is collected.\footnote{Note that for urbanization and religion controls, our data is from 2020 -- the closest available census. But these are slower moving variables, and nothing substantive would change were we to use data from an earlier year for these variables.}

This means our economic mobility variable will pick up some aspect of \emph{geographic} mobility. Consider a case where people who grew up in county A `leave' to county B. If county B is richer [resp. poorer] than county A, then this could mechanically increase [resp. decrease] the measure of economic mobility for county A. Suppose these `leavers' retain the same within-county income rank they would have done had they stayed in county A. Loosely speaking, we can view this as a benchmark assumption that people do not become more or less socially mobile as a result of moving counties. Even under this assumption, these `leavers' will move up [resp. down] the \emph{national} income ranking -- simply by virtue of having moved to a richer [resp. poorer] county. 
If the counties that `leavers' move to, on average, `look the same' as the county that they left, then this might all wash out. 

But `leavers' disproportionately move to \emph{richer} counties (see for example, \cite{borjas1992self, kennan2011effect, kaplan2017understanding, purcell2020geographic}, and citations within). This would mean that counties with more `leavers' end up with a higher measure of economic mobility. In principle, it is not immediately clear whether this is a bias in the measure or the correct way of measuring economic mobility. But for the purposes of our model -- where we have abstracted from geography and how people might move between different locations -- it is more likely to be a bias.

Additionally, it is possible that `leavers' differ on unobservables from `stayers' (those who as adults live in the same county they grew up in) in a way that relates to their well-being. We do not have a strong view as to whether `leavers' would have systematically higher or lower well-being than `stayers'. But it would be more problematic for our argument if `leavers' had systematically higher well-being. In that case, counties with more `leavers' would have lower reported well-being -- because people more prone to high well-being have left.

Taking these two conjectures together, if both true, would lead to a downward bias in the relationship between well-being and mobility. This need not be a problem for our argument -- an overall downward bias in the relationship would not in itself erroneously generate the `inverse-U' shaped relationship we document.

The problem for our argument would arise if the counties with higher (true) mobility also have more `leavers'. Then in our data, the downward bias in the relationship between mobility and well-being would be greater for higher mobility counties. If this were sufficiently extreme, this could artificially create the downwards-sloping part of our `inverse-U' shaped relationship -- in what would otherwise be a monotonically increasing relationship between well-being and mobility.

However, high mobility counties are typically richer (see \Cref{fig:F1}). And movement is disproportionately from lower income counties to higher income ones (see citations above). We argue that, taken together, it is unlikely that a downwards bias in the relationship between mobility and well-being (to the extent that such a bias exists) is disproportionately affecting highly mobile counties. In fact, it seems more plausible that it should affect less mobile counties -- which are poorer on average and so more likely to have more `leavers' moving to richer counties.

\begin{figure}[h!]
    \centering
    \includegraphics[width=0.95\linewidth]{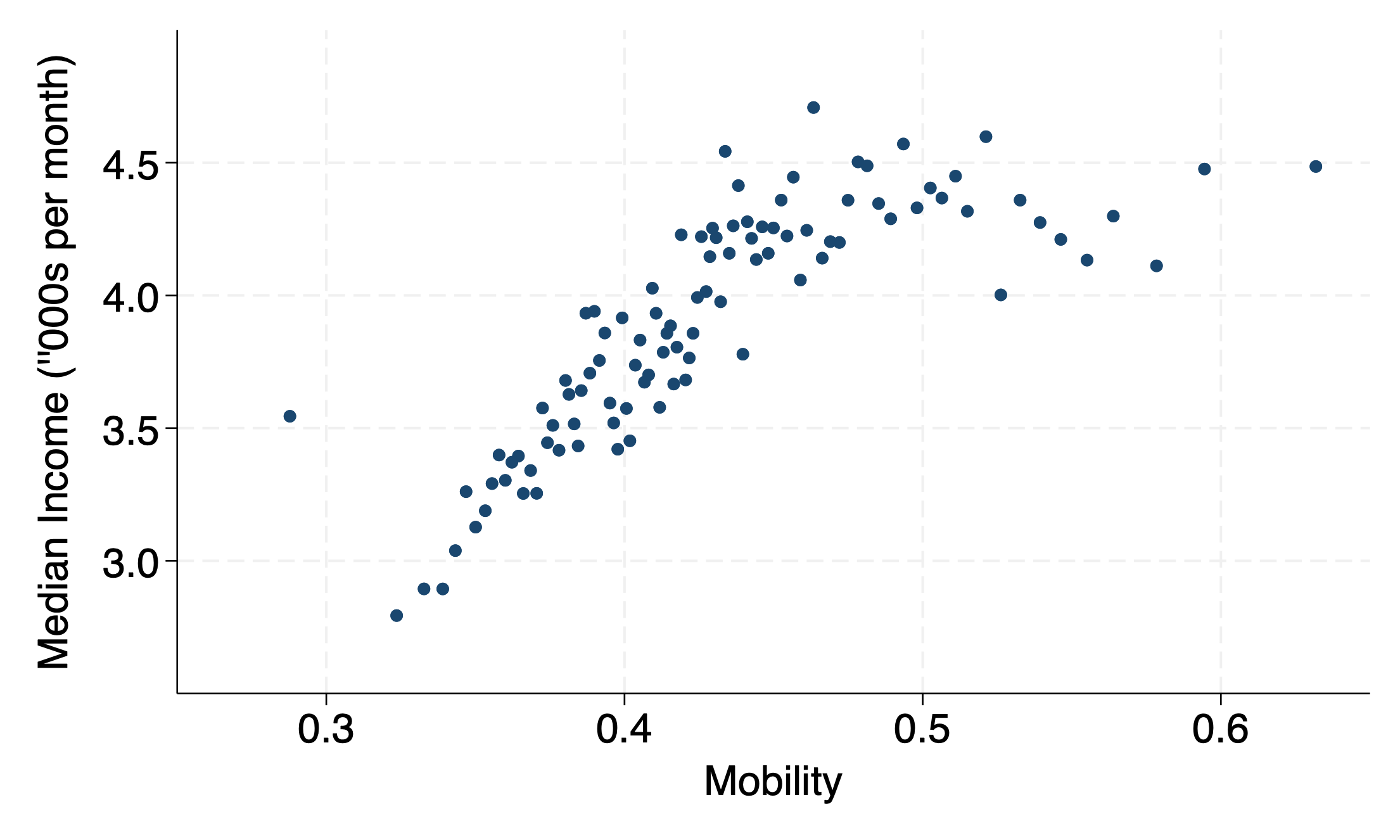}
    \caption{Median income and social mobility, binned scatter plot.}
    \label{fig:F1}
\end{figure}

\newpage
\section{The broader role of natural resources}
\label{OA:resources}

In Section~\ref{subsec:bakken_case} we argued that some counties attain high absolute upward mobility not through greater intergenerational mobility as such, but through local growth shocks that move a county upward relative to the rest of the nation, and that such shocks can carry real welfare consequences through our social-comparison mechanism. The Bakken oil boom -- a natural-resource boom that raised measured mobility sharply while creating salient local winners and losers -- is our leading illustration. This appendix places that case study in a broader context. We document county-level measures of the intensity of the two main natural resource sectors: agriculture and mining (which includes oil and gas extraction). And we describe how their geography relates to that of social mobility. The exercise is descriptive and aims to show that the kind of resource-based local economy emphasized in the Bakken case overlaps with the high-mobility regions that motivate \Cref{subsec:bakken_case}.

\subsection{Intensity of the natural resource sector}

We measure natural resource intensity using earnings by industry from the Bureau of Economic Analysis (BEA) Regional Economic Accounts (the personal income accounts). For county $c$, natural resource intensity is the agriculture and mining sectors' share of total local earnings,
\[
  \text{Resource Intensity}_{c} \;=\; \frac{\text{Earnings}_{\text{agriculture},c} + \text{Earnings}_{\text{mining},c}}{\text{Total Earnings}_{c}},
\]
where $\text{Total Earnings}_{c}$ is total earnings (across all sectors). 
We average each share over 2001 to 2015.

Our earnings measure contains the labor and proprietors' income that accrues to those working in the sector within the county. This is a natural measure of local sectoral importance: it is the income that enters residents' livelihoods, and hence the comparisons at the center of our mechanism. It is also likely more informative than a simple employment share, because it reflects how well remunerated the activity is rather than just a headcount -- a distinction that matters for a high-wage sector such as oil and gas, whose local weight an employment count would understate.
\footnote{Because the BEA withholds earnings for cells that would disclose individual establishments, the mining-earnings share is missing for some counties; these appear in light gray.}

\subsection{Geographic distribution}
 
\Cref{fig:primary_map} maps this measure across US counties, and \Cref{fig:mobility_map} maps our measure of social mobility \citep{chetty2018impacts, chetty2014land} on a common projection.
 
Primary-sector intensity is greatest in two broad areas. The first is the \emph{Great Plains}, where a broad and largely contiguous band of counties running from the Dakotas south through Nebraska and Kansas derives a substantial share of local earnings from agriculture. Corn is the dominant crop in these areas -- which therefore benefited substantially from the 2005 ethanol boom triggered by the Energy Policy Act of 2005. The act established the Renewable Fuel Standard (RFS), requiring the blending of 7.5 billion gallons of renewable fuel into the nation's gasoline supply by 2012, sparking a massive increase in the demand for corn. The second is the country's main oil- and gas-producing basins, most prominently the Permian Basin of western Texas and southeastern New Mexico and the Bakken formation of western North Dakota and eastern Montana. 
 
Comparing \Cref{fig:primary_map} with the mobility map in \Cref{fig:mobility_map} suggests spatial overlap between primary-sector intensity and social mobility. Much of the high-mobility region lies across the northern Great Plains, where both agricultural and -- in the Bakken sub-region -- mining intensity are high. This is consistent with the idea that resource-based local economies are disproportionately represented among high-mobility counties.
 
\begin{figure}[t]
\centering
\includegraphics[width=0.82\textwidth]{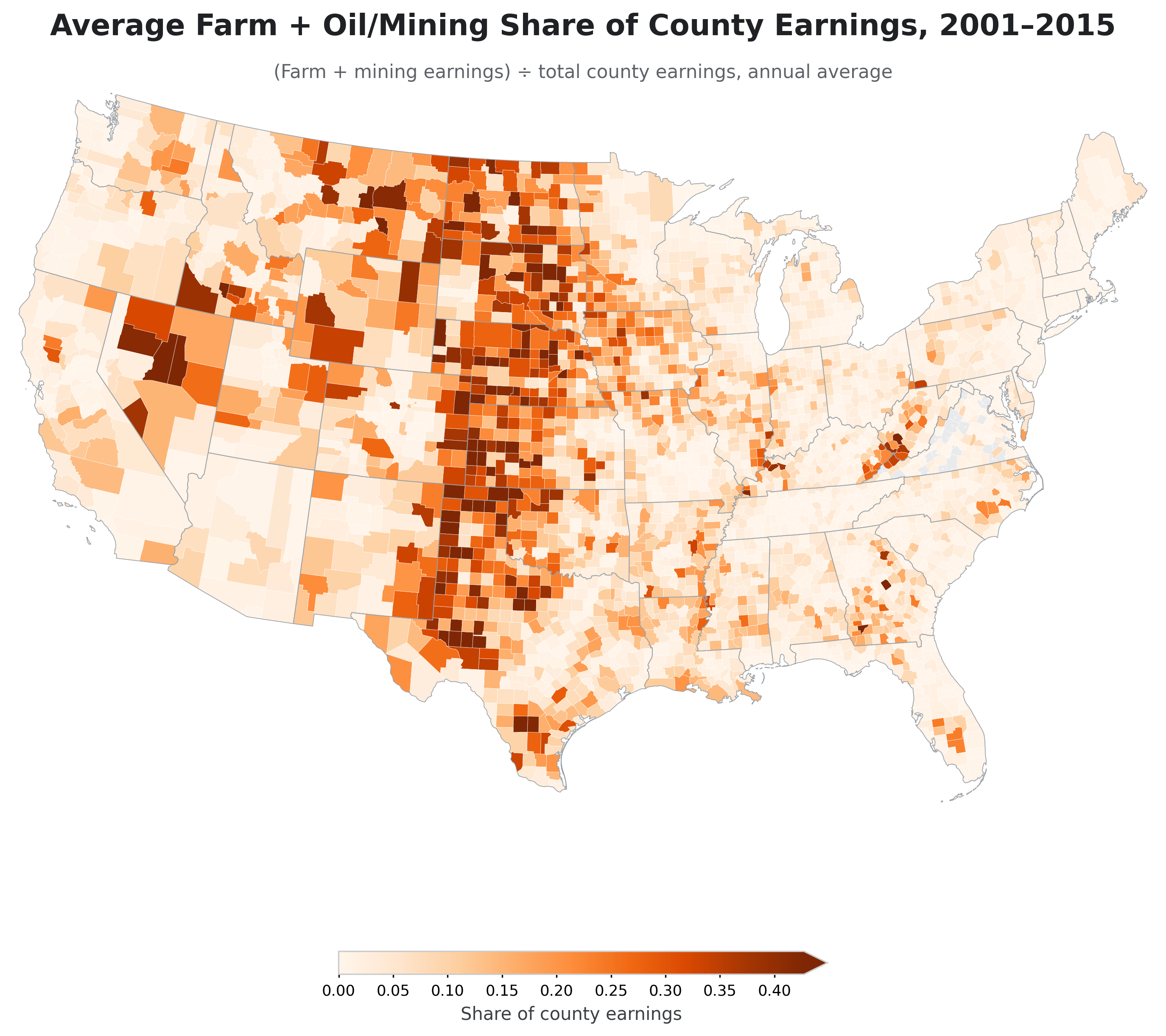}
\caption{County-level total primary-sector intensity. Counties are shaded by the combined earnings of agriculture (the BEA farm sector) and mining, quarrying, and oil and gas extraction, expressed as a share of total county earnings and averaged over 2001--2015. Darker shades denote higher shares. Shading is winsorized at the 99th percentile. Counties in light gray have missing data. The map covers the contiguous United States in an Albers equal-area projection. Source: BEA Regional Economic Accounts, earnings by
industry.}\label{fig:primary_map}
\end{figure}
 
\begin{figure}[t]
\centering
\includegraphics[width=0.82\textwidth]{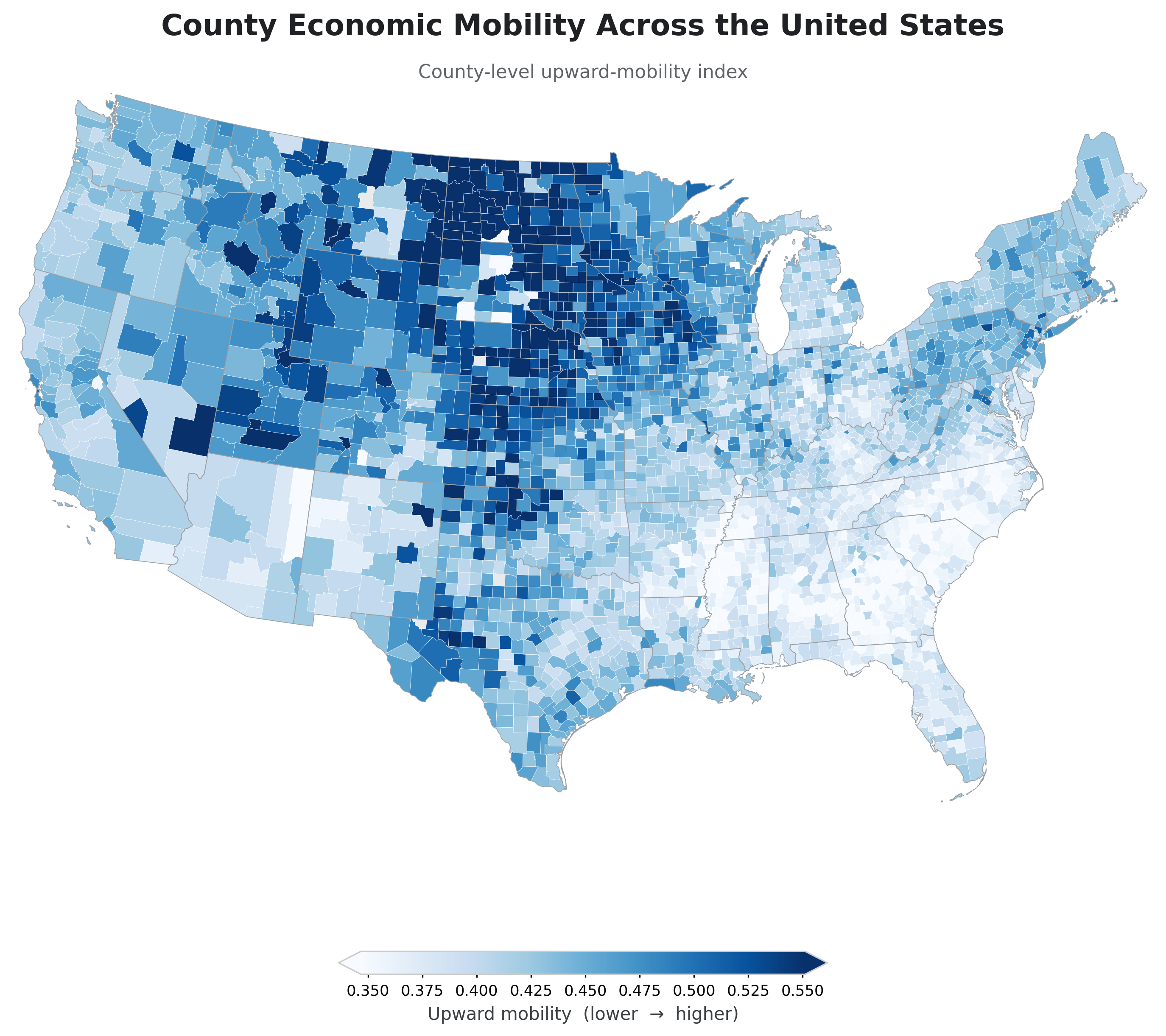}
\caption{County-level social mobility. Counties are shaded by absolute upward mobility from the Opportunity Atlas \citep{chetty2014land,chetty2018impacts} -- the predicted adult income rank of children born to parents at the 25th percentile of the national income distribution -- with darker shades denoting higher mobility. Counties in light gray have missing data. The map covers the contiguous United States in the same projection as \Cref{fig:primary_map}.}\label{fig:mobility_map}
\end{figure}

\end{document}